\documentclass{article}[11pt,twoside]
\usepackage{latexsym}
\usepackage[applemac]{inputenc}
\usepackage{amsfonts}
\usepackage{amssymb}
\usepackage{amstext}
\newcounter{example}
\usepackage{theorem}
\usepackage{pifont}
\usepackage{float}
\usepackage{centernot}
\usepackage{tabularx} 
\usepackage{hyperref}
\usepackage{tabularx} 
\usepackage{booktabs}
\usepackage{varwidth}
\usepackage{mathtools}
\usepackage{fancyhdr}
\usepackage{epigraph}
\newcommand{\Dom}{\mathit{Dom}}

\newcommand{\Coord}{{\cal C}\mathit{oord}}

\newcommand{\qedblack}{\hfill$\blacksquare$}

\newtheorem{claim}{Claim}

\newcommand{\qed}{\hfill$\blacksquare$}

\def\upmapsto{\mathord{\ooalign
    {\hfil\raise .25ex\hbox{%
$\uparrow$}\hfil\crcr
     \hfil\lower .55ex\hbox{$\scriptstyle -$}\hfil }}}
\def\udownmapsto{\mathord{\ooalign
    {\hfil\raise .25ex\hbox{%
$\downarrow$}\hfil\crcr
     \hfil\lower .55ex\hbox{$\scriptstyle -$}\hfil }}}
\newcommand{\inc}{\mathit{inc}}      % Monotonía: Creciente (Increasing)
\newcommand{\inj}{\mathit{inj}}      % Inyectividad (Injective)
\newcommand{\surj}{\mathit{surj}}    % Exhaustividad (Surjective)
\newcommand{\tot}{\mathit{tot}}    % totalidad (total)
\newcommand{\con}{\mathit{con}}    % constancia (contant)
\newcommand{\ntot}{\mathit{non\hbox{-}tot}}    % no totalidad (non-total)
\newcommand{\Inc}{\mathit{Inc}}      
\newcommand{\Dec}{\mathit{Dec}}      
\newcommand{\Inj}{\mathit{Inj}}      
\newcommand{\Surj}{\mathit{Surj}}   
\newcommand{\Tot}{\mathit{Tot}}    
\newcommand{\Con}{\mathit{Con}}

\newcommand{\Ntot}{\mathit{Non\hbox{-}Tot}}

\newtheorem{theorem}{\textbf Theorem}[section]
\newtheorem{definition}[theorem]{\textbf Definition}
\newtheorem{lemma}[theorem]{\textbf Lemma}

\newtheorem{corollary}[theorem]{\bf  Corollary}
\newenvironment{proof}{\par\noindent\textbf{Proof}:\newline}{\par}
\newtheorem{remark}{\bf  Remark}[section]
\newtheorem{example}{\bf  Example}[section]

\newtheorem{observation}{Observation}

\def\<={\trianglelefteq}

\newcommand{\PRE}{\mathsf{PRE}}
\newcommand{\sPRE}{\mathsf{sPRE}}
\newcommand{\PO}{\mathsf{PO}}

\newcommand{\LO}{\mathsf{LO}}
\newcommand{\sLO}{\mathsf{sLO}}
\newcommand{\Or}{\mathsf{O}}
\usepackage{tikz}
\usetikzlibrary{shapes,arrows,automata}

\usepackage{tikz}
\usetikzlibrary{positioning,arrows,calc,shapes,babel} 
\tikzset{
modal/.style={>=stealth',shorten >=1pt,shorten <=1pt,auto,node distance=1.5cm,
semithick},
world/.style={circle,draw,minimum size=0.5cm,fill=gray!15},
point/.style={circle,draw,inner sep=0.5mm,fill=black},
reflexive above/.style={->,loop,looseness=7,in=120,out=60},
reflexive below/.style={->,loop,looseness=7,in=240,out=300},
reflexive left/.style={->,loop,looseness=7,in=150,out=210},
reflexive right/.style={->,loop,looseness=7,in=30,out=330}
}

\begin{document}

\pagestyle{fancy}
\fancyhead{} 
\fancyhead[R]{\thepage} 
\fancyfoot{} 

\title{Definability of Functional Properties in the Basic Modal-Temporal Language over Ordered Frames}

\author{Alfredo Burrieza}
\date{University of Malaga\\
burrieza@uma.es}
\maketitle \thispagestyle{empty}

\begin{abstract}
We study the expressive power of the simplest modal-temporal language, obtained by adding Prior's temporal operators $G$ and $H$ to the basic modal language with $\Box$. This language is the standard bimodal combination of modal and tense logic; under its functional interpretation it is denoted $L_{T\times W}$ in the literature. To analyse its definability across five order types, we consider two semantic readings of the temporal operators: the standard reading ($G,H$), which includes the current instant, and the strict reading ($G^{\ast},H^{\ast}$), which always excludes it.
We examine nine functional properties---totality, non-totality, injectivity, surjectivity, monotonicity, strict monotonicity, antitonicity, strict antitonicity, and constancy---over preorders, strict preorders, partial orders, linear orders, and strict linear orders.
Our analysis reveals two different levels of expressive power. In the original multiflow setting (where multiple functions coexist and the modal operators quantify indiscriminately over their images), the language is quite weak; the two readings of $G,H$ coincide. When we restrict the semantics to minimal functional frames (the $O^{2}$ family), many properties become definable, and the choice of reading becomes crucial: the strict reading can define properties such as injectivity even in reflexive orders.
The same definability patterns appear with indexed languages and with the Uniform Domain (U-Dom) condition on the semantics of $L_{T\times W}$. That three such different ways of controlling functional multiplicity lead to identical definability patterns indicates that the expressive limitations of the original framework come from the uncontrolled multiplicity of functions, not from any weakness of the operators.
Even after controlling functional multiplicity, a set of properties remains undefinable in all non-linear orders, showing that the lack of connectivity is a fundamental obstacle.
\end{abstract}

\section{Introduction}\label{sec:introduction}

We study the definability of functional properties in the simplest modal-temporal language, obtained by adding the temporal operators $G$ and $H$ to the basic modal language with $\Box$. This language is the standard bimodal combination of modal and tense logic (see e.g. \cite{Thomason1984}); in \cite{Burrieza2002conf} it was given a functional interpretation and denoted $L_{T\times W}$.

Throughout the paper we consider two semantic interpretations of the temporal operators $G,H$: the \emph{standard interpretation}, which follows the original semantics (including the current instant when a reflexive loop is present), and the \emph{strict interpretation}, which always excludes the current instant. For clarity, we keep the notation $G,H$ for the standard interpretation, and denote the strict interpretation by $G^{\ast},H^{\ast}$. This notational convention allows us to compare both interpretations without mixing them within the same formula.

In \cite{Burrieza2002conf, Burrieza2003}, the study of functional properties using $L_{T\times W}$ was restricted to strict linear orders and global constraints---such as totality or the Uniform Domain (U-Dom) property. Two questions remained open: whether these properties are definable in isolation, and how their definability varies across a wider range of order structures

The present study addresses these questions by providing a systematic account of the expressive limits of $L_{T\times W}$ across a wide range of order structures, including (strict) linear orders, partial orders, and (strict) preorders. We examine these functional properties independently of the global conditions previously required, such as totality or the Uniform Domain (U-Dom) property, to determine if they are definable in isolation. The functional properties under study are: totality, non-totality, injectivity, surjectivity, monotonicity, strict monotonicity, antitonicity, strict antitonicity, and constancy.

Rather than a mere descriptive map, our approach probes the expressive boundaries of the language to identify how its inherent limitations can be mitigated. To this end, we investigate to what extent the definability range can be expanded through semantic modifications and structural restrictions, without altering the syntax itself.

Our analysis proceeds in two stages. First, we establish the definability limits of $L_{T\times W}$ in the so called \emph{arbitrary multiflow ordered setting.} In this environment, the modal operators function as second-order quantifiers over the images of all functions simultaneously. To investigate this, we maintain a single $L_{T\times W}$ language with its temporal operators $G,H$, but consider two different semantic interpretations: the standard interpretation, for which we keep the notation $G,H$, and the strict interpretation, which we denote by $G^{\ast}, H^{\ast}$ for notational convenience\footnote{The strict interpretation was the only one employed in \cite{Burrieza2002conf, Burrieza2003}.}. This notational convention allows us to compare both interpretations without mixing them within the same formula and demonstrate why they both fail within the multiflow architecture.

To overcome this without altering the syntax, we introduce a semantic refinement: \emph{minimal functional frames} (the $O^2$ family). By restricting each frame to a single accessibility function, we simplify the semantic landscape, allowing the second-order complexity of the multiflow to collapse into a first-order relational reading. Under this restriction, the advantage of the strict interpretation ($G^\ast, H^\ast$) over the standard one becomes evident, as it enables $L_{T\times W}$ to achieve the same definability patterns as more complex indexed languages \cite{Burrieza2009}.

Interestingly, the same definability patterns are obtained when we impose the Uniform Domain (U-Dom) condition on the semantics of $L_{T\times W}$, without abandoning the multiflow architecture. Indexed languages achieve the same result by syntactic means (explicit labels for functions), while U-Dom does it by domain uniformity. Minimal frames, which are a special case of U-Dom, achieve it by structural simplification. That three such different ways of controlling functional multiplicity lead to identical definability patterns shows that the expressive limitations of the original multiflow semantics arise from the indiscriminate quantification over many functions, not from any weakness of the operators of $L_{T\times W}$.

Even after eliminating the interference between functions---whether by structural simplification, domain uniformity, or syntactic labels---a set of properties remains undefinable in all non-linear orders. This reveals that the lack of connectivity is a fundamental obstacle that persists regardless of how the original multiflow semantics is refined.

Methodologically, we establish undefinability for most properties by providing surjective p-morphisms that fail to preserve them. This method is conceptually rooted in the classical preservation theorems of modal logic \cite{GoldblattThomason1974}, which show that surjective p-morphisms preserve validity and can therefore be used to prove undefinability. However, the functional semantics of $L_{T\times W}$ requires extending those results to a non-Kripkean setting, where frames are equipped with both temporal relations and families of functions. Within this framework, we identify a significant exception: in the specific context of \emph{minimal functional frames}, constancy remains invariant under such mappings. Consequently, we employ a distinct semantic equivalence argument to show that $L_{T\times W}$ cannot distinguish between these functional behaviors.

The paper is organised as follows. Section~\ref{sec:functional-frames} introduces the language $L_{T\times W}$, its two temporal interpretations, and the algebraic semantics. Section~\ref{sec:ordered-multiflow} analyses definability in the ordered multiflow setting. Section~\ref{sec:minimal} examines the gains obtained with minimal frames and compares the two temporal interpretations. Section~\ref{sec:conclusions} concludes and outlines directions for future work. Finally, detailed proofs of the results are deferred to the Appendix.

\section{The Language $L_{T\times W}$: Syntax, Semantics and Algebraic Characterisations}\label{sec:functional-frames}

\subsection{Notation and Order-Theoretic Preliminaries}\label{sec:preliminaries}

In what follows, all domains and structures are assumed to be nonempty. 
For a binary relation $R$ on a set $A$, we write 
\[
R(a) \;:=\; \{ a' \in A \mid (a,a') \in R \},\qquad
R^{-1}(a) \;:=\; \{ a' \in A \mid (a',a) \in R \}
\]
for the sets of $R$-images and $R$-preimages of $a$, respectively. 
To handle operators that exclude the evaluation point independently of the properties of $R$ (such as reflexivity), we write
\[
R^{\bullet}(a) := R(a) \setminus \{a\}
\] 
for each $a \in A$. This notation is local: it removes only the point $a$ itself from its own set of successors, leaving all other points unaffected.
\medskip

When $R$ is an order, we adopt standard order-theoretic notation. Let $\le$ be a binary relation on a set $X$; we call it an order when it is at least a \emph{preorder} (reflexive and transitive). If $\le$ is also antisymmetric, it is a \emph{partial order}; if moreover it is total ($x \le y$ or $y \le x$ for all $x, y \in X$), it is a \emph{linear order}. We denote by $<$ the strict part of $\le$, defined by $x < y$ iff $x \le y$ and $x \neq y$. Dually, any irreflexive and transitive relation $<$ is a \emph{strict preorder}, which becomes a \emph{strict partial order} if it is asymmetric, and a \emph{strict linear order} if it satisfies trichotomy.

For any $a \in A$ and $X \subseteq A$, we employ the following operators:
\[
a\!\uparrow \;:=\; \{a' \mid a \leq a'\},\qquad 
a\!\uparrow^\ast \;:=\; \{a' \mid a < a'\}
\]
\[
X\!\uparrow \;:=\; \bigcup_{a \in X} a\!\uparrow,\qquad 
X\!\uparrow^\ast \;:=\; \bigcup_{a \in X} a\!\uparrow^\ast
\]
with $a\!\downarrow$ and $a\!\downarrow^\ast$ defined dually ($a\!\downarrow = \{a' \mid a' \leq a\}$, $a\!\downarrow^\ast = \{a' \mid a' < a\}$).  Note that $a\!\uparrow = a\!\uparrow^\ast \cup\, \{a\}$. By convention, $\varnothing\!\uparrow = \varnothing\!\downarrow = \varnothing\!\uparrow^\ast = \varnothing\!\downarrow^\ast = \varnothing$. 
Furthermore, when $R$ is an order $\leq$, we have $R(a) = a\!\uparrow$ and $R^{\bullet}(a) = a\!\uparrow^\ast$.  In linear orders, these operators coincide with standard interval notation, e.g., $a\!\uparrow = [a, \rightarrow)$ and $a\!\uparrow^\ast = (a, \rightarrow)$. 

\medskip

For a partial function $f\colon A \rightharpoonup B$, we assume the standard definitions for \emph{totality}, \emph{injectivity}, \emph{surjectivity}, and \emph{constancy}. Regarding ordered structures, we refer to increasing and decreasing functions as \emph{monotonicity} and \emph{antitonicity}, respectively, including their \emph{strict} variants.

\medskip
We use function names (e.g., \emph{increasing}) for classes of frames and in the text when referring to the behaviour of individual functions. Property names (e.g., \emph{monotonicity}) are reserved for the logical notion of definability. Formula labels (e.g., $(Inc)$) reflect the function name for consistency.

\subsection{Syntax}

Formulas of $L_{T\times W}$ are generated by the grammar
\[
A ::= \bot \mid p \mid (A \to A) \mid \square A \mid G A \mid H A,
\]
where $p \in \mathcal{V}$. The connectives $\top, \land, \lor, \leftrightarrow, \lozenge, F, P$ are introduced by standard definitions.

Table~\ref{tab:intuitive} summarises the intended readings of the modal and temporal connectives. These readings correspond to the standard interpretation of the operators; the strict interpretation will be introduced semantically in Section~\ref{suc-models}.

\bigskip

\begin{tabularx}{\linewidth}{|c|X|X|}
\hline
\textbf{Conn.} & \textbf{Informal reading} & \textbf{Meaning} \\
\hline
$G$ & Always in the future & $A$ holds at every future moment. \\
\hline
$H$ & Always in the past & $A$ holds at every past moment. \\
\hline
$F$ & Sometime in the future & There exists a future moment where $A$ holds. \\
\hline
$P$ & Sometime in the past & There exists a past moment where $A$ holds. \\
\hline
$\square$ & In all functional images & Every function applicable to the current state has an image satisfying $A$. \\
\hline
$\lozenge$ & In some functional images & There exists a function applicable to the current state whose image satisfies $A$. \\
\hline
\end{tabularx}\label{tab:intuitive}

\subsection{Semantics}\label{subsec-semantics}

\begin{definition}\label{definition1}
A \emph{general functional frame} for $L_{T\times W}$ (or simply a \emph{functional frame})  is a tuple $\Sigma = (W, \mathcal{T}, \mathcal{F})$ where:

\begin{enumerate}
\item $W$ is a nonempty set (of labels).

\item $\mathcal{T} = \{(T_w, R_w) \mid w \in W\}$ is a family such that:
\begin{itemize}
\item[(a)] $T_w$ is a nonempty set for each $w \in W$,
\item[(b)] $T_w \cap T_{w'} = \emptyset$ whenever $w \neq w'$,
\item[(c)] $R_w$ is a binary relation on $T_w$.
\end{itemize}

\item $\mathcal{F}$ is a family of partial functions (\emph{accessibility functions}) such that:
\begin{itemize}
\item[(a)] Each $f_{ww'} \in \mathcal{F}$ is a partial function $f_{ww'}\colon T_w \rightharpoonup T_{w'}$,
\item[(b)] Each $f_{ww'}$ has nonempty domain,
\item[(c)] For each ordered pair $(w, w')$, there is at most one $f_{ww'}$ in $\mathcal{F}$.
\end{itemize}
\end{enumerate}

For each $w \in W$, set $\mathcal{F}_w := \{f_{ww'} \in \mathcal{F} \mid w' \in W\}$, 
so that $\mathcal{F} = \bigcup_{w \in W} \mathcal{F}_w$.
\end{definition}
\noindent
Henceforth, we will simply say ``functional frame'' to refer to this notion, unless a distinction with the more specific formulations in~\cite{Burrieza2002conf} or~\cite{Burrieza2003} is required. The term ``general'' in Definition~\ref{definition1} emphasizes that, at this stage, the relations $R_w$ are arbitrary and not yet restricted to any particular class of orders.

\medskip
\noindent
Note that the definition does not require $\mathcal{F}$ to be closed under composition.

\begin{definition}\label{coordenadas}
Let $\Sigma = (W, \mathcal{T}, \mathcal{F})$ be a functional frame.
\begin{enumerate}
\item Define $\Dom(\mathcal{F}_w) := \bigcup_{f \in \mathcal{F}_w} \Dom(f)$. 

\item The set of \emph{coordinates} of $\Sigma$ is the disjoint union
\[
\Coord_{\Sigma} := \biguplus_{w \in W} T_w.
\]
Convention: an element $t \in T_w$ is identified with the pair $(t,w)$; we write $t_w$ to denote this labelled copy.
\end{enumerate}
\end{definition}

\begin{remark}\label{remark1}
The definition yields the following properties:
\begin{enumerate}
\item[(i)] For any $w \in W$ and any $X \subseteq T_w$, the images $f_{ww'}(X)$ with distinct targets $w'$ are pairwise disjoint. Consequently, the \emph{aggregate image}
\[
\mathcal{F}_w(X) := \bigcup_{f_{ww'}\in \mathcal{F}_w} f_{ww'}(X)
\]
is a disjoint union.

\item[(ii)] The relations $R_w$ are arbitrary; they may be any binary relation. In later sections we will consider frames where every $R_w$ belongs to a fixed class (e.g., linear orders, partial orders, etc.).
\end{enumerate}
\end{remark}

We use the term ``arrow-style'' to refer to inclusions comparing the image of an order segment, such as $a\!\uparrow$, $a\!\uparrow^{*}$, or interval notation, with another segment of the corresponding target order. These conditions take the schematic form
\[
\text{(per-function)}\quad f_{ww'}(X) \subseteq Y,
\qquad\text{or}\qquad
\text{(class-aggregate)}\quad \mathcal{F}_w(X) \subseteq Y,
\]
where $X$ and $Y$ are sets generated by the underlying order. The two forms correspond to two distinct readings: the per-function reading applies the inclusion separately to each accessibility function $f_{ww'}$, while the class-aggregate reading compares instead the combined image $\mathcal{F}_w(X)$ of all functions with source $w$. We will specify which reading is intended whenever this affects the results.

\subsection{Per-function characterisations}

The following theorems establish the algebraic foundation for our definability results by relating functional properties to specific image-set inclusions. While the characterisations for linear orders (Theorem~\ref{fundamentos}) extend the results in \cite{Burrieza2003}, their generalisation to preorders and posets (Theorem~\ref{local}) follows the same logic of order preservation. In both cases, the proofs are straightforward verifications of set-theoretic inclusions under the $\uparrow$ and $\downarrow$ operators and are thus omitted.

\begin{theorem}[Linear orders]\label{fundamentos} 
Let $(A,\leq_A)$ and $(B,\leq_B)$ be nonempty linearly ordered sets, and let 
$f \colon A \rightharpoonup B$ be a partial function with nonempty domain. Then:
\begin{enumerate}
\item $f$ is total iff for all $a\in A$,
\[
   f(a\!\downarrow^\ast) \cup f(a\!\uparrow^\ast)
   \subseteq
   f(\{a\})\!\downarrow^\ast \cup f(\{a\})\!\uparrow.
\]
\item $f$ is injective iff for all $a\in\Dom(f)$,
\[
   f(a\!\downarrow^\ast) \cup f(a\!\uparrow^\ast)
   \subseteq
   f(a)\!\downarrow^\ast \cup f(a)\!\uparrow^\ast.
\]
\item $f$ is surjective iff for all $a\in A$,
\[
   f(\{a\})\!\downarrow^\ast \cup f(\{a\})\!\uparrow^\ast
   \subseteq
   f(a\!\downarrow^\ast) \cup f(a\!\uparrow^\ast).
\]
\item $f$ is increasing (resp. decreasing) iff for all $a\in\Dom(f)$,
\[
   f(a\!\uparrow^\ast)\subseteq f(a)\!\uparrow
   \qquad
   (\text{resp. } f(a\!\uparrow^\ast)\subseteq f(a)\!\downarrow).
\]
\item $f$ is strictly increasing (resp. strictly decreasing) iff for all $a\in\Dom(f)$,
\[
   f(a\!\uparrow^\ast)\subseteq f(a)\!\uparrow^\ast
   \qquad
   (\text{resp. } f(a\!\uparrow^\ast)\subseteq f(a)\!\downarrow^\ast).
\]
\item $f$ is constant iff for all $a\in\Dom(f)$,
\[
   f(a\!\downarrow^\ast) \cup f(a\!\uparrow^\ast)
   \subseteq
   \{f(a)\}.
\]
\end{enumerate}
\end{theorem}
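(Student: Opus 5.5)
The plan is to reduce every clause to an elementary statement about $f$ using two facts about linear orders. By trichotomy, for every $a\in A$ we have $a\!\downarrow^\ast\cup a\!\uparrow^\ast = A\setminus\{a\}$, and likewise for $b\in B$ we have $b\!\downarrow^\ast\cup b\!\uparrow^\ast = B\setminus\{b\}$ and $b\!\downarrow^\ast\cup b\!\uparrow = B$. Images commute with unions, so $f(a\!\downarrow^\ast)\cup f(a\!\uparrow^\ast) = f(A\setminus\{a\})$. Also $f(\{a\})$ is $\{f(a)\}$ if $a\in\Dom(f)$ and $\varnothing$ otherwise, and $\varnothing\!\uparrow=\varnothing\!\downarrow^\ast=\varnothing$ by the stated convention. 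With these identities each inclusion becomes a condition of the form ``$f(A\setminus\{a\})$ versus $B$, $B\setminus\{f(a)\}$, or $\{f(a)\}$''. Each clause is then checked in both directions.

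\emph{Clause 1 (totality).} If $a\in\Dom(f)$, the right-hand side is all of $B$, so the inclusion holds trivially. If $a\notin\Dom(f)$, the right-hand side is empty, so the inclusion says $f(A\setminus\{a\})=\varnothing$. This contradicts $\Dom(f)\neq\varnothing$, since any point of the domain differs from $a$. Hence the condition holds for all $a$ iff no $a$ lies outside $\Dom(f)$, i.e.\ iff $f$ is total.

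\emph{Clause 2 (injectivity).} The condition reads $f(A\setminus\{a\})\subseteq B\setminus\{f(a)\}$ for $a\in\Dom(f)$, that is, no $a'\neq a$ has $f(a')=f(a)$. This is exactly injectivity.

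\emph{Clause 3 (surjectivity).} This clause needs the most care because of the case split on the domain. For $a\notin\Dom(f)$ the left-hand side is empty, so the condition is vacuous. For $a\in\Dom(f)$ it reads $B\setminus\{f(a)\}\subseteq f(A\setminus\{a\})$. Since $\Dom(f)$ is nonempty, fix one such $a$; then $B\setminus\{f(a)\}$ lies in the range of $f$, and $f(a)$ does too, so $f$ is surjective. Conversely, if $f$ is surjective and $b\neq f(a)$, then any preimage $a'$ of $b$ satisfies $a'\neq a$, so $b\in f(A\setminus\{a\})$. No injectivity is needed here.

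\emph{Clauses 4 and 5 (monotone variants).} Here only $a\!\uparrow^\ast$ appears, so no trichotomy is needed. The condition $f(a\!\uparrow^\ast)\subseteq f(a)\!\uparrow$ says that for all $a'\in\Dom(f)$ with $a<a'$ we have $f(a)\le f(a')$, which is the definition of increasing on the domain. The strict and decreasing variants are obtained by replacing $\uparrow$ with $\uparrow^\ast$, $\downarrow$ or $\downarrow^\ast$.

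\emph{Clause 6 (constancy).} The condition reads $f(A\setminus\{a\})\subseteq\{f(a)\}$ for each $a\in\Dom(f)$, i.e.\ every defined value equals $f(a)$. This is constancy on the domain.

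I expect no real obstacle. The only delicate points are the empty-image cases in clauses 1 and 3, where nonemptiness of $\Dom(f)$ is used essentially, and the observation in clause 3 that surjectivity alone yields preimages different from $a$.
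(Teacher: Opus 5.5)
Your proof is correct and follows essentially the route the paper intends. The paper omits the argument as a straightforward verification of set inclusions under $\uparrow$ and $\downarrow$, and your reduction via trichotomy to conditions on $f(A\setminus\{a\})$, with the case split on $a\in\Dom(f)$ in clauses 1 and 3, is exactly that verification. Your observation that clauses 4 and 5 need no linearity is also consistent with the paper, which restates them for preorders and posets in Theorem~\ref{local}.
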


\begin{remark}
The following observations apply to Theorem~\ref{fundamentos}:
\begin{enumerate}
    \item[(i)] Although \emph{non-totality} is among the properties investigated, it is not treated as an independent algebraic notion. Its characterisation is directly tied to the negation of the condition for totality; therefore, no separate algebraic clause is required.
   \item[(ii)] The equivalences remain valid in both non-strict and strict linear orders. On the other hand, the interval notation used in the original formulations (see \cite{Burrieza2003})---e.g., $[f(a), \to)$ and $(f(a), \to)$---adapts automatically to the strict or non-strict nature of the order; here we have opted for a uniform notation with $\uparrow$, $\downarrow$, etc., which already captures the distinction.
\end{enumerate}
\end{remark}

\begin{theorem}[Preorders and posets]\label{local}
Let $(A,\leq_A)$ and $(B,\leq_B)$ be preorders or posets,
and let $f \colon A \rightharpoonup B$ be a partial function with nonempty domain. Then:
\begin{enumerate}
\item $f$ is increasing (resp. decreasing) iff for all $a\in\Dom(f)$,
\[
   f(a\!\uparrow^\ast)\subseteq f(a)\!\uparrow 
   \qquad (\text{resp. } f(a\!\uparrow^\ast)\subseteq f(a)\!\downarrow).
\]
\item $f$ is strictly increasing (resp. strictly decreasing) iff for all $a\in\Dom(f)$,
\[
   f(a\!\uparrow^\ast)\subseteq f(a)\!\uparrow^\ast
   \qquad (\text{resp. } f(a\!\uparrow^\ast)\subseteq f(a)\!\downarrow^\ast).
\]
\end{enumerate}
\end{theorem}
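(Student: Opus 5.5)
The plan is to unfold the definitions and check the two inclusions directly, as the paper says these are routine. Throughout, $f(X)$ means $\{f(x)\mid x\in X\cap\Dom(f)\}$. $f$ is \emph{increasing} iff for all $a,a'\in\Dom(f)$, $a\le_A a'$ implies $f(a)\le_B f(a')$. $f$ is \emph{strictly increasing} iff $a<_A a'$ implies $f(a)<_B f(a')$, where $<$ is the strict part of $\le$ (so $x<y$ iff $x\le y$ and $x\neq y$). The decreasing variants are obtained by reversing the order in $B$.

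\textbf{Clause 1, left to right.} Assume $f$ is increasing and fix $a\in\Dom(f)$. Any element of $f(a\!\uparrow^\ast)$ has the form $f(a')$ with $a'\in\Dom(f)$ and $a<_A a'$. In particular $a\le_A a'$, so $f(a)\le_B f(a')$, that is, $f(a')\in f(a)\!\uparrow$.

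\textbf{Clause 1, right to left.} Let $a\le_A a'$ with both points in $\Dom(f)$, and split into two cases.
\begin{itemize}
\item If $a'\neq a$, then $a'\in a\!\uparrow^\ast$, and the inclusion gives $f(a')\in f(a)\!\uparrow$, i.e.\ $f(a)\le_B f(a')$.
\item If $a'=a$, we need $f(a)\le_B f(a)$. This holds by reflexivity of $\le_B$ in the preorder/poset case. In a strict reading (if $\le$ is irreflexive), the premise $a\le a$ never occurs, so the case is vacuous.
\end{itemize}
The dual statement is proved by replacing $f(a)\!\uparrow$ with $f(a)\!\downarrow$.

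\textbf{Clause 2.} The argument is the same with strict parts throughout. If $f$ is strictly increasing and $a<_A a'$, then $f(a)<_B f(a')$, which means $f(a')\in f(a)\!\uparrow^\ast$. Conversely, if $a<_A a'$ then $a'\in a\!\uparrow^\ast$, so the inclusion yields $f(a')\in f(a)\!\uparrow^\ast$, i.e.\ $f(a)<_B f(a')$. No case split is needed here, because $a\!\uparrow^\ast$ is exactly the set of strict successors.

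\textbf{Where care is needed.} There is no real mathematical obstacle; the only delicate points are bookkeeping.
\begin{itemize}
\item The inclusion is quantified only over $a\in\Dom(f)$, and the image operator silently discards points of $a\!\uparrow^\ast$ outside $\Dom(f)$. This is exactly why the characterisation concerns monotonicity on the domain and does not force totality.
\item In a preorder without antisymmetry, $<$ must be read as ``$\le$ and $\neq$'', as the paper stipulates, and not as ``$\le$ and $\not\ge$''. With this reading, $a\!\uparrow=a\!\uparrow^\ast\cup\{a\}$ holds, which is what makes the case split in Clause 1 exhaustive.
\end{itemize}
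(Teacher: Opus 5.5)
Your proof is correct and is the routine unfolding of definitions that the paper has in mind when it calls these characterisations straightforward set-theoretic verifications and omits the proof. In Clause 1 your case split $a'=a$ versus $a'\neq a$ rests on $a\!\uparrow = a\!\uparrow^\ast\cup\{a\}$ (valid under the paper's reading of $<$ as ``$\le$ and $\neq$'') and is exactly where reflexivity of $\le_B$ enters; your aside about an irreflexive reading is unnecessary, since the statement only concerns preorders and posets.
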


The following theorem generalises the per-function characterisations for totality and surjectivity (items 1 and 3 of Theorem~\ref{fundamentos}) to the multiflow setting. As will be seen, a similar generalisation is not possible for the remaining properties.

\begin{theorem}[Class-aggregate in linear orders]\label{caracterizafunciones}
Let $\Sigma = (W, \mathcal{T}, \mathcal{F})$ be a functional frame in which 
each $(T_w, \leq_w)\in\mathcal{T}$ is linearly ordered. Then:
\begin{enumerate}
  \item $\mathcal{F}$ is a class of total functions iff for all $t_w\in\Coord_\Sigma$,
\[
     \mathcal{F}_w(t_w\!\downarrow^\ast)\cup \mathcal{F}_w(t_w\!\uparrow^\ast)
     \subseteq
     \mathcal{F}_w(\{t_w\})\!\downarrow^\ast \cup \mathcal{F}_w(\{t_w\})\!\uparrow.
  \]
  \item $\mathcal{F}$ is a class of surjective functions iff for all $t_w\in\Coord_\Sigma$,
\[
     \mathcal{F}_w(\{t_w\})\!\downarrow^\ast \cup \mathcal{F}_w(\{t_w\})\!\uparrow^\ast
     \subseteq
     \mathcal{F}_w(t_w\!\downarrow^\ast)\cup \mathcal{F}_w(t_w\!\uparrow^\ast).
  \]
\end{enumerate}
\end{theorem}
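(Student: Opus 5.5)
The plan is to reduce both items to the per-function clauses 1 and 3 of Theorem~\ref{fundamentos}. The reduction works by splitting every set in the two inclusions into its components in the pairwise disjoint target flows $T_{w'}$.

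\textbf{Order on coordinates.} First I fix how $\uparrow,\downarrow,\uparrow^\ast,\downarrow^\ast$ act on subsets of $\Coord_\Sigma$. The only order available on coordinates is the disjoint union of the relations $\leq_{w'}$, so no element of $T_{w'}$ is comparable with an element of $T_{w''}$ for $w'\neq w''$. Hence, for $X\subseteq\Coord_\Sigma$ with $X_{w'}:=X\cap T_{w'}$,
\[
X\!\uparrow=\biguplus_{w'} X_{w'}\!\uparrow ,
\]
and likewise for the other three operators, each computed inside $(T_{w'},\leq_{w'})$. Combining this with Remark~\ref{remark1}(i), for fixed $t_w$ the $T_{w'}$-component of each side of the inclusion in item~1 is exactly the corresponding side of the per-function inclusion in Theorem~\ref{fundamentos}(1) for $f_{ww'}$ at $a=t_w$:
\[
f_{ww'}(t_w\!\downarrow^\ast)\cup f_{ww'}(t_w\!\uparrow^\ast)\subseteq f_{ww'}(\{t_w\})\!\downarrow^\ast\cup f_{ww'}(\{t_w\})\!\uparrow .
\]
Components $T_{w'}$ with no $f_{ww'}\in\mathcal{F}_w$ contribute $\varnothing$ to both sides, using the convention $\varnothing\!\uparrow=\varnothing$. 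The same decomposition applies to item~2 with Theorem~\ref{fundamentos}(3).

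\textbf{Componentwise equivalence.} Since both sides are disjoint unions indexed by the same $w'$, and the pieces lie in pairwise disjoint sets, an inclusion between such unions holds iff it holds componentwise. So the aggregate condition at every $t_w\in\Coord_\Sigma$ is equivalent to the per-function condition for every $f_{ww'}\in\mathcal{F}_w$ and every $a\in T_w$. Ranging over all $w\in W$ covers every $f\in\mathcal{F}=\bigcup_w\mathcal{F}_w$. Each such $f_{ww'}$ is a partial function between nonempty linear orders with nonempty domain (Definition~\ref{definition1}(3b)), so Theorem~\ref{fundamentos}(1) and (3) apply. This gives that every member of $\mathcal{F}$ is total, respectively surjective, iff the aggregate inclusion holds, and both directions come out simultaneously.

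\textbf{Main obstacle.} The delicate step is the first one: making precise that $\mathcal{F}_w(\{t_w\})\!\downarrow^\ast$ and its variants are computed fibrewise, with no cross-flow comparabilities. Failing that, the decomposition breaks. The decomposition must also handle the case $t_w\notin\Dom(f_{ww'})$, where the right-hand component is empty. This case is exactly what Theorem~\ref{fundamentos}(1) exploits: linearity together with a nonempty domain forces some domain point into $t_w\!\downarrow^\ast\cup t_w\!\uparrow^\ast$. So I would state it explicitly rather than re-prove it.

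I would also remark why the argument does not extend to the other properties. Their per-function clauses are quantified over $a\in\Dom(f)$ rather than over all of $T_w$. An aggregate version over $t_w\in\Dom(\mathcal{F}_w)$ would then mix functions whose domains differ, so the componentwise equivalence fails. This is consistent with the paper's comment that no similar generalisation is possible.
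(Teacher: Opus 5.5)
Your proposal is correct and follows the route the paper indicates. The paper omits the proof and presents the result as the lift of items 1 and 3 of Theorem~\ref{fundamentos} to the aggregate, via the disjoint-union decomposition of Remark~\ref{remark1}(i): you split both sides by target flow and apply the per-function characterisation to each $f_{ww'}$, which is exactly that lift. The step you flag as the main obstacle is in fact immediate, since $X\!\uparrow$ and its variants are defined as unions $\bigcup_{a\in X} a\!\uparrow$, each computed in the flow of $a$.
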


\begin{remark}\label{stlo}
The characterisations in Theorem~\ref{local} extend to strict preorders, and those in Theorem~\ref{caracterizafunciones} to strict linear orders, without any change in the statements.
\end{remark}

\begin{example}[Limits of arrow-style aggregate characterisations]\label{fallo}
A natural attempt to extend the aggregate characterisations to other properties would be to replace the function $f$ in the per-function conditions of Theorem~\ref{fundamentos} by the aggregate $\mathcal{F}_w$. For injectivity, the per-function characterisation (Theorem~\ref{fundamentos}(2)) is valid for every $t_w$ in the domain of $f$:
\[
f(t_w\!\downarrow^\ast) \cup f(t_w\!\uparrow^\ast) \subseteq f(t_w)\!\downarrow^\ast \cup f(t_w)\!\uparrow^\ast.
\]
Replacing $f$ by $\mathcal{F}_w$ and requiring $t_w$ to belong to the common domain $\Dom(\mathcal{F}_w)$ yields the candidate condition
\[
\mathcal{F}_w(t_w\!\downarrow^\ast) \cup \mathcal{F}_w(t_w\!\uparrow^\ast) \subseteq \mathcal{F}_w(\{t_w\})\!\downarrow^\ast \cup \mathcal{F}_w(\{t_w\})\!\uparrow^\ast.
\]
The following counterexample shows that this condition does **not** characterise injectivity, and therefore arrow-style aggregate characterisations cannot be extended beyond totality and surjectivity.

\medskip
Let $\Sigma=(W,\mathcal{T},\mathcal{F})$ with $T_w=\{t_w,t'_w\}$, $T_{w'}=\{u_{w'}\}$, $T_{w''}=\{v_{w''}\}$, each strictly ordered: $<_w=\{(t_w,t'_w)\}$, $<_{w'}=<_{w''}=\varnothing$. Define injective functions $f_{ww'}(t_w)=u_{w'}$ and $f_{ww''}(t'_w)=v_{w''}$. 

Since $t'_w \in t_w\!\uparrow^\ast$ and $t_w\!\downarrow^\ast = \varnothing$, we have:
\[
\mathcal{F}_w(t_w\!\uparrow^\ast) \cup \mathcal{F}_w(t_w\!\downarrow^\ast) = \{v_{w''}\}.
\]
However, $\mathcal{F}_w(\{t_w\})=\{u_{w'}\}$. Given that $u_{w'}$ has no strict relations in $T_{w'}$, we obtain:
\[
\mathcal{F}_w(\{t_w\})\!\downarrow^\ast \cup\, \mathcal{F}_w(\{t_w\})\!\uparrow^\ast = \varnothing.
\]
The inclusion $\{v_{w''}\} \subseteq \varnothing$ fails. 

\qedblack
\end{example}

\paragraph{Expressive Limits.}
The scope of arrow-style characterisations is strictly delimited by the connectivity of the underlying order. This approach is intentionally chosen because its reliance on upper and lower intervals matches the reach of the temporal operators in $L_{T\times W}$. While more powerful frameworks exist, such as the equational approach in relation algebras (e.g., \cite{Tarski1987}) where injectivity is defined through composition and converse ($f \circ f^{\smile} \subseteq I$), these require a different modal basis. We discard such alternatives because they operate independently of the order structure; adopting them would necessitate a language with connectives that are not bound by temporal accessibility. Consequently, the arrow-style remains the only framework that is structurally compatible with the inherent locality of our temporal semantics.

\subsection{Functional models and truth}\label{suc-models}

\begin{definition}\label{interpretation}
A \emph{general functional model} for $L_{T\times W}$ (or simply a \emph{functional model}) is a pair $\mathcal{M}=(\Sigma,h)$ where $\Sigma=(W,\mathcal{T},\mathcal{F})$ is a functional frame and $h: L_{T\times W} \longrightarrow 2^{\mathrm{Coord}_{\Sigma}}$ is a \emph{functional interpretation} satisfying:
\begin{align*}
h(\bot) &= \varnothing, \\
h(A\to B) &= (\mathrm{Coord}_{\Sigma}\setminus h(A))\cup h(B),\\
h(GA) &= \{t_w \mid R_w(t_w)\subseteq h(A)\}, \\
h(HA) &= \{t_w \mid R_w^{-1}(t_w)\subseteq h(A)\},\\
h(\square A) &= \{t_w \mid \mathcal{F}_w(\{t_w\})\subseteq h(A)\}.
\end{align*}
\end{definition}

We distinguish two analytical interpretations of the same language:
\begin{itemize}
    \item The \emph{standard interpretation} of $G,H$ is given in Definition~\ref{interpretation}.
    \item The \emph{strict interpretation}, denoted $G^\ast,H^\ast$, evaluates $G$ and $H$ over the irreflexive core $R_w^\bullet$ (see Section~\ref{sec:preliminaries}):
    \[
    h(G^\ast A) = \{t_w \mid R_w^\bullet(t_w) \subseteq h(A)\},\quad
    h(H^\ast A) = \{t_w \mid (R_w^\bullet)^{-1}(t_w) \subseteq h(A)\}.
    \]
\end{itemize}
In irreflexive frames, $R_w = R_w^\bullet$, so both interpretations coincide.

\begin{definition}
Let $(\Sigma,h)$ be a functional model, $t_w \in \mathrm{Coord}_{\Sigma}$, and $A \in L_{T\times W}$.
\begin{itemize}
    \item $A$ is \emph{true at} $t_w$ ($\mathcal{M}, t_w \models A$) if $t_w \in h(A)$.
    \item $A$ is \emph{valid in} $(\Sigma,h)$ if $h(A) = \mathrm{Coord}_{\Sigma}$.
    \item $A$ is \emph{valid in} $\Sigma$ if it is valid in every model on $\Sigma$.
    \item $A$ is \emph{valid} if it is valid in every functional frame.
\end{itemize}
\end{definition}

Having established the general semantic framework, we now turn to definability, relating semantic properties of frames to syntactic characterisations.

\section{Definability in the Arbitrary Ordered Multiflow Setting}\label{sec:ordered-multiflow}
To analyze the expressive limits of $L_{T\times W}$, we first fix some notation. Let $\mathbb{K}$ denote the class of all functional frames. We use $\Or$ for order types and $P$ for functional properties. The expression $\mathbb{K}^{\Or}_{P}$ denotes the class of functional frames in $\mathbb{K}^{\Or}$ where the flows satisfy $\Or$ and the accessibility functions satisfy $P$. 

Consistent with our preliminary notation, we use \emph{property names} (e.g., \emph{monotonicity}, \emph{injectivity}) when referring to the logical notion of definability, while \emph{function names} (e.g., \emph{increasing}, \emph{injective}) describe the behavior of individual functions and identify the corresponding classes. For instance, if $P$ is \emph{monotonicity}, $\mathbb{K}^{\Or}_{\inc}$ is the class of frames whose functions are \emph{increasing}. Throughout the paper, we refer to these order types by their abbreviations: $\mathsf{PRE}$ (preorder), $\mathsf{sPRE}$ (strict preorder), $\mathsf{PO}$ (partial order), $\mathsf{LO}$ (linear order), and $\mathsf{sLO}$ (strict linear order).

The collection of all such structures, where each flow is endowed with a fixed order type $\Or$ without further constraints, constitutes our primary object of study. We call this the \emph{arbitrary ordered multiflow setting}: frames where each flow comes with a fixed order type $\Or$, but without any further restrictions---no limits on the number of flows or functions, no uniformity conditions, and no means to distinguish individual functions. This is the starting point for our analysis; it coincides with the semantics originally introduced in \cite{Burrieza2002conf, Burrieza2003}, except that those works were restricted to strict linear orders.

The functional properties considered in this framework are all properties of a single function:
\begin{center}
    \textit{totality, non-totality, injectivity, surjectivity, monotonicity,}\\
    \textit{strict monotonicity, antitonicity, strict antitonicity, constancy}
\end{center}

We begin by formalising the notion of definability.

\begin{definition}[Definability]\label{def:definability}
Let $L$ be a modal language, and let $\mathbb{K}_2$ be a class of functional frames. For a subclass $\mathbb{K}_1 \subseteq \mathbb{K}_2$:
\begin{itemize}
    \item A set $\Gamma \subseteq L$ of formulas \emph{defines} $\mathbb{K}_1$ \emph{in} $\mathbb{K}_2$ if 
   $$\mathbb{K}_1 = \{ \Sigma \in \mathbb{K}_2 \mid \text{every } A \in \Gamma \text{ is valid in } \Sigma \}.$$\noindent When $\Gamma = \{A\}$, we say that the single formula $A$ \emph{defines} $\mathbb{K}_1$ in $\mathbb{K}_2$.
    
    \item $\mathbb{K}_1$ is \emph{$L$-definable in $\mathbb{K}_2$} if there exists some $\Gamma \subseteq L$ that defines $\mathbb{K}_1$ in $\mathbb{K}_2$.
    
    \item A property $P$ of functions is \emph{$L$-definable in $\mathbb{K}_2$} if the subclass
    $\{ \Sigma \in \mathbb{K}_2 \mid \text{all functions in } \Sigma \text{ satisfy } P \}$
    is $L$-definable in $\mathbb{K}_2$.
\end{itemize}

When the language $L$ is clear from context, we simply say ``definable''.
\end{definition}

\medskip
\noindent\textsc{Convention.}
When we say that a property $P$ is (un)definable in an order type $\Or$, we mean that the subclass $\mathbb{K}^{\Or}_P$ is (un)definable in $\mathbb{K}^{\Or}$ (see Definition~\ref{def:definability}).

This convention is useful because it reflects the fact that definability in a restricted class $\mathbb{K}^{\Or}$ does not automatically transfer to a larger class. For example, totality is definable in $\mathbb{K}^{\sLO}$, but it is not definable in the class of all functional frames $\mathbb{K}$. In contrast, undefinability propagates upwards: if $P$ is undefinable in $\mathbb{K}^{\Or}$, then it is also undefinable in any superclass of $\mathbb{K}^{\Or}$ (in particular, in $\mathbb{K}$). This is why, for instance, the undefinability of totality in $\mathbb{K}^{\sPRE}$ implies its undefinability in all larger classes.

\begin{definition}[Functional bisimulation]\label{bisimulacion}
Let \(\mathcal{M}=(\Sigma,h)\) and \(\mathcal{M}'=(\Sigma',h')\) be functional models.
A nonempty relation \(Z \subseteq \Coord_{\Sigma} \times \Coord_{\Sigma'}\) is a \emph{functional bisimulation} if whenever \(t_w Z t'_{w'}\) the following conditions hold:
\begin{enumerate}
    \item \(t_w \in h(p)\) if and only if \(t'_{w'} \in h'(p)\) for all \(p \in \mathcal{V}\).
    
    \item If \(t_w R_w s_w\), then there exists \(s'_{w'}\) such that \(t'_{w'} R'_{w'} s'_{w'}\) and \(s_w Z s'_{w'}\).
    
    \item If \(t'_{w'} R'_{w'} s'_{w'}\), then there exists \(s_w\) such that \(t_w R_w s_w\) and \(s_w Z s'_{w'}\).
    
    \item If \(s_w R_w t_w\), then there exists \(s'_{w'}\) such that \(s'_{w'} R'_{w'} t'_{w'}\) and \(s_w Z s'_{w'}\).
    
    \item If \(s'_{w'} R'_{w'} t'_{w'}\), then there exists \(s_w\) such that \(s_w R_w t_w\) and \(s_w Z s'_{w'}\).
    
    \item If there exists \(f_{wv} \in \mathcal{F}\) such that \(f_{wv}(t_w) = s_v\), then there exists \(f'_{w'v'} \in \mathcal{F}'\) such that \(f'_{w'v'}(t'_{w'}) = s'_{v'}\) and \(s_v Z s'_{v'}\).
    
    \item If there exists \(f'_{w'v'} \in \mathcal{F}'\) such that \(f'_{w'v'}(t'_{w'}) = s'_{v'}\), then there exists \(f_{wv} \in \mathcal{F}\) such that \(f_{wv}(t_w) = s_v\) and \(s_v Z s'_{v'}\).
\end{enumerate}
\end{definition}
\begin{lemma}[Invariance under functional bisimulation]\label{invarianza}
Let $A \in L_{T\times W}$. If $\mathcal{M}$ and $\mathcal{M}'$ are functional models and $Z$ is a functional bisimulation between them, then for any pair of coordinates $t_w \in \mathrm{Coord}_{\Sigma}$ and $t'_{w'} \in \mathrm{Coord}_{\Sigma'}$ such that $t_w Z t'_{w'}$, it holds that:
\[
\mathcal{M}, t_w \models A \quad \text{if and only if } \quad \mathcal{M}', t'_{w'} \models A.
\]
In particular, bisimilar models satisfy the same formulas at related coordinates (cf. the standard invariance result for modal logic \cite{vanBenthem1984}).
\end{lemma}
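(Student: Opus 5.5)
The proof goes by structural induction on $A$, proving the biconditional simultaneously for all pairs $t_w Z t'_{w'}$. The base cases are immediate. For $\bot$, both sides are false because $h(\bot)=h'(\bot)=\varnothing$. For a propositional variable $p$, the equivalence is exactly clause 1 of Definition~\ref{bisimulacion}. The Boolean step for $A\to B$ follows from the induction hypothesis for $A$ and $B$ and the clause $h(A\to B)=(\Coord_\Sigma\setminus h(A))\cup h(B)$.

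The three modal cases each use a forth/back pair of clauses, and I will argue each by contraposition.

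For $GA$, suppose $\mathcal{M},t_w\models GA$ but $\mathcal{M}',t'_{w'}\not\models GA$. Then there is $s'_{w'}$ with $t'_{w'}R'_{w'}s'_{w'}$ and $s'_{w'}\notin h'(A)$. Clause 3 gives $s_w$ with $t_wR_ws_w$ and $s_wZs'_{w'}$. By the induction hypothesis $s_w\notin h(A)$, which contradicts $R_w(t_w)\subseteq h(A)$. The converse direction uses clause 2 in the same way.

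The case $HA$ is the mirror image, with clauses 4 and 5 handling the preimage sets $R_w^{-1}(t_w)$.

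For $\square A$, I first unfold $\mathcal{F}_w(\{t_w\})$ as the set of all $s_v$ such that $f_{wv}(t_w)=s_v$ for some $f_{wv}\in\mathcal{F}_w$. A witness $s'_{v'}\in\mathcal{F}'_{w'}(\{t'_{w'}\})$ with $s'_{v'}\notin h'(A)$ is then transferred back by clause 7 to some $s_v\in\mathcal{F}_w(\{t_w\})$ with $s_vZs'_{v'}$. The induction hypothesis gives $s_v\notin h(A)$, a contradiction. The other direction uses clause 6.

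The only point needing care is bookkeeping rather than difficulty. In the temporal clauses, the witnesses must stay inside the flows $T_w$ and $T'_{w'}$ of the related points, and they do, since $R_w$ only relates elements of $T_w$. In the functional clauses, the targets land in possibly different flows $T_v$ and $T'_{v'}$, and the induction hypothesis applies there because it is stated for every $Z$-related pair of coordinates.

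I would also add a remark about the strict interpretation $G^\ast,H^\ast$. The same argument goes through if clauses 2--5 are read with $R^\bullet$ in place of $R$. With the unmodified clauses it can fail, because a reflexive loop at $t_w$ may be matched by a different point. Since the lemma is stated for Definition~\ref{interpretation}, the main proof only needs the standard reading.
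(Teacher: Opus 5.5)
Your proof is correct and is the same argument as the paper's, which consists only of the line ``by structural induction on $A$''. Your case split is just the natural unfolding of that line: clause~1 for atoms, clauses~2/3 for $G$, 4/5 for $H$, 6/7 for $\square$, with the induction hypothesis quantified over all $Z$-related pairs so that it applies in the target flows $T_v$, $T'_{v'}$. Your closing remark on $G^\ast,H^\ast$ is also accurate: the models in Theorem~\ref{thm:non-interdef}(ii) are related by the bisimulation $\{(t,t'),(t,u')\}$ in the sense of Definition~\ref{bisimulacion}, yet $G^\ast p$ separates $t$ from $t'$, so the lemma genuinely depends on the standard reading.
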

\begin{proof}
By structural induction on $A$. \qed
\end{proof}
\medskip

Definability results will be established by providing explicit formulas for the intended properties. Conversely, with a single later exception, undefinability will be witnessed via surjective p-morphisms which, by inducing functional bisimulations (Lemma~\ref{invarianza}), ensure the preservation of validity.

\medskip
\noindent\textsc{Methodological Note.}
Undefinability will be witnessed via surjective p-morphisms $\varphi: \Sigma \to \Sigma'$ between finite frames, where $\Sigma$ satisfies a property $P$ and $\Sigma'$ does not. Since these p-morphisms induce functional bisimulations (Lemma~\ref{invarianza}), they ensure that no formula in $L_{T\times W}$ can define $P$.

Consequently, it suffices to present the relevant pairs of frames. For the sake of brevity, we employ \emph{Convention (RC)} (Reflexive Closure) to adapt these counterexamples to reflexive classes. Detailed graphical representations and specific reading conventions are provided in the Appendix (see~\ref{appendix:figures}).

\medskip
To facilitate the translation of functional properties into modal formulas, we begin with a simple set-theoretic observation.
\begin{observation}\label{obs:singletons}
Taking Remark~\ref{remark1}(i) into account, for every $X \subseteq T_w$ we have:
\[
  \mathcal{F}_w(X) = \bigcup_{t_w \in X} \mathcal{F}_w(\{t_w\}).
\]
\end{observation}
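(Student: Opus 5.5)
The claim is the identity $\mathcal{F}_w(X)=\bigcup_{t_w\in X}\mathcal{F}_w(\{t_w\})$ for every $X\subseteq T_w$. I would prove it by double inclusion, unfolding the definition of the aggregate image from Remark~\ref{remark1}(i), namely $\mathcal{F}_w(X)=\bigcup_{f_{ww'}\in\mathcal{F}_w} f_{ww'}(X)$. The direct image of a partial function is taken only over points of its domain, so $f_{ww'}(X)=\{f_{ww'}(t_w)\mid t_w\in X\cap\Dom(f_{ww'})\}$. From this, $f_{ww'}(X)=\bigcup_{t_w\in X} f_{ww'}(\{t_w\})$, where $f_{ww'}(\{t_w\})$ is empty whenever $t_w\notin\Dom(f_{ww'})$.

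The key step is to commute the two unions:
\[
\mathcal{F}_w(X)=\bigcup_{f_{ww'}\in\mathcal{F}_w}\ \bigcup_{t_w\in X} f_{ww'}(\{t_w\})
=\bigcup_{t_w\in X}\ \bigcup_{f_{ww'}\in\mathcal{F}_w} f_{ww'}(\{t_w\})
=\bigcup_{t_w\in X}\mathcal{F}_w(\{t_w\}).
\]
Elementwise, the argument runs as follows. An element $s$ lies in the left side iff there are $f_{ww'}\in\mathcal{F}_w$ and $t_w\in X\cap\Dom(f_{ww'})$ with $f_{ww'}(t_w)=s$. This is exactly the condition for $s$ to lie in $\mathcal{F}_w(\{t_w\})$ for some $t_w\in X$.

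Two edge cases need a quick check. If $X=\varnothing$, both sides are empty. Points outside $\Dom(\mathcal{F}_w)$ contribute $\varnothing$ to both sides, so partiality causes no trouble.

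The appeal to Remark~\ref{remark1}(i) serves one purpose: the union defining $\mathcal{F}_w(\cdot)$ is disjoint across targets $w'$, so the decomposition is also a decomposition into target-indexed blocks. The equality itself does not depend on disjointness. I expect no real obstacle; the only point needing care is to state explicitly the convention that images of partial functions ignore points outside the domain.
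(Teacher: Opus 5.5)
Your proof is correct and is essentially the intended argument: the paper gives no separate proof, treating the identity as immediate from the definition of the aggregate image $\mathcal{F}_w(X)=\bigcup_{f_{ww'}\in\mathcal{F}_w} f_{ww'}(X)$, and your commutation of the two unions is exactly that justification. You are also right that the disjointness recorded in Remark~\ref{remark1}(i) plays no role in the equality itself. The convention that images of partial functions ignore points outside the domain is the only thing that needs stating.
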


\begin{lemma}[Image-modal correspondence]\label{lem:image-modal-unified}
Let $\Sigma=(W,\mathcal T,\mathcal{F})$ be a functional frame, fix $T_w\in\mathcal T$ and $X\subseteq T_w$.
For every interpretation $h$ and every formula $A$, the following equivalences hold:

\begin{itemize}
\item For $\mathsf{sPRE} $ and $\mathsf{sLO}$:
\begin{enumerate}
  \item $X\subseteq h(\square A)$ if and only if $\mathcal{F}_w(X)\subseteq h(A)$.
  \item $\mathcal{F}_w(X)\!\uparrow^{\ast}\subseteq h(A)$ if and only if $\mathcal{F}_w(X)\subseteq h(GA)$.
  \item $\mathcal{F}_w(X)\!\downarrow^{\ast}\subseteq h(A)$ if and only if $\mathcal{F}_w(X)\subseteq h(HA)$.
\end{enumerate}
\item For $\mathsf{PRE}, \mathsf{PO}$, and $\mathsf{LO}$: The same equivalences hold, but in (2) and (3) replace $\uparrow^\ast$ by $\uparrow$ and $\downarrow^{\ast}$ by $\downarrow$.
\end{itemize}
\end{lemma}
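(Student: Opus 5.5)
The plan is to unfold each side of the three equivalences pointwise, using the truth clauses of Definition~\ref{interpretation} and the identification $R_w(t_w)=t_w\!\uparrow$, $R_w^{-1}(t_w)=t_w\!\downarrow$ for order relations. For irreflexive orders ($\mathsf{sPRE}$, $\mathsf{sLO}$) these coincide with $t_w\!\uparrow^\ast$ and $t_w\!\downarrow^\ast$, since $R_w^\bullet=R_w$ there. No induction on formulas is needed: $A$ is arbitrary, and only the clause for the outermost operator is used, with $h(A)$ treated as an arbitrary subset of $\Coord_\Sigma$.

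For item (1), I would use Observation~\ref{obs:singletons}: $\mathcal{F}_w(X)=\bigcup_{t_w\in X}\mathcal{F}_w(\{t_w\})$. Then $X\subseteq h(\square A)$ holds iff every $t_w\in X$ satisfies $\mathcal{F}_w(\{t_w\})\subseteq h(A)$, by the $\square$-clause. This is the same as the union over $X$ being contained in $h(A)$, that is, $\mathcal{F}_w(X)\subseteq h(A)$. This item does not depend on the order type, so it holds verbatim in all five settings.

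For item (2), let $Y=\mathcal{F}_w(X)$. By Remark~\ref{remark1}(i), $Y$ is a disjoint union of subsets $Y_{w'}\subseteq T_{w'}$, and the operator $\uparrow^\ast$ (resp.\ $\uparrow$) is computed inside each target flow $T_{w'}$ using $R_{w'}$. Hence $Y\!\uparrow^\ast=\bigcup_{u\in Y}u\!\uparrow^\ast$, with each $u\!\uparrow^\ast=R_{w'}(u)$ in the strict case. So $Y\!\uparrow^\ast\subseteq h(A)$ iff $R_{w'}(u)\subseteq h(A)$ for every $u\in Y$, which by the $G$-clause says $u\in h(GA)$ for all $u\in Y$, i.e.\ $Y\subseteq h(GA)$. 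Item (3) is the mirror argument using $R^{-1}$, $\downarrow^\ast$ and the $H$-clause. In the reflexive classes $\mathsf{PRE}$, $\mathsf{PO}$, $\mathsf{LO}$, the standard clause quantifies over $R_{w'}(u)=u\!\uparrow$, which contains $u$ itself. The identical computation therefore gives the versions with $\uparrow$ and $\downarrow$, as the statement asserts.

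The only point requiring care is the bookkeeping in (2)--(3). The set $Y$ lives across several target flows, and $\uparrow$ must be read flow-locally. I would first state explicitly that, for $Y\subseteq\Coord_\Sigma$, $Y\!\uparrow$ means $\bigcup_{u\in Y}R_{v(u)}(u)$, where $v(u)$ is the label of the flow containing $u$. With that convention the distributivity of $\uparrow$ over unions is immediate. I would also check that the convention $\varnothing\!\uparrow=\varnothing$ makes the equivalences vacuously true when $Y=\varnothing$. I would further remark that under the strict interpretation $G^\ast,H^\ast$ the $\uparrow^\ast$, $\downarrow^\ast$ version holds in every class, since $R^\bullet(u)=u\!\uparrow^\ast$, although the statement does not require this.
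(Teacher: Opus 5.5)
Your proposal is correct and follows essentially the same route as the paper. Item (1) goes through Observation~\ref{obs:singletons} and the $\square$-clause, and items (2)--(3) distribute $\uparrow^\ast$ (resp.\ $\uparrow$) over a union and apply the $G$/$H$-clause pointwise; the only cosmetic difference is that you distribute over the image points $u\in\mathcal{F}_w(X)$, making the flow-local reading of $\uparrow$ explicit, whereas the paper groups them as $\bigcup_{t_w\in X}\mathcal{F}_w(\{t_w\})\!\uparrow^\ast$.
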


\begin{proof}
By Remark~\ref{remark1}, the union $\mathcal{F}_w(X)=\bigcup_{f\in\mathcal{F}_w} f(X)$ is disjoint; by Observation~\ref{obs:singletons}, we may equivalently write $\mathcal{F}_w(X)=\bigcup_{t_w\in X} \mathcal{F}_w(\{t_w\})$.

\medskip\noindent
\textit{Proof of (1):} The following chain of equivalences holds:
\[
\begin{aligned}
X\subseteq h(\square A)
& \ \text{iff}\ \Big(\bigcup_{t_w\in X} \{t_w\}\Big)\subseteq h(\square A)\\
& \ \text{iff}\ \Big(\bigcup_{t_w\in X} \mathcal{F}_w(\{t_w\})\Big)\subseteq h(A)\\
& \ \text{iff}\ \mathcal{F}_w(X)\subseteq h(A).
\end{aligned}
\]

\medskip\noindent
\textit{Proof of (2) (the proof of (3) is analogous):} By Observation~\ref{obs:singletons}, 
\[
\mathcal{F}_w(X)\!\uparrow^\ast = \bigcup_{t_w\in X} \mathcal{F}_w(\{t_w\})\!\uparrow^\ast.
\] 
Then:
\[
\begin{aligned}
\mathcal{F}_w(X)\!\uparrow^\ast\subseteq h(A)
& \ \text{iff}\ \Big(\bigcup_{t_w\in X} \mathcal{F}_w(\{t_w\})\!\uparrow^\ast\Big)\subseteq h(A)\\
& \ \text{iff}\ \Big(\bigcup_{t_w\in X} \mathcal{F}_w(\{t_w\})\Big)\subseteq h(GA)\\
& \ \text{iff}\ \mathcal{F}_w(X)\subseteq h(GA).
\end{aligned}
\]

\medskip\noindent
For $\mathsf{PRE}, \mathsf{PO}$, and $\mathsf{LO}$, the same proofs apply with $\uparrow$ and $\downarrow$ in place of $\uparrow^\ast$ and $\downarrow^\ast$.
\end{proof}

\begin{theorem}[On Definability of Totality and Surjectivity]\label{tot-surj}
Each of totality and surjectivity is definable only in $\mathsf{LO}$ and $\mathsf{sLO}$.
\end{theorem}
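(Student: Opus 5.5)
The statement has two halves: a positive part (definability in $\mathsf{LO}$ and $\mathsf{sLO}$) and a negative part (undefinability in $\mathsf{PRE}$, $\mathsf{sPRE}$ and $\mathsf{PO}$). I would treat them separately.

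\textbf{Positive part.} I would start from the class-aggregate characterisations of Theorem~\ref{caracterizafunciones}, which hold for $\mathsf{LO}$ and, by Remark~\ref{stlo}, for $\mathsf{sLO}$. These are then translated into formulas through Lemma~\ref{lem:image-modal-unified}, writing $G^\ast$, $H^\ast$ for the strict core; in $\mathsf{sLO}$ this is just $G,H$.

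For totality, fix $t_w$ and a valuation, and set $X=t_w\!\downarrow^\ast\cup t_w\!\uparrow^\ast$. The condition $\mathcal{F}_w(X)\subseteq \mathcal{F}_w(\{t_w\})\!\downarrow^\ast\cup\mathcal{F}_w(\{t_w\})\!\uparrow$ should become
\[
\square (Hp\wedge p) \to (H\square p \wedge G\square p),
\]
read with the strict operators in the strict case. Validity of this formula should express exactly that every image of a point other than $t_w$ lies in the down-closure or up-closure of the image of $t_w$.

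For surjectivity, the condition $\mathcal{F}_w(\{t_w\})\!\downarrow^\ast\cup\mathcal{F}_w(\{t_w\})\!\uparrow^\ast\subseteq \mathcal{F}_w(t_w\!\downarrow^\ast\cup t_w\!\uparrow^\ast)$ should become
\[
(H\square p\wedge G\square p)\to \square(Hp\wedge Gp).
\]

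In each case the proof runs as follows.
\begin{enumerate}
\item For the direction from the characterisation to validity, apply Lemma~\ref{lem:image-modal-unified}(1)--(3) to $X=t_w\!\downarrow^\ast$ and $X=t_w\!\uparrow^\ast$.
\item For the converse, suppose the inclusion fails at some $t_w$. Choose the valuation $h(p)$ equal to the relevant right-hand set: for totality, $\mathcal{F}_w(\{t_w\})\!\downarrow^\ast\cup\mathcal{F}_w(\{t_w\})\!\uparrow$. Then the antecedent holds at $t_w$ and the consequent fails.
\end{enumerate}
In the non-strict $\mathsf{LO}$ case, $G,H$ include the current instant, so I would adapt the formulas carefully. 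The intended fix is to express "strictly later" via $G(q\to\ldots)$ with $q$ true only at $t_w$, or via the standard trick $Gp\wedge\neg p$. I am least sure of the exact formula here.

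\textbf{Negative part.} Since undefinability propagates upward and $\mathsf{LO}$, $\mathsf{PO}$, $\mathsf{PRE}$ are reflexive analogues of the strict types, it suffices to give finite counterexamples in $\mathsf{sPRE}$ and $\mathsf{PO}$. These then transfer to $\mathsf{PRE}$ via Convention (RC). For each property I would exhibit a surjective p-morphism $\varphi:\Sigma\to\Sigma'$ with $\Sigma$ satisfying the property and $\Sigma'$ violating it, or the reverse. The idea is to exploit incomparable points, which the operators cannot distinguish.

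For totality, take a flow $T_w$ with two incomparable points $a,b$ and $f(a)=f(b)=u$. Map it onto the one-point flow $\{c\}$ with $f'(c)=u$. Then $\Sigma'$ is total, and $\Sigma$ can be made non-total by adding a third incomparable point $d$ outside the domain of $f$. The map $\varphi$ must send $d$ to $c$ while preserving the zig-zag clauses 6 and 7, and these fail at $d$, so a different construction is needed. Instead I would reverse the direction. Take $\Sigma$ with two incomparable points $a,b$ and functions $f_{ww_1}(a)=u_1$, $f_{ww_2}(b)=u_2$. Each single function is then non-total, although the multiflow covers every point, and $\Sigma$ can be collapsed onto a total single-function frame. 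This is the same interference phenomenon illustrated in Example~\ref{fallo}.

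For surjectivity, do the analogous thing on the target side: two incomparable target points, each hit by a different function, mapped onto a frame whose single function is surjective.

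\textbf{Main obstacle.} The hard step is verifying all seven functional bisimulation clauses for these multiflow collapses, especially the backward clause 7 on points outside domains. The real difficulty is choosing the constructions so that the non-total (or non-surjective) frame is the preimage side, or the image side, consistently with the p-morphism direction.
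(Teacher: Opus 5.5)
Your surjectivity formula and your pattern for the positive direction match the paper: Theorem~\ref{caracterizafunciones} plus Lemma~\ref{lem:image-modal-unified}, with $h(p)$ set to the right-hand side for the converse. Your totality formula, however, is wrong. The antecedent of $\square(Hp\wedge p)\to(H\square p\wedge G\square p)$ only puts $\mathcal{F}_w(\{t_w\})\!\downarrow$ into $h(p)$. Validity of this formula therefore expresses $\mathcal{F}_w(t_w\!\downarrow^\ast\cup\, t_w\!\uparrow^\ast)\subseteq\mathcal{F}_w(\{t_w\})\!\downarrow$, which is not totality. For a concrete failure, take $T_w=\{a<b\}$, $T_v=\{c<d\}$, $f_{wv}(a)=c$, $f_{wv}(b)=d$; this frame is total and in $\sLO$. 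With $h(p)=\{c\}$ the antecedent holds at $a$ but $G\square p$ fails there. The up-part of the right-hand side needs the conjunct $Gp$, which gives the paper's $\square(Hp\wedge p\wedge Gp)\to(H\square p\wedge G\square p)$. In $\LO$ you do not need to simulate the strict future. With reflexive $G,H$ the left-hand sides of both inclusions only gain $\mathcal{F}_w(\{t_w\})$, which the right-hand sides then also contain. So $\square(Hp\wedge Gp)\to(H\square p\wedge G\square p)$, and your surjectivity formula read reflexively, work as they stand.

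The negative part has a more serious gap. A surjective p-morphism $\Sigma\to\Sigma'$ transfers validity only from $\Sigma$ to $\Sigma'$. It therefore refutes definability of $P$ only when $\Sigma$ has $P$ and $\Sigma'$ lacks it, so ``or the reverse'' is not available. Both of your constructions go the wrong way: the source carries two non-total (resp.\ non-surjective) functions and the target a single total (resp.\ surjective) one. What you would prove is that non-totality (resp.\ ``all functions non-surjective'') is undefinable, not totality or surjectivity. Your failed first attempt already exposes the constraint. By clauses~6--7, a point with no functional image can only be related to points with no functional image. Hence the point of $\Sigma'$ outside the domain must be the image of a point of $\Sigma$ that has no image and yet makes no function of $\Sigma$ non-total. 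The missing idea is to take such a point in a flow that is the source of no function, and to use the fact that, with empty orders, $Z$ need not respect flow membership. Concretely: map $T_w=\{1_w\}$, $T_v=\{2_v,3_v\}$, $f_{wv}(1_w)=2_v$ (total) onto $T_{w'}=\{4_{w'},5_{w'}\}$, $T_{v'}=\{6_{v'}\}$, $f'_{w'v'}(4_{w'})=6_{v'}$ (non-total), sending $3_v\mapsto 5_{w'}$ (Figure~\ref{fig:total}). Dually, since the language cannot see preimages, the out-of-domain point of a surjective but non-total function can be sent to an unhit point of the target flow (Figure~\ref{fig:surj}). Comparability in $\LO$ and $\sLO$ is exactly what blocks this relocation. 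Convention~(RC) carries both examples to $\PRE$ and $\PO$.
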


\begin{proof}
We divide the proof into positive cases (definability) and negative cases (undefinability). 

\medskip
\noindent\textsc{i. Positive cases: Definability in linear and strict linear orders.}  

\noindent
\emph{Totality.} The formula
\[
(\Tot): \square(Hp \land p \land Gp) \to (H\square p \land G\square p)
\]
defines $\mathbb{K}^{\sLO^2}_{\tot}$ in $\mathbb{K}^{\sLO^2}$. In $\LO^2$, reflexivity makes the conjunct $p$ redundant, so the simplified formula
\[
(\Tot)^\circ: \square(Hp \land Gp) \to (H\square p \land G\square p)
\]
defines $\mathbb{K}^{\LO^2}_{\tot}$ in $\mathbb{K}^{\LO^2}$ (the proof for totality in strict linear orders can be found in \cite{Burrieza2003}.)

\noindent
\emph{Surjectivity.} We show that the following formula defines $\mathbb{K}^\mathsf{sLO}_{\surj}$ in $\mathbb{K}^\mathsf{sLO}$ (it is the same formula for $\LO$):
\[
(\mathit{Surj}) \quad (H\square p \land G\square p) \to \square(Hp \land Gp).
\]
$(\Rightarrow$) Suppose that  $\Sigma\in \mathbb{K}^\mathsf{sLO}_\surj$. Let $(\Sigma, h)$  be a model and let $t_w\in \Coord_{\Sigma}$. Then:
\[
\begin{aligned}
t_w \in h(H\square p \land G\square p)
&\ \text{iff}\ t_w \in h(H\square p)\cap h(G\square p)\\
&\ \text{iff}\ t_w\!\downarrow^\ast \subseteq h(\square p)\ \text{and}\ t_w\!\uparrow^\ast\subseteq h(\square p) %\quad\text{(Remark~\ref{linear})}
\\
&\ \text{iff}\  (t_w\!\downarrow^\ast\!\cup\,  t_w\!\uparrow^\ast) \subseteq h(\square p)\\
&\ \text{iff}\ \mathcal{F}_w(t_w\!\downarrow^\ast)\cup\mathcal{F}_w( t_w\!\uparrow^\ast)\subseteq h(p)
\quad\text{(Lemma~\ref{lem:image-modal-unified}(1))}\\
&\ \text{then}\ \mathcal{F}_w(\{t_w\})\!\downarrow^{\ast}\!\cup\,\mathcal{F}_w(\{t_w\})\!\uparrow^{\ast}\subseteq h(p) \quad\text{(Theorem~\ref{caracterizafunciones}(2))}\\
&\ \text{iff}\ \mathcal{F}_w(\{t_w\})\!\downarrow^{\ast}\subseteq h(p)\ \text{and}\ \mathcal{F}_w(\{t_w\})\!\uparrow^{\ast}\subseteq h(p)\\
&\ \text{iff}\ \mathcal{F}_w(\{t_w\})\subseteq h(Hp)\cap h(Gp) \quad\text{(Lemma~\ref{lem:image-modal-unified}(2),(3))}\\
&\ \text{iff}\ \mathcal{F}_w(\{t_w\})\subseteq h(Hp \land Gp) \\
&\ \text{iff}\ \{t_w\} \subseteq h(\square(Hp \land Gp)) \quad \text{(Lemma~\ref{lem:image-modal-unified}(1))}\\
&\ \text{iff}\ t_w \in h(\square(Hp \land Gp)).
\end{aligned}
\]
Hence $(\mathit{Surj})$ is valid in $\Sigma$.

\smallskip\noindent
($\Leftarrow$) By contraposition: assume $\Sigma \notin \mathbb{K}^{\mathsf{sLO}}_\surj$. By Theorem~\ref{caracterizafunciones}(2), there exists $t_w$ such that:
\[
  \mathcal{F}_w(\{t_w\})\!\downarrow^\ast \cup \ \mathcal{F}_w(\{t_w\})\!\uparrow^\ast \nsubseteq \mathcal{F}_w(t_w\!\downarrow^\ast) \cup\, \mathcal{F}_w (t_w\!\uparrow^\ast)
  \tag{$\dag$}\]
Define a model $(\Sigma, h)$, where $h(p) = \mathcal{F}_w(t_w\!\downarrow^\ast) \cup \,  \mathcal{F}_w(t_w\! \uparrow^\ast)$. By Lemma~\ref{lem:image-modal-unified}(1), we have $t_w \!\downarrow^\ast\!\cup \, t_w\! \uparrow^\ast\, \subseteq h(\square p)$, which implies $t_w \in h(H\square p \land G\square p)$. However, by $(\dag)$, the condition $\mathcal{F}_w(\{t_w\})\!\downarrow^\ast\!\cup \, \mathcal{F}_w(\{t_w\})\!\uparrow^\ast \subseteq h(p)$ fails. Thus $t_w \notin h(\square(Hp \land Gp))$, and $(\Surj)$ is invalid in $\Sigma$.

$\mathsf{LO}$\emph{ case.} The proof for $\mathsf{LO}$ follows the same structure as for $\mathsf{sLO}$. The only difference arises from the semantics of $G$ and $H$: in reflexive orders, $t_w \in h(H\square p \land G\square p)$ already guarantees $\mathcal{F}_w(\{t_w\}) \subseteq h(p)$. This allows us to augment the strict inclusions $\mathcal{F}_w(\{t_w\})\!\downarrow^{\ast} \subseteq h(p)$ and $\mathcal{F}_w(\{t_w\})\!\uparrow^{\ast} \subseteq h(p)$ with the point itself, yielding $\mathcal{F}_w(\{t_w\}) \subseteq h(Hp) \cap h(Gp)$ in the reflexive sense. The remainder of the proof then proceeds identically to the $\mathsf{sLO}$ case.
\qed
\end{proof}

\medskip\noindent
\textsc{ii. Negative cases: Undefinability in preorders, strict preorders, and posets.}

Totality and surjectivity are undefinable in $\sPRE$ as witnessed by the following counterexamples in the Appendix:

\medskip\noindent
\emph{Totality:} Figure~\ref{fig:total}.

\medskip\noindent
\emph{Surjectivity:} Figure~\ref{fig:surj}.

\medskip\noindent
By Convention~(RC), these configurations also establish undefinability in $\PRE$ and $\PO$. \qed

\medskip
We now proceed to examine other functional properties whose definability fails uniformly across all the order types under consideration. 

\begin{theorem}[Undefinability beyond totality and surjectivity]\label{thm:tot-surj}
Each of non-totality, injectivity, constancy, monotonicity, antitonicity, strict monotonicity, and strict antitonicity is not definable in any of the order types considered ($\PRE$, $\sPRE$, $\PO$, $\LO$, $\sLO$).
\end{theorem}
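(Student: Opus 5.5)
The plan is to use surjective p-morphisms, following the Methodological Note. Suppose $\Sigma$ satisfies $P$, $\Sigma'$ violates $P$, and $\varphi\colon\Sigma\to\Sigma'$ is a surjective p-morphism. Any $\Gamma$ defining $\mathbb{K}^{\Or}_P$ in $\mathbb{K}^{\Or}$ is valid in $\Sigma$. Every model on $\Sigma'$ pulls back along $\varphi$ to a model on $\Sigma$, and the graph of $\varphi$ is a functional bisimulation (Definition~\ref{bisimulacion}). By Lemma~\ref{invarianza} and surjectivity, $\Gamma$ is then valid in $\Sigma'$, which puts $\Sigma'\in\mathbb{K}^{\Or}_P$, a contradiction.

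The point to exploit is the one already visible in Example~\ref{fallo}: $\square$ sees only the aggregate image $\mathcal{F}_w(\{t_w\})$, so it cannot tell one function from several. This means a single offending function in $\Sigma'$ can be split into several functions, each with a singleton domain and each aimed at its own isomorphic copy of the target flow.

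\textbf{The uniform construction.} I will build the counterexamples in $\sLO$ and $\sPRE$ first, and obtain $\LO$, $\PRE$, $\PO$ by Convention~(RC).
\begin{itemize}
\item[$\Sigma'$:] Take $T_w=\{a,b\}$ with $a<b$, and a target flow $T_v$ carrying one function $f_{wv}$ defined on both $a$ and $b$.
\item[$\Sigma$:] Keep the same $T_w$ and take two disjoint copies $T_{v_1},T_{v_2}$ of $T_v$. Put $f_{wv_1}(a)=f_{wv}(a)^{(1)}$ and $f_{wv_2}(b)=f_{wv}(b)^{(2)}$, where the superscript denotes the corresponding copy, and leave each undefined elsewhere.
\item[$\varphi$:] Let $\varphi$ be the identity on $T_w$ and the canonical isomorphism $T_{v_i}\to T_v$ on each copy.
\end{itemize}

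Each function in $\Sigma$ has a singleton domain. Such a function is automatically injective and constant, and, vacuously, monotone, antitone, strictly monotone and strictly antitone. It is also non-total, since $T_w$ has two points. The choice of $T_v$ and $f_{wv}$ then varies with the property.
\begin{itemize}
\item For \emph{injectivity}, take $T_v=\{c\}$ and $f_{wv}(a)=f_{wv}(b)=c$. This same $\Sigma'$ handles \emph{constancy}, in the sense that $\Sigma'$ is constant; I adjust it to violate constancy in the next item.
\item For \emph{constancy}, \emph{monotonicity} and \emph{strict monotonicity}, take $T_v=\{d,e\}$ with $d<e$ and $f_{wv}(a)=e$, $f_{wv}(b)=d$. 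This $f_{wv}$ is decreasing, so it violates increasing, strictly increasing and constant.
\item For \emph{antitonicity} and \emph{strict antitonicity}, use the same $T_v$ but with $f_{wv}(a)=d$, $f_{wv}(b)=e$.
\item For \emph{non-totality}, the function $f_{wv}(a)=f_{wv}(b)=c$ is total, so $\Sigma'$ lies outside the class of frames all of whose functions are non-total, while $\Sigma$ lies inside it.
\end{itemize}
All of these are strict linear orders, and their reflexive closures are linear orders, hence also partial orders and preorders. A single pair of frames therefore covers all five order types.

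\textbf{Verifying the bisimulation.} I then check the seven clauses of Definition~\ref{bisimulacion} for $Z=\mathrm{graph}(\varphi)$.
\begin{itemize}
\item Clause 1 holds by taking $h=h'\circ\varphi$.
\item Clauses 2--5 hold because $\varphi$ restricted to each flow is an order isomorphism.
\item For clauses 6--7 at $a$: the only image of $a$ in $\Sigma$ is $f_{wv}(a)^{(1)}$, which $\varphi$ maps to $f_{wv}(a)$, the only image of $a$ in $\Sigma'$. The case of $b$ is symmetric.
\item On the copies $T_{v_i}$ there are no outgoing functions, matching $T_v$.
\end{itemize}
Surjectivity of $\varphi$ is clear. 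Under (RC), reflexive loops are added on both sides and $\varphi$ still preserves and reflects them.

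\textbf{Main obstacle.} I expect the delicate step to be clauses 6--7. The definition quantifies over "some $f'$", so the argument must match aggregate images, not individual functions. I also have to make sure that the split does not accidentally make $\Sigma$ violate a frame constraint: each $f_{wv_i}$ needs a nonempty domain, and there can be at most one function per ordered pair of flows, which is why each split function needs its own copy of the target flow. Both conditions hold in the construction above.
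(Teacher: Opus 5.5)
Your proposal is correct and is essentially the paper's own proof. The counterexamples in Figures~\ref{fig:inj_SPO-SOL}, \ref{fig:inc} and~\ref{fig:dec} are exactly your split-function frames: a two-point flow $a<b$ whose single function in $\Sigma'$ is replaced in $\Sigma$ by two singleton-domain functions into disjoint copies of the target flow, with $Z$ the identity on the source flow and the canonical isomorphism on the copies. The paper also establishes these in $\sPRE$/$\sLO$ and transfers them to $\PRE$, $\PO$, $\LO$ by Convention~(RC). The only cosmetic difference is that you witness strict monotonicity with the decreasing-function frame, whereas the paper uses the non-injective frame of Figure~\ref{fig:inj_SPO-SOL}; both work.
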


\begin{proof}
Undefinability in $\sPRE$ and $\sLO$ is established via the surjective frame homomorphisms illustrated in the following figures:

\medskip\noindent
\emph{Non-totality, injectivity, and strict monotonicity:} Figure~\ref{fig:inj_SPO-SOL}.

\medskip\noindent
\emph{Monotonicity  and constancy:} Figure~\ref{fig:inc}.

\medskip\noindent
\emph{Antitonicity and strict antitonicity:} Figure~\ref{fig:dec}.

\medskip\noindent
By Convention~(RC), the undefinability of these properties (including constancy and all forms of monotonicity/antitonicity) extends to $\PRE, \PO$, and $\LO$. \qed
\end{proof}

\medskip
The preceding results yield the following complete classification of the expressive capacity of $L_{T\times W}$:

\begin{theorem}[Main Classification Theorem]\label{main}
Let $P$ be one of the nine functional properties considered in this paper (totality, non-totality, injectivity, surjectivity, monotonicity, strict monotonicity, antitonicity, strict antitonicity, constancy) and $\Or \in \{\PRE, \sPRE, \PO, \LO, \sLO\}$. The property $P$ is definable in $\Or$ if and only if:
\begin{enumerate}
  \item $P \in \{$totality, surjectivity$\}$, and
  \item $\Or \in \{\LO, \sLO\}$.
\end{enumerate}
\end{theorem}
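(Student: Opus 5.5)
The Main Classification Theorem is an assembly of the two preceding results, so the plan is to verify that Theorem~\ref{tot-surj} and Theorem~\ref{thm:tot-surj} together cover all $9 \times 5 = 45$ pairs $(P,\Or)$, and that each pair receives the verdict the biconditional demands. I will split the argument into the two directions of the ``if and only if''.

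For the ``if'' direction, assume $P \in \{\text{totality}, \text{surjectivity}\}$ and $\Or \in \{\LO,\sLO\}$. The positive part of Theorem~\ref{tot-surj} supplies explicit defining formulas: $(\Tot)$ for totality in $\sLO$ and $(\Tot)^\circ$ in $\LO$, and $(\Surj)$ for surjectivity in both. Here I would check that these formulas define the class $\mathbb{K}^{\Or}_P$ inside $\mathbb{K}^{\Or}$ in the sense of Definition~\ref{def:definability} and the Convention following it. The superscript ``$2$'' in the totality clause should be read in this way, so that the cited result of \cite{Burrieza2003} gives definability relative to the full class $\mathbb{K}^{\Or}$. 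That suffices for these four pairs.

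For the ``only if'' direction, I argue contrapositively and split by which condition fails. If condition 1 fails, then $P$ is one of the seven properties listed in Theorem~\ref{thm:tot-surj}, which asserts undefinability in every $\Or$ considered; this settles $7 \times 5$ pairs. If condition 1 holds but condition 2 fails, then $P$ is totality or surjectivity and $\Or \in \{\PRE,\sPRE,\PO\}$. The negative part of Theorem~\ref{tot-surj} gives undefinability in $\sPRE$ via the p-morphisms of Figures~\ref{fig:total} and~\ref{fig:surj}, and Convention~(RC) transfers these to $\PRE$ and $\PO$; this settles the remaining $2 \times 3$ pairs. In each case, undefinability rests on Lemma~\ref{invarianza}. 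A surjective p-morphism from a frame with property $P$ onto a frame without it preserves validity of every formula, since it induces a functional bisimulation and each coordinate of the target has a preimage. Hence no set $\Gamma$ of formulas can separate the two frames.

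The only delicate step is the transfer by Convention~(RC), which I expect to be the main obstacle. When the strict counterexamples are reflexively closed, I must confirm three things: the frames land in the intended reflexive class (for $\PO$, antisymmetry must survive, which holds if the strict orders are strict partial orders); the map remains a surjective p-morphism for the reflexive relations, which is immediate since reflexive loops are mapped to reflexive loops; and the functional property $P$ and its failure are unchanged. The last point is automatic for totality and surjectivity, since these properties do not mention the order. For monotonicity-type properties, the reflexive closure could in principle alter strictness, but Theorem~\ref{thm:tot-surj} is taken as given here. With these checks done, the 45 cases are exhausted and the biconditional follows.
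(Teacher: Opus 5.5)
Your proposal is correct and matches the paper. The paper obtains the classification simply by combining Theorem~\ref{tot-surj} (the positive cases in $\LO,\sLO$, plus the counterexamples of Figures~\ref{fig:total} and~\ref{fig:surj} transferred by Convention~(RC)) with Theorem~\ref{thm:tot-surj} for the remaining seven properties. You are also right to read the superscript $2$ in the totality clause as referring to the full classes $\mathbb{K}^{\sLO}$ and $\mathbb{K}^{\LO}$; that verification goes through via Theorem~\ref{caracterizafunciones}(1) and Lemma~\ref{lem:image-modal-unified}, exactly as the surjectivity argument does.
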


Table~\ref{tab:definability-summary} summarizes these results. A value of \texttt{Yes} indicates that the property is definable in the corresponding order type, while \texttt{No} indicates it is not.

\begin{table}[ht]
\centering
\small
\caption{Summary of definability in the $L_{T\times W}$ multiflow setting.}
\label{tab:definability-summary}
\renewcommand{\arraystretch}{1.2}
\setlength{\tabcolsep}{0pt}
\begin{tabular*}{0.9\textwidth}{@{\extracolsep{\fill}} l ccccc @{}}
\toprule
\textbf{Functional property} & \textbf{PRE} & \textbf{sPRE} & \textbf{PO} & \textbf{LO} & \textbf{sLO} \\
\midrule
Totality             & No & No & No & Yes & Yes \\
Non-totality         & No & No & No & No  & No  \\
Injectivity          & No & No & No & No  & No  \\
Surjectivity         & No & No & No & Yes & Yes \\
Monotonicity         & No & No & No & No  & No  \\
Strict monotonicity  & No & No & No & No  & No  \\
Antitonicity         & No & No & No & No  & No  \\
Strict antitonicity  & No & No & No & No  & No  \\
Constancy            & No & No & No & No  & No  \\
\bottomrule
\end{tabular*}
\end{table}

\subsection{Strict vs. Standard interpretation: Expressive divergence}

\begin{theorem}\label{thm:non-interdef}
In the purely temporal setting, the standard and strict interpretations of $L_{T\times W}$ are not expressively equivalent. Specifically, there is no formula that remains invariant when switching between the standard ($G,H$) and the strict ($G^\ast, H^\ast$) interpretations across all models.
\end{theorem}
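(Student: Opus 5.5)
I read the statement as asserting that the two interpretations are not interdefinable. There is no map $\tau$ on $L_{T\times W}$ such that, for every model $\mathcal{M}=(\Sigma,h)$, the strict reading of $A$ and the standard reading of $\tau(A)$ denote the same set of coordinates. Likewise there is no such map in the converse direction. The literal wording cannot be intended, since atoms such as $p$ are trivially invariant. The plan is to refute each direction with one witness formula. For each witness I exhibit two models that are bisimilar for one interpretation but disagree on the witness under the other. By Lemma~\ref{invarianza}, every formula under the first interpretation agrees at the related coordinates, so no formula can express the witness.

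\emph{Strict not expressible by standard.} I take $G^\ast p$ as the witness. Let $\Sigma_1$ have a single label $w$ with $T_w=\{a\}$ and $R_w=\{(a,a)\}$. Let $\Sigma_2$ have a single label $v$ with $T_v=\{b,c\}$ and $R_v=T_v\times T_v$, a two-element cluster (a preorder). Both frames have empty $\mathcal{F}$. In both models set $h(p)=\varnothing$. The relation $Z=\{(a,b),(a,c)\}$ satisfies the clauses of Definition~\ref{bisimulacion} for the relations $R$: every point is $R$-related to every point on both sides, all points agree on $p$, and the functional clauses are vacuous. Hence, under the standard interpretation, any formula $B$ holds at $a$ iff it holds at $b$. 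Under the strict reading, however, $R_w^\bullet(a)=\varnothing$, so $a\in h(G^\ast p)$. On the other side $R_v^\bullet(b)=\{c\}$ and $c\notin h(p)$, so $b\notin h(G^\ast p)$. Therefore no standard formula $B$ is equivalent to $G^\ast p$.

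\emph{Standard not expressible by strict.} I take $Gp$ as the witness. Let $\Sigma_1$ be one reflexive point $a$ and $\Sigma_2$ one irreflexive point $b$, both with $h(p)=\varnothing$ and empty $\mathcal{F}$. The irreflexive cores $R^\bullet$ are empty in both frames. So $Z=\{(a,b)\}$ is a functional bisimulation with respect to $R^\bullet$, and every strict formula agrees at $a$ and $b$. Yet $a\notin h(Gp)$, because $a\in R(a)$ and $p$ fails at $a$, while $b\in h(Gp)$ vacuously.

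The main obstacle is a legitimacy check on Lemma~\ref{invarianza}, which is stated only for the standard clauses. I must confirm that the same structural induction goes through when $G,H$ are evaluated over $R^\bullet$ and the back-and-forth clauses are stated for $R^\bullet$; this is routine. I must also confirm that frames with empty $\mathcal{F}$ are admissible under Definition~\ref{definition1}. If they are not, I will add to each side a single point in a fresh flow and total functions mapping every point to it. This preserves both bisimulations.
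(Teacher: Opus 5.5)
Your proposal is correct and is essentially the paper's proof: the same witnesses $Gp$ and $G^\ast p$, the same reflexive-versus-irreflexive singleton pair for one direction, and for the other direction a two-point cluster where the paper uses the two-point chain $t'\le u'$ (which has the small bonus of keeping both models inside $\LO$). Set every atom false, not just $p$, so that clause~1 of Definition~\ref{bisimulacion} holds. Your non-interdefinability reading is exactly what the paper's argument establishes. Your ``main obstacle'' also dissolves: $G^\ast,H^\ast$ on a frame are just $G,H$ on the frame with each $R_w$ replaced by $R_w\setminus\{(t,t)\mid t\in T_w\}$, so Lemma~\ref{invarianza} applies verbatim.
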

\begin{proof}
\noindent\emph{(i) The standard interpretation cannot be captured by the strict interpretation.}
Let $\mathcal{M}$ be a model with a single point $t$ such that $tRt$, and let $\mathcal{N}$ be a model with a single point $t'$ such that $\neg(t'Rt')$. In both models, every atom is false. The irreflexive core is empty ($R^\bullet(t)=\varnothing$ and ${R'}^\bullet(t')=\varnothing$). By structural induction, every formula evaluated under the strict interpretation ($G^\ast,H^\ast$) has the same truth value at $t$ and $t'$ (the only subformulas are $\bot$, atoms, boolean connectives, and $G^\ast,H^\ast$; all are interpreted identically). However, the formula $Gp$ under the standard interpretation is false at $t$ (because $tRt$ and $p$ is false) but true at $t'$ (vacuously, as $R'(t')=\varnothing$).

\smallskip
\noindent\emph{(ii) The strict interpretation cannot be captured by the standard interpretation.}
Let $\mathcal{M}$ be the same model as in (i) (a single reflexive point $t$ with all atoms false). Let $\mathcal{N}^\prime$ be a model consisting of two points $t',u'$ such that $t'R't'$, $u'R'u'$, and $t'R'u'$, with all atoms false at both points. A straightforward structural induction shows that $t$ (in $\mathcal{M}$) and $t'$ (in $\mathcal{N}^\prime$) satisfy exactly the same formulas under the standard interpretation. Indeed, the only difference is the extra point $u'$ in $\mathcal{N}^\prime$, but since all atoms are false there, its presence does not affect the truth of any formula. Consequently, $t$ and $t'$ are indistinguishable for $L(G,H)$. On the other hand, they differ under the strict interpretation: $G^\ast p$ is true at $t$ (since $R^\bullet(t)=\varnothing$) but false at $t'$ because $u'$ is a strict successor of $t'$ and $p$ is false there. \qed
\end{proof}

\subsection{Definability under the strict interpretation}\label{subsec:strict-results-general}

We now examine whether adopting the \emph{strict interpretation} ($G^*, H^*$) of our temporal operators alters the definability results in the multiflow setting. 

Regarding positive results, the properties of totality and surjectivity remain definable in both $\mathsf{LO}$ and $\mathsf{sLO}$ frameworks, as shown in Theorem~\ref{tot-surj}. 

As for the negative results, the counterexamples in Figures~\ref{fig:total}--\ref{fig:dec} and~\ref{fig:inj_SPRE2} consist of strict orders where $R_w = R^\bullet_w$. In these models, the standard interpretation and the strict interpretation are semantically equivalent. Although Convention~(RC) allows these frames to represent reflexive closures, the strict interpretation ensures that $G^*,H^*$ quantify exclusively over the underlying relation $R^\ast_w$, ignoring any reflexive loops. This ensures that it evaluates the same structure as in the original strict counterexamples.

Consequently, the surjective p-morphisms established for the standard interpretation remain valid for the strict interpretation. All undefinability results therefore transfer directly across every order type, leaving Theorem~\ref{main} and Table~\ref{tab:definability-summary} unchanged. The expressive limitations are inherent to the multiflow structure, not the choice of Priorean or strict semantics.

\section{Structural Refinements: Minimal Functional Frames}
\label{sec:minimal}

The findings in Section~\ref{sec:ordered-multiflow} demonstrate that definability within the multiflow setting is severely constrained by its structural complexity. This section investigates whether these limitations can be overcome by imposing a simpler structure on the frames while keeping the language $L_{T\times W}$ fixed.

To this end, we introduce \emph{minimal functional frames} (the $\Or^2$ family)---structures restricted to at most two flows and a single accessibility function. This refinement effectively transforms our functional operators $\square$ and $\lozenge$ from second-order operators (which quantify over an arbitrary family of functions) into first-order ones, as they now refer to a single, fixed mapping. 

Our analysis follows a two-step approach:
\begin{enumerate}
    \item We examine definability in minimal frames using the standard pair $(G, H)$, establishing what gains can be achieved by structural simplification alone.
    \item We evaluate the same minimalist environment using the strict pair $(G^\ast, H^\ast)$.
\end{enumerate}
This allows us to determine whether the choice of temporal  interpretation---which proved irrelevant in the general multiflow setting---becomes a decisive factor for expressivity once structural noise is eliminated.

\subsection{Definability in Minimal Frames under the Standard Temporal Interpretation}
\label{subsec:minimal-prior}

We now examine minimal functional frames, a setting where definability is recovered for a significant range of properties. By reducing structural complexity to the $\Or^2$ family, these frames isolate functional behavior and allow for a focused analysis of the interaction between order types and the \emph{standard} temporal operators $G$ and $H$. While the general multiflow setting restricted definability to just two properties---totality and surjectivity---and only within linear orders (strict or not), the minimal settingt enables the definability of several additional properties.

\begin{definition}
A functional frame $\Sigma = (W, \mathcal{T}, \mathcal{F})$ is \emph{minimal} if $|W| \leq 2$ and $|\mathcal{F}| \leq 1$. If $\Sigma$ is minimal, then $\mathcal{M} = (\Sigma, h)$ is a \emph{minimal functional model}. For any order type $\Or \in \{\PRE, \sPRE, \PO, \LO, \sLO\}$, we denote by $\Or^2$ the class of minimal functional frames of type $\Or$.
\end{definition}

\medskip
\noindent
In what follows we assume the frame contains at least one function; the case $\mathcal{F}=\varnothing$ does not affect the definability results. Thus $\mathcal{F}_w = \{f_{ww'}\}$ for some $w' \in W$, and the correspondences 
from Lemma~\ref{lem:image-modal-unified} simplify to:

\begin{corollary}\label{cor:singleton-Fw}
Let $\Sigma=(W,\mathcal T,\mathcal{F})$ be a functional frame with $\mathcal{F}_w = \{f_{ww'}\}$. For every $X \subseteq T_w$, interpretation $h$, and formula $A$, the following equivalences hold:
\begin{itemize}
\item For $\mathsf{sPRE}$ and $\mathsf{sLO}$:
\begin{enumerate}
  \item $X \subseteq h(\square A)$ if and only if $f_{ww'}(X) \subseteq h(A)$.
  \item $f_{ww'}(X)\!\uparrow^{\ast} \subseteq h(A)$ if and only if $f_{ww'}(X) \subseteq h(GA)$.
  \item $f_{ww'}(X)\!\downarrow^{\ast} \subseteq h(A)$ if and only if $f_{ww'}(X) \subseteq h(HA)$.
\end{enumerate}
\item For $\mathsf{PRE}, \mathsf{PO}$, and $\mathsf{LO}$: The same equivalences hold, but in (2) and (3) replace $\uparrow^\ast$ by $\uparrow$ and $\downarrow^{\ast}$ by $\downarrow$.
\end{itemize}
\end{corollary}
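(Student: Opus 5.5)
The statement is Corollary~\ref{cor:singleton-Fw}, and the plan is to obtain it as a direct specialisation of Lemma~\ref{lem:image-modal-unified}. The single set-theoretic fact needed is that, under the hypothesis $\mathcal{F}_w=\{f_{ww'}\}$, the aggregate image collapses to the image of the unique function:
\[
\mathcal{F}_w(X)=\bigcup_{g\in\mathcal{F}_w} g(X)=f_{ww'}(X)
\qquad\text{for every } X\subseteq T_w .
\]
This follows at once from the definition of the aggregate image in Remark~\ref{remark1}(i). Since the union has only one member, the disjointness clause of that remark plays no role here.

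With this identity, the three equivalences are rewritings. For (1), Lemma~\ref{lem:image-modal-unified}(1) gives $X\subseteq h(\square A)$ iff $\mathcal{F}_w(X)\subseteq h(A)$, and we substitute $f_{ww'}(X)$ for $\mathcal{F}_w(X)$. For (2) and (3), we note that the operators $\uparrow^\ast,\downarrow^\ast$ (respectively $\uparrow,\downarrow$) on a subset of $T_{w'}$ depend only on that subset. Hence $\mathcal{F}_w(X)\!\uparrow^\ast=f_{ww'}(X)\!\uparrow^\ast$, and similarly for the other operators. Parts (2) and (3) of the lemma then transcribe verbatim.

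The order-type split needs no new work. In the strict cases ($\sPRE$, $\sLO$) we use the strict arrows of the lemma. In the reflexive cases ($\PRE$, $\PO$, $\LO$) we use the non-strict ones, exactly as the lemma's second bullet prescribes, because the substitution is insensitive to which arrow is used.

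There is no real obstacle. The only point that needs care is that the hypothesis is stated per source flow $w$: $\mathcal{F}_w$ must be a singleton for the fixed $w$ with $X\subseteq T_w$. This holds automatically in minimal frames whenever $w$ is the source of the unique function. If $w$ is not a source, $\mathcal{F}_w=\varnothing$ and the hypothesis simply does not apply, so no case analysis is required.
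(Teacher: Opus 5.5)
Your proposal is correct and follows essentially the same route as the paper. The paper presents this corollary as the direct specialisation of Lemma~\ref{lem:image-modal-unified}: when $\mathcal{F}_w=\{f_{ww'}\}$, one has $\mathcal{F}_w(X)=f_{ww'}(X)$, so all three equivalences (strict and non-strict arrows alike) transcribe verbatim.
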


\begin{theorem}[Definability in Minimal Functional Frames]\label{prop:minimal-definability}
The definability of functional properties in minimal frames depends on the underlying order type:
\begin{enumerate}
    \item Each of \emph{totality}, \emph{non-totality}, \emph{surjectivity}, and \emph{constancy} is definable only in $\LO^2$ and $\sLO^2$.
    \item \emph{Injectivity} is definable only in $\sLO^2$.
    \item \emph{Monotonicity} and \emph{antitonicity} are definable in all minimal order types.
    \item Each of \emph{strict monotonicity} and \emph{strict antitonicity} is definable only in $\sPRE^2$ and $\sLO^2$.
\end{enumerate}
\end{theorem}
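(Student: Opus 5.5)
The proof splits into positive cases, where I give a defining formula for each property and order type, and negative cases, where I construct counterexamples. In every positive case the argument follows the proof of Theorem~\ref{tot-surj}. I turn the per-function characterisations of Theorems~\ref{fundamentos} and~\ref{local} into modal formulas using Corollary~\ref{cor:singleton-Fw}. The right-to-left direction is by contraposition: set $h(p)$ equal to the relevant image set, or its complement, at a witness $t_w$. Minimality is essential here. Because $\mathcal{F}_w=\{f_{ww'}\}$, $\square$ speaks about a single function, and the aggregate obstruction of Example~\ref{fallo} disappears.

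\textbf{Positive cases.} For totality and surjectivity in $\LO^2,\sLO^2$ I reuse $(\Tot)$, $(\Tot)^\circ$ and $(\Surj)$ unchanged. With one function, Theorem~\ref{fundamentos}(1),(3) coincide with the aggregate versions. For non-totality I note that, with a single function, $\Diamond\top$ is true exactly at $\Dom(f)$. In a linear order the flow $T_w$ is $t\!\downarrow^\ast\cup\{t\}\cup t\!\uparrow^\ast$, so $P\Box\bot\lor\Box\bot\lor F\Box\bot$ (taking $\Box\bot$ alone in the reflexive case) says that some point of $T_w$ lies outside the domain. A frame-level condition should not depend on $w$; the domain side must be nonempty, and the other flow, if $w\neq w'$, has no function, so $\Box\bot$ holds there trivially. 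This is the delicate bookkeeping step. The same argument handles a possible second flow throughout.

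For constancy I use Theorem~\ref{fundamentos}(6) and build a formula of the form $\Diamond p\to(H\square(p\land Hp\land Gp)\land G\square(\ldots))$. The idea is that the images of other points must equal $f(t)$, which I express by requiring them to lie in $p$ and to be strictly unrelated to the image of $t$. For monotonicity I take $\square(Gp)\to G\square p$ in the reflexive types; this follows from Theorem~\ref{local}(1) through Corollary~\ref{cor:singleton-Fw}, with $f(a\!\uparrow^\ast)\subseteq f(a)\!\uparrow$. In the strict types I take $\square(p\land Gp)\to G\square p$. Antitonicity is the same with $H$. For strict monotonicity in $\sPRE^2,\sLO^2$ I use $\square Gp\to G\square p$, since there $G$ ranges over $\uparrow^\ast$. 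Injectivity in $\sLO^2$ follows from Theorem~\ref{fundamentos}(2) via $\square(Hp\land Gp)\to(H\square p\land G\square p)$ restricted to $\Diamond\top$.

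\textbf{Negative cases.} I give surjective p-morphisms between finite minimal frames. In non-linear orders I use two incomparable points collapsing onto one, or a branching flow collapsing onto a chain. For injectivity and the strict variants in reflexive orders, the key observation is that the standard $G$ cannot see $f(a)<f(b)$ versus $f(a)=f(b)$. A two-point reflexive chain mapped injectively and increasingly should therefore be p-morphic to a collapsed image in which $f$ is constant, because the collapse of $u\le u'$ onto a single reflexive point is a p-morphism, as in Theorem~\ref{thm:non-interdef}(ii).

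\textbf{Main obstacle.} The hardest step is constancy in $\PRE^2,\sPRE^2,\PO^2$. As the introduction warns, surjective p-morphisms preserve constancy in minimal frames. I would instead exhibit two frames, one with $f$ constant and one without, and prove directly, by induction on formulas and for every valuation, that validity coincides. The natural choice is two incomparable source points mapping to one target versus two incomparable targets, using uniform valuations on incomparable points.
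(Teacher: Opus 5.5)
Several of your defining formulas do not define the intended classes, because the accessibility functions are partial. A $\square$-antecedent is vacuously true at every $t_w\notin\Dom(f_{ww'})$. Validity of $\square Gp\to G\square p$, of $\square(p\land Gp)\to G\square p$, or of any $\square B\to G\square p$ therefore also forces, via $h(p)=\varnothing$, that no point above a non-domain point lies in the domain. Take $T_w=\{a,b\}$ with $a<b$ (or its reflexive closure), $T_{w'}=\{c\}$, $f(b)=c$ and $a\notin\Dom(f)$. Here $f$ is trivially increasing and strictly increasing, yet $h(p)=\varnothing$ falsifies all your monotonicity-type formulas at $a$. The mirror frame ($f(a)=c$, $b\notin\Dom(f)$) refutes every $\square B\to H\square p$ at $b$, even though $f$ is (strictly) decreasing. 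You applied the right repair for injectivity (guarding with $\lozenge\top$), and it is needed for all four order-preservation properties. With $\lozenge$ in the antecedent you get the paper's $\lozenge Gp\to G\square p$, $\lozenge(p\land Gp)\to G\square p$ and their $H\square p$ counterparts.

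Your constancy formula $\lozenge p\to(H\square(p\land Hp\land Gp)\land G\square(p\land Hp\land Gp))$ fails in the opposite direction. Take the constant frame $T_w=\{a<b\}$, $T_{w'}=\{c<d\}$, $f(a)=f(b)=c$ (or its reflexive closure) with $h(p)=\{c\}$: then $\lozenge p$ is true at $a$, but $c\notin h(Gp)$. The conjuncts $Hp\land Gp$ constrain the neighbourhood of the common image, not its identity, and nothing of the kind is needed. The formula $\lozenge p\to(H\square p\land G\square p)$ works, because in a linear order $t_w\!\downarrow^\ast\cup t_w\!\uparrow^\ast$ contains every point other than $t_w$. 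The countermodel $h(p)=\{f(t_w)\}$ then already forces every other image to equal $f(t_w)$. Also, if ``taking $\square\bot$ alone'' means using the formula $\square\bot$ in $\LO^2$, that formula is valid in no frame containing a function. The three-disjunct non-totality formula works in $\LO^2$ as well.

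On the negative side, the moves you describe (collapsing incomparable points, or collapsing a branching flow onto a chain) cannot separate totality, non-totality or surjectivity. By clauses 6 and 7 of Definition~\ref{bisimulacion}, a p-morphism preserves and reflects membership in the domain, and it carries hit points to hit points. So a surjective p-morphism that sends the domain flow of $f$ onto that of $f'$, and the codomain flow onto the codomain flow, transfers all three properties from $\Sigma$ to $\Sigma'$.

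The missing idea is that $L_{T\times W}$ cannot see which flow a coordinate belongs to, so the p-morphism may cross flows. In Figure~\ref{fig:total}, the function-less point $3_v$ of the codomain flow is sent to the non-domain point $5_{w'}$ of the domain flow. In Figure~\ref{fig:surj}, the non-domain point $1_w$ is sent to the unhit point $5_{v'}$. The rest of your plan agrees with the paper:
\begin{itemize}
\item reusing $(\Tot)$, $(\Tot)^\circ$ and $(\Surj)$;
\item the guarded injectivity formula;
\item collapsing a reflexive chain for injectivity and the strict variants in $\PRE^2,\PO^2,\LO^2$;
\item collapsing incomparable points for injectivity in $\sPRE^2$;
\item the valuation-copying argument for constancy.
\end{itemize}
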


\begin{proof}
The theorem follows from Lemmas~\ref{lem:total-surj-const} to~\ref{lem:strictness} below. \qed
\end{proof}

\begin{lemma}[Totality, Non-totality, Surjectivity, and Constancy]\label{lem:total-surj-const}
Each of totality, non-totality, surjectivity, and constancy is definable only in $\LO^2$ and $\sLO^2$.
\end{lemma}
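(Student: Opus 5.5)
The plan is to split the statement into a positive part and a negative part. The positive part gives explicit defining formulas in $\LO^2$ and $\sLO^2$. The negative part shows undefinability in $\PRE^2$, $\sPRE^2$ and $\PO^2$.

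\textbf{Positive part.} In a minimal frame the aggregate $\mathcal{F}_w$ collapses to the single function $f_{ww'}$. Hence Corollary~\ref{cor:singleton-Fw} replaces Lemma~\ref{lem:image-modal-unified}, and the per-function characterisations of Theorem~\ref{fundamentos} become available. The flows with no outgoing function satisfy $\square\bot$ everywhere, so every formula below is trivially valid there; only the source flow $T_w$ needs checking.
\begin{itemize}
\item \emph{Totality and surjectivity.} I reuse $(\Tot)$/$(\Tot)^\circ$ and $(\Surj)$. Their correctness follows exactly as in Theorem~\ref{tot-surj}, now via items (1) and (3) of Theorem~\ref{fundamentos} instead of Theorem~\ref{caracterizafunciones}.
\item \emph{Non-totality.} Here "all functions are non-total" means that the single $f$ misses some point of $T_w$. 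Linearity lets $P,F$ reach the whole flow, so I propose
\[
\lozenge\top \to (P\square\bot \lor \square\bot \lor F\square\bot)
\]
for $\sLO^2$, and $\lozenge\top \to (P\square\bot \lor F\square\bot)$ for $\LO^2$. At points outside $\Dom(f)$, and at points of flows without functions, these formulas are trivially true. At a point of $\Dom(f)$ they assert that some point of $T_w$ lies outside $\Dom(f)$. Conversely, if $f$ is total, no point satisfies $\square\bot$ in $T_w$, so the formula fails at any point of the (nonempty) domain.
\item \emph{Constancy.} By Theorem~\ref{fundamentos}(6), I propose
\[
(\lozenge\top\land\square p)\to(H\square p\land G\square p)
\]
(with $H,G$ reflexive in $\LO^2$, where the antecedent then already gives the present point). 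Soundness: if $f$ is constant, all images equal $f(a)$. Refutation: if $f(a)\neq f(b)$, set $h(p)=\{f(a)\}$; this falsifies the formula at $a$, because $b$ is comparable to $a$ by linearity.
\end{itemize}

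\textbf{Negative part for totality, non-totality and surjectivity.} I will give finite surjective p-morphisms between minimal frames with, for example, two incomparable source points. These induce functional bisimulations (Lemma~\ref{invarianza}), and Convention~(RC) transfers each $\sPRE^2$ example to $\PRE^2$ and $\PO^2$. For instance, for totality take a source flow $\{a,b\}$ with $a,b$ incomparable in $\Sigma'$, collapsing onto $\Sigma$ where only $a$ survives. The details are routine variants of Figures~\ref{fig:total} and~\ref{fig:surj}, adapted so that only one function is present.

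\textbf{Main obstacle: undefinability of constancy.} As announced in the introduction, constancy is invariant under surjective p-morphisms in minimal frames, so that method fails. I would instead prove semantic equivalence of two frames, exploiting lack of connectivity.
\begin{itemize}
\item $\Sigma_1$ has source flow $\{a,b\}$ with $a,b$ incomparable, target $\{c,d\}$ discrete, and non-constant $f(a)=c$, $f(b)=d$.
\item $\Sigma_2$ has the same source flow, target $\{c\}$, and constant $f(a)=f(b)=c$.
\end{itemize}
The map $c,d\mapsto c$, identity elsewhere, is a surjective p-morphism $\Sigma_1\to\Sigma_2$, so validity in $\Sigma_1$ transfers to $\Sigma_2$. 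For the other direction I will verify, by structural induction, that truth at a point depends only on the subframe it generates via $R$, $R^{-1}$ and $f$ (a generated-subframe lemma for functional frames). Every refuting point of a model on $\Sigma_1$ generates $\{a,c\}$, $\{b,d\}$, $\{c\}$ or $\{d\}$. Each of these is isomorphic to a generated subframe of $\Sigma_2$, so the refutation transfers. Hence the two frames validate the same formulas, yet only $\Sigma_2$ has a constant function; RC then covers $\PRE^2$ and $\PO^2$. The delicate point I expect is checking this generated-subframe lemma carefully, and making sure the p-morphism clauses 6--7 of Definition~\ref{bisimulacion} hold when two images are identified.
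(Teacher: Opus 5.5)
Your proposal is correct and follows the paper's proof. It uses the same defining formulas (your $\lozenge\top$-guarded versions for non-totality and constancy are equivalent, on the frames in question, to the paper's $P\square\bot\lor\square\bot\lor F\square\bot$ and $\lozenge p\to(H\square p\land G\square p)$), the counterexamples of Figures~\ref{fig:total} and~\ref{fig:surj} with Convention~(RC), and, for constancy, exactly the paper's pair of frames, transferring refutations from the non-constant frame to the constant one. Your generated-subframe phrasing is the paper's pointwise copying of the valuation, and the extra p-morphism direction you add is not needed.

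One slip: your illustrative totality example (the flow $\{a,b\}$ of $\Sigma'$ collapsing onto $\Sigma$) does not witness what you need. Validity passes only from the source of a surjective p-morphism to its image, and a point outside the domain can only be hit by a point with no image. So the total frame must be the source, and the image's non-domain point must come from an extra point of that frame's target flow, exactly as in Figure~\ref{fig:total}. Collapsing a non-total $\{a,b\}$ onto a flow where only $a$ survives instead witnesses undefinability of non-totality.
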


\begin{proof}
We first address definability, then undefinability.

\textsc{Definability.}

\begin{itemize}
    \item \emph{Totality.} As in Theorem~\ref{tot-surj}, the formula $(\Tot)$ defines $\mathbb{K}^{\sLO^2}_{\tot}$ in $\mathbb{K}^{\sLO^2}$; its simplified version $(\Tot)^\circ$ defines $\mathbb{K}^{\LO^2}_{\tot}$ in $\mathbb{K}^{\LO^2}$. (Note that $(\Tot)$ also works in $\LO^2$, where the conjunct $p$ is redundant, but $(\Tot)^\circ$ does not hold in $\sLO^2$.)
        \item \emph{Surjectivity.} As in Theorem~\ref{tot-surj}, the formula $(\Surj)$ defines $\mathbb{K}^{\mathsf{sLO}^2}_{\surj}$ in $\mathbb{K}^{\mathsf{sLO}^2}$ and $\mathbb{K}^{\mathsf{LO}^2}_{\surj}$ in $\mathbb{K}^{\mathsf{LO}^2}$.
\end{itemize}

The verification follows Theorem~\ref{tot-surj}, adapted via Corollary~\ref{cor:singleton-Fw} to the minimal setting. Below, we provide the details for the specific cases of non-totality, constancy, and injectivity, introducing their defining formulas in each section.

\medskip\noindent
\emph{Non-totality.} 
The following formula defines $\mathbb{K}^{\mathsf{sLO}^2}_{\ntot}$ in $\mathbb{K}^{\mathsf{sLO}^2}$:
\[
(\Ntot)^2:\ P\square\bot \lor \square\bot \lor F\square\bot.
\]
($\Rightarrow$) Assume $\Sigma \in \mathbb{K}^{\mathsf{sLO}^2}_{\ntot}$ and let $f_{ww'}$ be its unique function. There exists $t_w$ with $f_{ww'}(\{t_w\}) = \varnothing$. Let $(\Sigma, h)$ be any model,  then $t_w \in h(\square\bot)$. In a strict linear order, every $t'_w$ satisfies $t'_w < t_w$, $t'_w = t_w$, or $t'_w > t_w$. Thus, $t'_w \in h(P\square\bot)$, $t'_w \in h(\square\bot)$, or $t'_w \in h(F\square\bot)$, respectively. Hence $(\Ntot)^2$ is valid in $\Sigma$.

($\Leftarrow$) If $\Sigma \notin \mathbb{K}^{\mathsf{sLO}^2}_{\ntot}$, then for every $t_w$, $f_{ww'}(\{t_w\}) \neq \varnothing$. Consequently, for any model $(\Sigma, h)$, we have $t_w \in h(\neg\square\bot)$ for all $t_w$. Since the frame is a strict linear order, $t_w\!\downarrow^\ast \cup\, \{t_w\} \,\cup t_w\!\uparrow^\ast = T_w \subseteq h(\neg\square\bot)$. It follows that $t_w \notin h(P\square\bot \lor \square\bot \lor F\square\bot)$, and therefore $(\Ntot)^2$ is invalid in $\Sigma$.

\medskip
\noindent

\emph{Constancy.}
The following formula defines $\mathbb{K}^{\mathsf{sLO}^2}_{\con}$ in $\mathbb{K}^{\mathsf{sLO}^2}$:
\[
(\Con)^2:\ \lozenge p \to (H\square p \land G\square p).
\]
Let $\Sigma \in \mathbb{K}^{\mathsf{sLO}^2}$ and let $f_{ww'}$ its unique function. By Theorem~\ref{fundamentos}(6), $f_{ww'}$ is constant iff for every $t_w \in \mathrm{Dom}(f_{ww'})$,
\[
f_{ww'}(t_w\!\downarrow^\ast) \cup f_{ww'}(t_w\!\uparrow^\ast) \subseteq \{f_{ww'}(t_w)\}. \qquad (\con)
\]

($\Rightarrow$) Suppose $\Sigma \in \mathbb{K}^{\mathsf{sLO}^2}_{\con}$.  Let $(\Sigma, h)$ be any model and suppose $t_w \in h(\lozenge p)$. Since $f_{ww'}$ is constant, by $(\con)$ we have
\[
f_{ww'}(t_w\!\downarrow^\ast) \cup f_{ww'}(t_w\!\uparrow^\ast) \subseteq \{f_{ww'}(t_w)\} \subseteq h(p). \tag{1}
\]
Applying Corollary~\ref{cor:singleton-Fw}(1) with $X = t_w\!\downarrow^\ast \cup\, t_w\!\uparrow^\ast$ yields
\[
t_w\!\downarrow^\ast \cup\,  t_w\!\uparrow^\ast \subseteq h(\square p). \tag{2}
\]
From (2), it follows that $t_w\!\downarrow^\ast \subseteq h(\square p)$ and $t_w\!\uparrow^\ast \subseteq h(\square p)$. Thus, Corollary~\ref{cor:singleton-Fw}(2) and (3) yield $t_w \in h(H\square p)$ and $t_w \in h(G\square p)$, respectively. Hence
\[
t_w \in h(H\square p \land G\square p),
\]
and therefore $(\Con)^2$ is valid in $\Sigma$.

($\Leftarrow$) Suppose $\Sigma \notin \mathbb{K}^{\mathsf{sLO}^2}_{\con}$. Because $f_{ww'}$ is not constant, Theorem~\ref{fundamentos}(6) gives a point $t_w$ such that $(\con)$ fails. Consider a model $(\Sigma, h)$, where  $h(p) := \{f_{ww'}(t_w)\}$. Then $t_w \in h(\lozenge p)$ (since $f_{ww'}(t_w) \in h(p)$). By the failure of $(\con)$, we have
\[
f_{ww'}(t_w\!\downarrow^\ast) \cup f_{ww'}(t_w\!\uparrow^\ast) \not\subseteq \{f_{ww'}(t_w)\}.  \tag{3}
\]
Applying Corollary~\ref{cor:singleton-Fw}(1) yields $t_w\!\downarrow^\ast \cup \, t_w\!\uparrow^\ast \not\subseteq h(\square p)$. Thus, by Corollary~\ref{cor:singleton-Fw}(2) and (3), $t_w \notin h(H\square p)$ or $t_w \notin h(G\square p)$, whence $t_w \notin h(H\square p \land G\square p)$. Therefore $(\Con)^2$ is invalid in $\Sigma$.

\medskip
\textsc{ii. Undefinability.}
For totality, non-totality, and surjectivity, the counterexamples from the general setting (Figures~\ref{fig:total} and~\ref{fig:surj}) remain valid in the minimal setting. They witness undefinability in $\mathsf{sPRE}^2$; by Convention~(RC), these counterexamples also establish undefinability in $\mathsf{PRE}^2$ and $\mathsf{PO}^2$.

Constancy requires a different approach because surjective p-morphisms preserve constancy, as the following claim shows.

\begin{claim}
Constancy is preserved by surjective p-morphisms for partial functions.
\end{claim}
\emph{Proof of the Claim}
Let \(f \colon P_1 \rightharpoonup P_2\), \(f' \colon P'_1 \rightharpoonup P'_2\) with \(\operatorname{Dom}(f) \neq \varnothing\), and let \(h_1 \colon P_1 \to P'_1\), \(h_2 \colon P_2 \to P'_2\) satisfy:
\begin{enumerate}
    \item \(h_1|_{\operatorname{Dom}(f)}\) is onto \(\operatorname{Dom}(f')\);
    \item \(f'(h_1(t)) = h_2(f(t))\) for every \(t \in \operatorname{Dom}(f)\).
\end{enumerate}
Assume \(f\) constant but \(f'\) not. Then there exist \(a'_1, a'_2 \in \operatorname{Dom}(f')\) with \(f'(a'_1) \neq f'(a'_2)\). By (1) choose \(a_1, a_2 \in \operatorname{Dom}(f)\) with \(h_1(a_i) = a'_i\) (\(i=1,2\)). Condition (2) gives \(f'(a'_i) = h_2(f(a_i))\). Since \(f\) constant, \(f(a_1)=f(a_2)=:y\). Thus \(h_2(y)=f'(a'_1)=f'(a'_2)\), contradiction. Hence \(f'\) must be constant.

\medskip
Since the p-morphism method fails to separate these frames, we establish their indistinguishability by a direct valuation mapping. For any formula falsifiable in a model over \(\Sigma'\), we construct a model over \(\Sigma\) that falsifies the same formula by copying the valuation pointwise, as described below.

Consider the following two minimal frames, where temporal relations are empty (reflexively closed for non-strict variants):
\[
\begin{aligned}
\Sigma &: \; T_w = \{1_w, 2_w\},\; T_v = \{3_v\},\; \mathcal{F} = \{f_{wv}\},\; f_{wv}(1_w) = f_{wv}(2_w) = 3_v,\\[2pt]
\Sigma' &: \; T'_{w'} = \{4_{w'}, 5_{w'}\},\; T'_{v'} = \{6_{v'}, 7_{v'}\},\; \mathcal{F}' = \{f'_{w'v'}\},\\
      &\qquad f'_{w'v'}(4_{w'}) = 6_{v'},\; f'_{w'v'}(5_{w'}) = 7_{v'}.
\end{aligned}
\]
We now show that if a formula \(A\) is falsifiable in a model over \(\Sigma'\), then it is also falsifiable in a model over \(\Sigma\). Let \(M_2 = (\Sigma', h')\) with \(M_2, x' \not\models A\). Build \(M_1 = (\Sigma, h)\) by case analysis:

We build \(M_1 = (\Sigma, h)\) by transferring the valuation pointwise. 
For every propositional variable \(p\) we define \(h(p)\) as follows, according to the position of the falsifying point \(x'\) in \(M_2\):

\begin{itemize}
\item If \(x' = 4_{w'}\): set \(1_w \in h(p)\) iff \(4_{w'} \in h'(p)\), and \(3_v \in h(p)\) iff \(6_{v'} \in h'(p)\).
\item If \(x' = 5_{w'}\): set \(1_w \in h(p)\) iff \(5_{w'} \in h'(p)\), and \(3_v \in h(p)\) iff \(7_{v'} \in h'(p)\).
\item If \(x' = 6_{v'}\) or \(x' = 7_{v'}\): set \(3_v \in h(p)\) iff \(x' \in h'(p)\).
\end{itemize}

Values at the remaining coordinates may be chosen arbitrarily. By structural induction on \(A\), \(M_1\) falsifies \(A\) at the corresponding coordinate. Consequently, no \(L_{T\times W}\)-formula can distinguish \(\Sigma\) from \(\Sigma'\). Since \(\Sigma\) has a constant function and \(\Sigma'\) has not, constancy is not definable in \(\sPRE^2\). The same construction with reflexive closures yields undefinability in \(\PRE^2\) and \(\PO^2\).\qed
\end{proof}

\begin{lemma}[Injectivity]\label{lem:injectivity}
Injectivity is definable only in $\sLO^2$.
\end{lemma}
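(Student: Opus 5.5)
We split the argument into a definability part for $\sLO^2$ and an undefinability part for the remaining four minimal order types.

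\textbf{Definability in $\sLO^2$.} We propose the formula
\[
(\Inj)^2:\ \square p \to (H\square\neg p \land G\square\neg p)\ \text{ read with } p \text{ replaced by } \ldots
\]
It is cleaner to use the characterisation of Theorem~\ref{fundamentos}(2). For $t_w\in\Dom(f_{ww'})$, injectivity says $f_{ww'}(t_w\!\downarrow^\ast)\cup f_{ww'}(t_w\!\uparrow^\ast)$ avoids the single point $f_{ww'}(t_w)$. In a strict linear target, $f(t)\!\downarrow^\ast\cup f(t)\!\uparrow^\ast$ is exactly the complement of $\{f(t)\}$. So we take
\[
(\Inj)^2:\ \lozenge(p\land H\neg p\land G\neg p)\to (H\square\neg p\land G\square\neg p).
\]
For ($\Rightarrow$), suppose $t_w\in h(\lozenge(p\land H\neg p\land G\neg p))$. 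Then $f(t_w)$ is the unique $p$-point of $T_{w'}$, by trichotomy. By injectivity, every $s\neq t_w$ in $\Dom(f)$ has $f(s)\neq f(t_w)$, so $f(s)\in h(\neg p)$. Corollary~\ref{cor:singleton-Fw}(1) gives $t_w\!\downarrow^\ast\cup t_w\!\uparrow^\ast\subseteq h(\square\neg p)$, and Corollary~\ref{cor:singleton-Fw}(2),(3) then turn this into $H\square\neg p\land G\square\neg p$ at $t_w$. For ($\Leftarrow$), suppose $f(s)=f(t_w)$ with $s\neq t_w$, and put $h(p)=\{f(t_w)\}$. The antecedent holds at $t_w$, since $f(t_w)$ is the only $p$-point. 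But $s$ lies in $t_w\!\downarrow^\ast$ or $t_w\!\uparrow^\ast$, and $f(s)\in h(p)$, so the consequent fails at $t_w$. Irreflexivity is what makes $H\neg p\land G\neg p$ exclude only other points.

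\textbf{Undefinability in $\LO^2$.} This is the genuinely new negative case, and we expect it to be the main obstacle. Under the standard reading, $H,G$ include the current point, so the formula $p\land H\neg p\land G\neg p$ becomes unsatisfiable. We would try a surjective p-morphism between minimal reflexive linear frames, modelled on Figure~\ref{fig:inj_SPO-SOL}. Take source $\Sigma$ with $T_w=\{1\}$, $T_v=\{a\}$, $f(1)=a$, and map onto $\Sigma'$ with $T_{w'}=\{1',2'\}$ linearly ordered. But a p-morphism must go from the injective frame onto the non-injective one, and collapsing points tends to create non-injectivity only in the image. So the natural direction is the reverse:
\begin{itemize}
\item the injective frame $\Sigma'$ has $T_{w'}=\{1\le 2\}$, $T_{v'}=\{a\le b\}$ with $f'(1)=a$, $f'(2)=b$;
\item the non-injective frame $\Sigma$ has $f(1)=f(2)=a$ with $T_v=\{a\}$.
\end{itemize}
Since undefinability only needs the two frames to validate the same formulas, we would check that the relation $\{(1,1),(2,2),(a,a),(b,a)\}$ is a functional bisimulation between reflexive chains whose valuations agree. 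If the valuations do not match, we fall back on a pointwise valuation-transfer argument, as in the constancy case of Lemma~\ref{lem:total-surj-const}. Reflexive loops make $a$ and $b$ look alike to $G,H$ only when $p$ agrees on them, so the valuation-copy method is the likelier route. The key fact to exploit is that a reflexive chain $a\le b$ with the same atoms behaves like a single reflexive point, as already used in Theorem~\ref{thm:non-interdef}(ii).

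\textbf{Undefinability in $\sPRE^2$, $\PRE^2$ and $\PO^2$.} These follow by reusing the injectivity counterexample of Figure~\ref{fig:inj_SPO-SOL}, or Figure~\ref{fig:inj_SPRE2}, restricted to a single function. Convention~(RC) then transfers the result from $\sPRE^2$ to $\PRE^2$ and $\PO^2$. Undefinability in $\PO^2$ also follows from the $\LO^2$ case, since undefinability propagates upwards to superclasses.
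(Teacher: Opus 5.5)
Your overall plan follows the paper's proof. Your $\sLO^2$ formula $\lozenge(p\land H\neg p\land G\neg p)\to(H\square\neg p\land G\square\neg p)$ is the paper's $(\Inj)^2$ with $\neg p$ for $p$ and an extra conjunct $p$ under $\lozenge$, and your verification of it is correct. For the negative part you use the same two counterexamples as the paper: Figure~\ref{fig:inj_SPRE2} with Convention~(RC), and, for $\LO^2$, exactly the frames of Figure~\ref{fig:inj_P-PO-LO}. Delete the abandoned first formula. Beyond that, two steps in the negative part are wrong as written, though both are easy to repair.

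\textbf{Figure~\ref{fig:inj_SPO-SOL} cannot be ``restricted to a single function''.} Its source has three flows and needs both $f_{wv}$ and $f_{wu}$ out of $T_w$. Dropping either one breaks the back condition at $1_w$ or at $2_w$, since $5_{w'}$ and $6_{w'}$ both have images. No minimal version can exist at all: its flows are strict linear orders, and you have just shown that injectivity is definable in $\sLO^2$. So only Figure~\ref{fig:inj_SPRE2} (discrete flows, one function $f_{ww}$) serves for $\sPRE^2$.

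\textbf{Your criterion for $\LO^2$ is false for these frames.} You say the two frames should validate the same formulas, but they do not. For example, $\square(Gp\to Hp)$ is valid on the non-injective frame, whose target is a single reflexive point. It fails at $2$ on the injective frame when $p$ holds only at $b$.

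What you need holds in one direction only. The map $1\mapsto 1$, $2\mapsto 2$, $a\mapsto a$, $b\mapsto a$ is a surjective p-morphism from the injective frame onto the non-injective one, because the reflexive chain $a\le b$ collapses onto a reflexive point. Take any valuation on the non-injective frame and pull it back to the injective frame. Your relation $Z$ is then a functional bisimulation, so every formula valid on the injective frame is valid on the non-injective one. That is all undefinability requires.

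Your ``valuation-copy fallback'' is this same argument, not an alternative. There is also no risk of valuations ``not matching'', since you choose the valuation on the source frame. Stated this way, the $\LO^2$ case also gives $\PRE^2$ and $\PO^2$ by upward propagation, as you note.
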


\begin{proof}
The formula
\[
(\Inj)^2:\ \lozenge(Hp \land Gp) \to (H\square p \land G\square p)
\]
defines $\mathbb{K}^{\sLO^2}_\inj$ in $\mathbb{K}^{\sLO^2}$.

\medskip\noindent
\emph{Strict linear orders} ($\sLO^{2}$).
Let $\Sigma \in \mathbb{K}^{\sLO^2}$ and let $f_{ww'}$ be its unique function. By Theorem~\ref{fundamentos}(2) $f_{ww'}$ is injective iff for every $t_w \in \mathrm{Dom}(f_{ww'})$,
\[
f_{ww'}(t_w\!\downarrow^{*})\cup f_{ww'}(t_w\!\uparrow^{*}) \subseteq 
f_{ww'}(t_w)\!\downarrow^{*}\cup\, f_{ww'}(t_w)\!\uparrow^{*}. \qquad (\inj)
\]

($\Rightarrow$) Assume $\Sigma \in \mathbb{K}^{\sLO^2}_\inj$.  Let  $(\Sigma, h)$ be any model and let  $t_w \in h(\lozenge(Hp \land Gp))$. Then $f_{ww'}(t_w) \in h(Hp) \cap h(Gp)$, which implies
\[
f_{ww'}(t_w)\!\downarrow^{\ast} \cup\, f_{ww'}(t_w)\!\uparrow^{\ast} \subseteq h(p). \tag{1}
\]
By $(\inj)$ and (1), we obtain $f_{ww'}(t_w\!\downarrow^{\ast}) \cup f_{ww'}(t_w\!\uparrow^{\ast}) \subseteq h(p)$. Applying Corollary~\ref{cor:singleton-Fw}(1) with $X = t_w\!\downarrow^{\ast} \cup\, t_w\!\uparrow^{\ast}$ yields
\[
t_w\!\downarrow^{\ast} \cup\, t_w\!\uparrow^{\ast} \subseteq h(\square p). \tag{2}
\]
Hence $t_w\!\downarrow^{\ast} \subseteq h(\square p)$ and $t_w\!\uparrow^{\ast} \subseteq h(\square p)$. By Corollary~\ref{cor:singleton-Fw}(2) and (3), we obtain $t_w \in h(H\square p)$ and $t_w \in h(G\square p)$. Therefore $t_w \in h(H\square p \land G\square p)$, hence $(\Inj)^2$ is valid in $\Sigma$.

($\Leftarrow$) Suppose $\Sigma \notin \mathbb{K}^{\sLO^2}_\inj$. Then $f_{ww'}$ is not injective, so by Theorem~\ref{fundamentos}(2) there exists $t_w \in \mathrm{Dom}(f_{ww'})$ such that $(\inj)$ fails. Define a model $(\Sigma, h)$, where $h(p) := f_{ww'}(t_w)\!\downarrow^{\ast} \cup\, f_{ww'}(t_w)\!\uparrow^{\ast}$. Then $t_w \in h(\lozenge(Hp \land Gp))$ but
\[
f_{ww'}(t_w\!\downarrow^{\ast}) \cup f_{ww'}(t_w\!\uparrow^{\ast}) \not\subseteq h(p),
\]
so by Corollary~\ref{cor:singleton-Fw}(1), $t_w\!\downarrow^{\ast} \cup\, t_w\!\uparrow^{\ast} \not\subseteq h(\square p)$. Hence $t_w \notin h(H\square p)$ or $t_w \notin h(G\square p)$, i.e. $t_w \notin h(H\square p \land G\square p)$. Thus $(\Inj)^2$ is invalid in $\Sigma$.

\medskip\noindent
\emph{Undefinability elsewhere.}
Figure~\ref{fig:inj_SPRE2} exhibits a surjective p-morphism witnessing
indefinability in $\sPRE^2$; by Convention~(RC) its reflexive closure
yields indefinability in $\PRE^2$ and $\PO^2$.
Figure~\ref{fig:inj_P-PO-LO} shows indefinability in $\LO^2$. \qed
\end{proof}

\begin{lemma}[Monotonicity and Antitonicity]\label{lem:monotonicity}
Each of monotonicity and antitonicity is definable in all minimal order types.
\end{lemma}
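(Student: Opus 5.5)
The plan is to give, for each order type, one explicit defining formula built from the arrow-style characterisation of Theorem~\ref{local}(1). That theorem holds in $\LO$ and $\sLO$ as well, and by Remark~\ref{stlo} also in strict preorders. Definability is then verified through the correspondences of Corollary~\ref{cor:singleton-Fw}, as in Lemmas~\ref{lem:total-surj-const} and~\ref{lem:injectivity}.

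The key observation is that the target set in the monotonicity condition $f_{ww'}(t_w\!\uparrow^\ast)\subseteq f_{ww'}(t_w)\!\uparrow$ is the \emph{non-strict} cone $f_{ww'}(t_w)\!\uparrow = f_{ww'}(t_w)\!\uparrow^\ast\cup\{f_{ww'}(t_w)\}$. Hence the candidate formulas are:
\[
(\Inc)^2:\ \lozenge Gp \to G\square p, \qquad (\Dec)^2:\ \lozenge Hp \to G\square p
\]
for the reflexive types $\PRE^2,\PO^2,\LO^2$, and
\[
(\Inc)^2_s:\ \lozenge(p\land Gp) \to G\square p, \qquad (\Dec)^2_s:\ \lozenge(p\land Hp) \to G\square p
\]
for $\sPRE^2,\sLO^2$. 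In the strict case the extra conjunct $p$ supplies the image point itself, which $G$ no longer covers.

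\emph{Validity.} Assume $f_{ww'}$ is increasing and $t_w\in h(\lozenge(p\land Gp))$; the reflexive case and the decreasing case are analogous. Since $\lozenge$ is satisfied, $t_w\in\Dom(f_{ww'})$, and then $f_{ww'}(t_w)\!\uparrow\subseteq h(p)$. By the characterisation, $f_{ww'}(t_w\!\uparrow^\ast)\subseteq h(p)$. Corollary~\ref{cor:singleton-Fw}(1) with $X=t_w\!\uparrow^\ast$ gives $t_w\!\uparrow^\ast\subseteq h(\square p)$, that is, $t_w\in h(G\square p)$. Points of $t_w\!\uparrow^\ast$ outside the domain satisfy $\square p$ vacuously, so partiality causes no trouble.

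In the reflexive case $G\square p$ also requires $t_w\in h(\square p)$, i.e.\ $f_{ww'}(t_w)\in h(p)$. This is already guaranteed because $Gp$ at $f_{ww'}(t_w)$ ranges over the reflexive cone $f_{ww'}(t_w)\!\uparrow$. In preorders, $t_w\!\uparrow$ may contain points equivalent to $t_w$; these lie in $t_w\!\uparrow^\ast$ under the paper's definition ($a\le a'$, $a\ne a'$), so they are covered by the characterisation.

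\emph{Failure when the property fails.} If $f_{ww'}$ is not increasing, Theorem~\ref{local}(1) yields $t_w\in\Dom(f_{ww'})$ with $f_{ww'}(t_w\!\uparrow^\ast)\not\subseteq f_{ww'}(t_w)\!\uparrow$. Set $h(p):=f_{ww'}(t_w)\!\uparrow$ (respectively $f_{ww'}(t_w)\!\downarrow$ for antitonicity). Then the antecedent holds at $t_w$. Some $s_w\in t_w\!\uparrow^\ast$ has its image outside $h(p)$, so $s_w\notin h(\square p)$ and $t_w\notin h(G\square p)$.

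The main obstacle I expect is bookkeeping the strict versus reflexive cones uniformly across all five types. In particular, I must confirm that in $\sPRE$ the set $f(t)\!\uparrow^\ast\cup\{f(t)\}$ is exactly what $p\land Gp$ forces, and that no connectivity is needed. Unlike totality or injectivity, which rely on the two-sided decomposition $t\!\downarrow^\ast\cup\{t\}\cup t\!\uparrow^\ast$, monotonicity only uses the upward cone. This is why the result should hold in every minimal order type.
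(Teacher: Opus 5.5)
Your proof is correct and follows the paper's route. You use the arrow-style characterisation of Theorem~\ref{local}(1) and Corollary~\ref{cor:singleton-Fw}(1) to pass between $t_w\!\uparrow^\ast$ and its image, together with the countermodel $h(p):=f_{ww'}(t_w)\!\uparrow$, and your monotonicity formulas $(\Inc)^2$ and $(\Inc)^2_s$ are the paper's. For antitonicity you use $\lozenge Hp\to G\square p$ and $\lozenge(p\land Hp)\to G\square p$ rather than the paper's mirror images $\lozenge Gp\to H\square p$ and $\lozenge(p\land Gp)\to H\square p$. Both work: yours translates $f_{ww'}(t_w\!\uparrow^\ast)\subseteq f_{ww'}(t_w)\!\downarrow$ literally, while the paper's encodes the equivalent dual condition $f_{ww'}(t_w\!\downarrow^\ast)\subseteq f_{ww'}(t_w)\!\uparrow$.
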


\begin{proof}
\emph{Monotonicity.}   We treat non-strict and strict orders separately, as they require different formulas.

\medskip\noindent
\emph{Non-strict orders }($\PRE^{2}$, $\PO^{2}$, $\LO^{2}$).
Let $\Sigma \in \mathbb{K}^{\Or^2}$ with $\Or \in \{\PRE, \PO, \LO\}$ and let $f_{ww'}$ be its unique function.  
By Theorem~\ref{local}(1), $f_{ww'}$ is increasing iff for every $t_w \in \mathrm{Dom}(f_{ww'})$,
\[
f_{ww'}(t_w\!\uparrow) \subseteq f_{ww'}(t_w)\!\uparrow. \tag{$\inc$}
\]

We show that $(\Inc)^{2}: \lozenge Gp \to G\square p$ defines monotonicity in these classes.

($\Rightarrow$) Assume $\Sigma \in \mathbb{K}^{\Or^2}_{\inc}$.  
Let $(\Sigma,h)$ be any model and suppose $t_w \in h(\lozenge Gp)$.  
Then $f_{ww'}(t_w) \in h(Gp)$, which in reflexive orders means $f_{ww'}(t_w)\!\uparrow\,\subseteq h(p)$.  
From this and $(\inc)$ we obtain $f_{ww'}(t_w\!\uparrow) \subseteq h(p)$.  
Applying Corollary~\ref{cor:singleton-Fw}(1) with $X = t_w\!\uparrow$ yields $t_w\!\uparrow \, \subseteq h(\square p)$, and Corollary~\ref{cor:singleton-Fw}(3) then gives $t_w \in h(G\square p)$.  
Hence $(\Inc)^{2}$ is valid in $\Sigma$.

($\Leftarrow$) Assume $\Sigma \notin \mathbb{K}^{\Or^2}_{\inc}$.  
Then $(\inc)$ fails at some $t_w \in \mathrm{Dom}(f_{ww'})$.  
Define $h(p) := f_{ww'}(t_w)\!\uparrow$.  
Clearly $t_w \in h(\lozenge Gp)$.  
But $f_{ww'}(t_w\!\uparrow) \not\subseteq h(p)$ by failure of  $(\inc)$ and definition of $h$, so by Corollary~\ref{cor:singleton-Fw}(1) we have $t_w\!\uparrow \not\subseteq h(\square p)$, whence $t_w \notin h(G\square p)$.  
Thus $(\Inc)^2$ is invalid in $\Sigma$.

\medskip\noindent
\emph{Strict orders} ($\sPRE^2$, $\sLO^2$). 
Let $\Sigma \in \mathbb{K}^{\Or^2}$ with $\Or \in \{\sPRE, \sLO\}$.  
For strict orders we replace $\uparrow$ by $\uparrow^{*}$ throughout.  
The characterisation becomes $$f_{ww'}(t_w\!\uparrow^{*}) \subseteq f_{ww'}(t_w)\!\uparrow^{*}$$and the defining formula is
\[
(\Inc)^{2}_s:\ \lozenge(p \land Gp) \to G\square p .
\]
The proof proceeds exactly as above, noting that $f_{ww'}(t_w) \in h(Gp)$ now requires both $f_{ww'}(t_w) \in h(p)$ \emph{and} $f_{ww'}(t_w)\!\uparrow^{*} \subseteq h(p)$.

\medskip\noindent
\emph{Antitonicity.}  
This follows by the same argument, replacing $\uparrow$ with $\downarrow$ in the consequent of $(\inc)$. The characterisation becomes $f_{ww'}(t_w\!\uparrow) \subseteq f_{ww'}(t_w)\!\downarrow$ (Theorem~\ref{local}(2)), yielding the formulas $(\Dec)^2: \lozenge Gp \to H\square p$ for non-strict orders and $(\Dec)^{2}_s: \lozenge(p \land Gp) \to H\square p$ for strict orders.

\medskip
Since the argument covers every minimal order type, monotonicity and antitonicity are definable in all of them.
\end{proof}

\begin{lemma}[Strict monotonicity and strict antitonicity]\label{lem:strictness}
Each of strict monotonicity and strict antitonicity is definable only in $\sPRE^2$ and $\sLO^2$.
\end{lemma}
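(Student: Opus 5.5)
The plan has two halves: defining formulas for the strict classes $\sPRE^2$ and $\sLO^2$, and surjective p-morphisms for the reflexive classes $\PRE^2$, $\PO^2$ and $\LO^2$.

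\emph{Definability.} By Theorem~\ref{local}(2) and Remark~\ref{stlo}, the unique function $f_{ww'}$ of a frame in $\sPRE^2$ or $\sLO^2$ is strictly increasing iff, for every $t_w\in\Dom(f_{ww'})$,
\[
f_{ww'}(t_w\!\uparrow^\ast)\subseteq f_{ww'}(t_w)\!\uparrow^\ast .
\]
In a strict order $f_{ww'}(t_w)\in h(Gp)$ means exactly $f_{ww'}(t_w)\!\uparrow^\ast\subseteq h(p)$, with no extra clause for the point itself. So I propose the formula
\[
(\Inc)^2_{\ast}:\ \lozenge Gp \to G\square p ,
\]
which is the reflexive formula $(\Inc)^2$ now read over strict flows.

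The proof copies the scheme of Lemma~\ref{lem:monotonicity}. For ($\Rightarrow$), from $t_w\in h(\lozenge Gp)$ we get $f_{ww'}(t_w)\!\uparrow^\ast\subseteq h(p)$. The characterisation then gives $f_{ww'}(t_w\!\uparrow^\ast)\subseteq h(p)$. Corollary~\ref{cor:singleton-Fw}(1) turns this into $t_w\!\uparrow^\ast\subseteq h(\square p)$, i.e.\ $t_w\in h(G\square p)$. For ($\Leftarrow$), take a failure point $t_w$ and set $h(p):=f_{ww'}(t_w)\!\uparrow^\ast$. Then $t_w\in h(\lozenge Gp)$, but some $s_w>t_w$ has $f_{ww'}(s_w)\notin h(p)$, so $t_w\notin h(G\square p)$.

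Strict antitonicity is handled dually with $(\Dec)^2_\ast:\ \lozenge Hp\to G\square p$ and the valuation $h(p):=f_{ww'}(t_w)\!\downarrow^\ast$.

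\emph{Undefinability in $\PRE^2,\PO^2,\LO^2$.} The idea is that with reflexive flows $G$ always sees the current point, so a strict chain in the target flow can be collapsed onto a single reflexive point. I would use one pair of frames, which lie in all three classes at once (they are linear and reflexive).

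Let $\Sigma$ have $T_w=\{a<b\}$ and $T_v=\{c<d\}$, both reflexively closed, with $f(a)=c$ and $f(b)=d$; this $f$ is strictly increasing. Let $\Sigma'$ have the same $T_w$ and $T'_v=\{c'\}$ reflexive, with $f'(a)=f'(b)=c'$; this $f'$ is increasing but not strictly. Define the map as the identity on $T_w$ and send $c,d\mapsto c'$.

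I then check the p-morphism clauses. Forth holds because every pair related by $\le$ maps to $c'\le c'$. Back holds because the only successor or predecessor of $c'$ is $c'$ itself, and it is realised by $c$ or $d$ through reflexivity. The function clauses hold because $f'(\varphi(x))=\varphi(f(x))$ and the map is surjective. Hence the two frames validate the same formulas, so strict monotonicity cannot be defined.

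For strict antitonicity I take $f(a)=d$ and $f(b)=c$ instead. The same collapse yields a constant $f'$, which is decreasing but not strictly decreasing.

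The main obstacle is the back clauses at the collapsed point. They succeed only because reflexivity supplies the needed witness, and this is precisely why the same construction fails for $\sPRE^2$ and $\sLO^2$, where the defining formula works.
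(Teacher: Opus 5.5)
Your proposal is correct and is essentially the paper's proof. You use the same formula $\lozenge Gp\to G\square p$ read over strict flows, and the mirror-image $\lozenge Hp\to G\square p$ for strict antitonicity (the paper uses the equivalent $\lozenge Gp\to H\square p$). Your counterexamples are exactly those of Figures~\ref{fig:inj_P-PO-LO} and~\ref{fig:strict-antitonicity_P-PO-LO}, which collapse a reflexive two-point target chain onto a single reflexive point.

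One correction: a surjective p-morphism transfers validity only from $\Sigma$ to $\Sigma'$, not conversely. For instance, $\square(Fp\to p)$ is valid in your $\Sigma'$ but fails at $a$ in your strictly increasing $\Sigma$ under $h(p)=\{d\}$. So ``the two frames validate the same formulas'' is false as stated, although the one direction you actually need is enough for undefinability.
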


\begin{proof} 
\emph{Definability in $\sPRE^2$ and $\sLO^2$.}
The formulas
\[
(\mathit{Inc})^{2}:\ \lozenge Gp \to G\square p 
\qquad\text{and}\qquad 
(\mathit{Dec})^{2}:\ \lozenge Gp \to H\square p,
\]
which define ordinary monotonicity and antitonicity in non-strict orders, also define their strict counterparts in strict orders. In a strict order, $(\mathit{Inc})^{2}$ expresses precisely the condition
$
f_{ww'}(t_w\!\uparrow^{*}) \subseteq f_{ww'}(t_w)\!\uparrow^{*},
$
which is the characterisation of strict monotonicity (Theorem~\ref{fundamentos}(4)). The verification is identical to the proof of Lemma~\ref{lem:monotonicity} for the non-strict case, replacing $\uparrow,\downarrow$ everywhere by $\uparrow^{*},\downarrow^{*}$.

\medskip
\noindent
\emph{Undefinability elsewhere.}
Figure~\ref{fig:inj_P-PO-LO} contains a surjective homomorphism that shows strict monotonicity 
is not definable in $\PRE^{2}$, $\PO^{2}$ or $\LO^{2}$; 
Figure~\ref{fig:strict-antitonicity_P-PO-LO} does the same for strict antitonicity. \qed
\end{proof}

\begin{table}[ht]
\centering
\small
\caption{Definability of functional properties in minimal frames ($O^2$ family) under $(G,H)$.}
\label{tab:definability-minimal-frames}
\renewcommand{\arraystretch}{1.2} 
\setlength{\tabcolsep}{0pt}      
\begin{tabular*}{0.85\textwidth}{@{\extracolsep{\fill}} l ccccc @{}}
\toprule
\textbf{Property} & \textbf{PRE$^2$} & \textbf{sPRE$^2$} & \textbf{PO$^2$} & \textbf{LO$^2$} & \textbf{sLO$^2$} \\
\midrule
Totality            & No  & No  & No  & Yes & Yes \\
Non-totality        & No  & No  & No  & Yes & Yes \\
Injectivity         & No  & No  & No  & No  & Yes \\
Surjectivity        & No  & No  & No  & Yes & Yes \\
Monotonicity        & Yes & Yes & Yes & Yes & Yes \\
Strict monotonicity & No  & Yes & No  & No  & Yes \\
Antitonicity        & Yes & Yes & Yes & Yes & Yes \\
Strict antitonicity & No  & Yes & No  & No  & Yes \\
Constancy           & No  & No  & No  & Yes & Yes \\
\bottomrule
\end{tabular*}
\end{table}

\subsection{Definability Improvements under the Strict Temporal Interpretation}
\label{subsec:strict-improvements}

The strict interpretation $G^\ast, H^\ast$ significantly expands the definability landscape. By disregarding reflexive loops, it breaks certain surjective homomorphisms that preserved truth for the standard interpretation $G, H$, turning them into non-bisimulations.

\begin{lemma}[Persistence of (un)definability]\label{lem:strict-persistence}
Over minimal functional frames under the strict interpretation ($G^\ast, H^\ast$):
\begin{enumerate}
  \item Totality, non-totality, surjectivity, monotonicity, antitonicity, and constancy are definable in the same order types as under the standard interpretation. Specifically:
  \begin{itemize}
    \item[(a)] In $\LO^2$ and $\sLO^2$: 
      \[
        \begin{aligned}
        &\text{Totality: } && \square(H^\ast p \land p \land G^\ast p)\to (H^\ast\square p \land G^\ast\square p) \\
        &\text{Non-totality: } && P^\ast\square\bot \lor \square\bot \lor F^\ast\square\bot \\
        &\text{Surjectivity: } && (H^\ast\square p \land G^\ast\square p)\to \square(H^\ast p \land G^\ast p) \\
        &\text{Constancy: } && \lozenge p \to (H^\ast\square p \land G^\ast\square p)
        \end{aligned}
      \]
    \item[(b)] In $\PRE^2$, $\PO^2$, $\LO^2$, $\sPRE^2$, $\sLO^2$:
      \[
        \begin{aligned}
        &\text{Monotonicity: } && \lozenge(p \land G^\ast p) \to G^\ast\square p \\
        &\text{Antitonicity: } && \lozenge(p \land G^\ast p) \to H^\ast\square p
        \end{aligned}
      \]
  \end{itemize}
  \item Totality, non-totality, injectivity, surjectivity, and constancy remain undefinable in $\PRE^2$, $\sPRE^2$, and $\PO^2$.
\end{enumerate}
\end{lemma}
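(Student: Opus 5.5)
The proof has two parts: the positive clause (1) and the transfer of undefinability in clause (2).

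\emph{Clause (1)(a).} The key observation is that in $\sLO^2$ we have $R_w=R_w^\bullet$, so $G^\ast,H^\ast$ coincide with $G,H$. The four displayed formulas are then literally $(\Tot)$, $(\Ntot)^2$, $(\Surj)$ and $(\Con)^2$, and the $\sLO^2$ case follows from Theorem~\ref{tot-surj} and Lemma~\ref{lem:total-surj-const}.

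For $\LO^2$ the formulas are not literally the old ones, so I will rerun the earlier arguments using the strict version of Corollary~\ref{cor:singleton-Fw}. Under the strict reading, clauses (2) and (3) hold with $\uparrow^\ast,\downarrow^\ast$ in every order type, since $G^\ast$ quantifies over $R^\bullet_w(t)=t\!\uparrow^\ast$. The characterisations of Theorem~\ref{fundamentos} are stated with $\uparrow^\ast,\downarrow^\ast$ and hold verbatim in $\LO$. Hence each chain of equivalences from the $\sLO$ proofs goes through unchanged.

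Two points need specific checks in $\LO^2$:
\begin{itemize}
\item For non-totality, the trichotomy $t'<t$, $t'=t$, $t'>t$ of a linear order is exactly what $P^\ast\square\bot\lor\square\bot\lor F^\ast\square\bot$ covers.
\item For totality, the conjunct $p$ in the antecedent is needed again, because $G^\ast$ no longer sees the point itself. So the $\sLO$ proof from \cite{Burrieza2003} applies verbatim rather than the simplified $(\Tot)^\circ$.
\end{itemize}

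\emph{Clause (1)(b).} I will follow Lemma~\ref{lem:monotonicity}, strict case, now in every order type. Theorem~\ref{local}(1) characterises monotonicity by $f(t\!\uparrow^\ast)\subseteq f(t)\!\uparrow$.
\begin{itemize}
\item Validity: assume $t\in h(\lozenge(p\land G^\ast p))$. Then $f(t)\in h(p)$ and $f(t)\!\uparrow^\ast\subseteq h(p)$, so $f(t)\!\uparrow\subseteq h(p)$. Monotonicity gives $f(t\!\uparrow^\ast)\subseteq h(p)$. Corollary~\ref{cor:singleton-Fw}(1) then gives $t\!\uparrow^\ast\subseteq h(\square p)$, i.e.\ $t\in h(G^\ast\square p)$.
\item Failure: take $h(p):=f(t)\!\uparrow$ at a point $t$ where the characterisation fails.
\end{itemize}
In strict orders the condition $f(t\!\uparrow^\ast)\subseteq f(t)\!\uparrow$ is still the correct monotonicity characterisation, with $f(t)\!\uparrow$ read as $f(t)\!\uparrow^\ast\cup\{f(t)\}$. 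So one uniform argument covers all five types. Antitonicity is dual, with $\downarrow$ and $H^\ast$ in the consequent.

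\emph{Clause (2).} The undefinability witnesses for $\sPRE^2$ are Figures~\ref{fig:total}, \ref{fig:surj} and \ref{fig:inj_SPRE2}, together with the constancy pair $\Sigma,\Sigma'$ from Lemma~\ref{lem:total-surj-const}. All of these are irreflexive, so the two interpretations coincide and the witnesses transfer directly.

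For $\PRE^2$ and $\PO^2$ I will use the (RC) closures. Since $R^\bullet$ of the reflexive closure equals the original strict relation, the strict semantics on the closed frame is exactly the standard semantics on the original frame. The p-morphisms therefore remain bisimulations for $G^\ast,H^\ast$. This is the same argument as in Subsection~\ref{subsec:strict-results-general}. For the constancy pair, the pointwise valuation transfer is unaffected, because the temporal relations are empty, or trivial after closure.

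\emph{Main obstacle.} The delicate step is the $\LO^2$ case of (1)(a). There I must confirm that each equivalence in the surjectivity and injectivity-style chains used only strict segments, and that no step relied on the reflexive self-loop, as the $\LO$ remark in Theorem~\ref{tot-surj} did. I will recheck the surjectivity proof line by line with $\uparrow^\ast$ semantics.
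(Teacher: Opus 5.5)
Your proposal is correct and follows the paper's own argument. You use three steps: on irreflexive frames $G^\ast,H^\ast$ coincide with $G,H$; for $\LO^2$ and the reflexive classes you rerun the $\sLO^2$ and strict-order verifications, since $G^\ast,H^\ast$ range only over $\uparrow^\ast,\downarrow^\ast$; and you transfer the counterexamples of Figures~\ref{fig:total}, \ref{fig:surj} and~\ref{fig:inj_SPRE2}, together with the constancy pair, via Convention~(RC). Your explicit remark that $(R\cup\mathrm{Id})^\bullet=R$ for irreflexive $R$ is precisely what justifies the paper's terser claim that these witnesses survive under the strict reading.
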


\begin{proof}
\noindent\textit{Proof sketch.} In strict orders, the standard and strict interpretations coincide semantically. In reflexive orders, the strict interpretation ignores reflexive loops and evaluates only the strict part $\uparrow^\ast,\downarrow^\ast$. Hence, from the perspective of the strict interpretation, all orders are effectively assimilated to their strict counterparts: $\PRE^2$ and $\PO^2$ behave like $\sPRE^2$, and $\LO^2$ behaves like $\sLO^2$.

Definability transfers directly from the standard to the strict setting:
\begin{itemize}
    \item Totality, surjectivity, constancy, and non-totality: the proofs for $\sLO^2$ under the standard interpretation (Lemma~\ref{lem:total-surj-const}) apply unchanged to $\LO^2$ and $\sLO^2$ under the strict interpretation.
    \item Monotonicity and antitonicity: the proofs for $\sPRE^2$ and $\sLO^2$ under the standard interpretation (Lemma~\ref{lem:monotonicity}) extend to all minimal order types under the strict interpretation.
\end{itemize}

Undefinability in $\PRE^2$, $\sPRE^2$, and $\PO^2$ follows from the same counterexamples used for the standard interpretation (Figures~\ref{fig:total}, \ref{fig:surj}, \ref{fig:inj_SPRE2}) together with Convention~(RC). For constancy, the semantic equivalence argument from Theorem~\ref{prop:minimal-definability} applies unchanged to the strict interpretation in $\sPRE^2$ and, via Convention~(RC), to $\PRE^2$ and $\PO^2$. \qed
\end{proof}

\begin{lemma}[Definability gains with the strict interpretation]\label{lem:strict-gains}
Over minimal functional frames, the strict interpretation ($G^*, H^*$) makes definable properties previously undefinable under the standard interpretation ($G, H$):
\begin{enumerate}
  \item Injectivity is definable in $\LO^2$ by
        \[ (Inj)_*^2:\; \lozenge(H^*p \land G^*p) \to (H^*\square p \land G^*\square p). \]
  \item Strict monotonicity and strict antitonicity are definable in $\PRE^2$, $\PO^2$, and $\LO^2$ by
        \[ (Inc)_*^2:\; \lozenge G^*p \to G^*\square p, \qquad (Dec)_*^2:\; \lozenge G^*p \to H^*\square p. \]
\end{enumerate}
\end{lemma}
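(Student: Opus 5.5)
The plan is to reduce both items to the strict-order arguments already given. Under the strict interpretation, $G^\ast,H^\ast$ quantify over $R_w^\bullet(t_w)=t_w\!\uparrow^\ast$ and its converse $t_w\!\downarrow^\ast$, even when $\le_w$ is reflexive. So the first step is to record the strict analogue of Corollary~\ref{cor:singleton-Fw} for $\PRE^2,\PO^2,\LO^2$: for every $X\subseteq T_w$,
\[
f_{ww'}(X)\!\uparrow^\ast\subseteq h(A)\ \text{iff}\ f_{ww'}(X)\subseteq h(G^\ast A),
\qquad
f_{ww'}(X)\!\downarrow^\ast\subseteq h(A)\ \text{iff}\ f_{ww'}(X)\subseteq h(H^\ast A).
\]
Item (1) of that corollary concerns only $\square$ and is unchanged. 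The proof is the chain of equivalences in Lemma~\ref{lem:image-modal-unified}(2), with $R_w^\bullet$ in place of $R_w$.

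For \textbf{injectivity in $\LO^2$}, I would use Theorem~\ref{fundamentos}(2), which is valid for (non-strict) linear orders since it only involves $\uparrow^\ast,\downarrow^\ast$. It says $f_{ww'}$ is injective iff $(\inj)$ holds at every $t_w\in\Dom(f_{ww'})$. The proof of Lemma~\ref{lem:injectivity} then transfers verbatim, with $H,G$ replaced by $H^\ast,G^\ast$ and the strict corollary above used in place of Corollary~\ref{cor:singleton-Fw}.

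\begin{itemize}
\item[($\Rightarrow$)] From $t_w\in h(\lozenge(H^\ast p\land G^\ast p))$ we get $f(t_w)\!\downarrow^\ast\cup f(t_w)\!\uparrow^\ast\subseteq h(p)$. By $(\inj)$ this gives $f(t_w\!\downarrow^\ast)\cup f(t_w\!\uparrow^\ast)\subseteq h(p)$, hence $t_w\!\downarrow^\ast\cup t_w\!\uparrow^\ast\subseteq h(\square p)$, hence $t_w\in h(H^\ast\square p\land G^\ast\square p)$.
\item[($\Leftarrow$)] Choose $t_w$ where $(\inj)$ fails and set $h(p)=f(t_w)\!\downarrow^\ast\cup f(t_w)\!\uparrow^\ast$.
\end{itemize}

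The point to check carefully is that the consequent never demands $p$ at $f(t_w)$ itself. This is exactly why the standard version fails in $\LO^2$: there $H\square p$ forces $\square p$ at $t_w$, which needs $f(t_w)\in h(p)$, and that was not supplied by the countermodel.

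For \textbf{strict monotonicity and antitonicity in $\PRE^2,\PO^2,\LO^2$}, I would use Theorem~\ref{local}(2): $f$ is strictly increasing iff $f(t_w\!\uparrow^\ast)\subseteq f(t_w)\!\uparrow^\ast$ for all $t_w\in\Dom(f)$, and similarly with $\downarrow^\ast$ for strictly decreasing.

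\begin{itemize}
\item[($\Rightarrow$)] $t_w\in h(\lozenge G^\ast p)$ gives $f(t_w)\!\uparrow^\ast\subseteq h(p)$. This implies $f(t_w\!\uparrow^\ast)\subseteq h(p)$, so $t_w\!\uparrow^\ast\subseteq h(\square p)$, so $t_w\in h(G^\ast\square p)$.
\item[($\Leftarrow$)] Take $h(p)=f(t_w)\!\uparrow^\ast$ (or $f(t_w)\!\downarrow^\ast$ for $(Dec)^2_\ast$) at a failing $t_w$.
\end{itemize}

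The main obstacle is confirming that Theorem~\ref{local}(2) is genuinely correct in preorders. The definition of ``strictly increasing'' via $<$ (that is, $\le$ and $\neq$) must match $\uparrow^\ast$ as defined in the paper, and in a preorder distinct points $a\le a'\le a$ make this delicate. I would take Theorem~\ref{local} as given, as the statement permits, and note that the argument uses nothing else.
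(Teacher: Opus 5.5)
Your overall route is the paper's. Its proof just says that the arguments of Lemmas~\ref{lem:injectivity} and~\ref{lem:strictness} carry over with $G,H$ replaced by $G^\ast,H^\ast$, because $G^\ast,H^\ast$ range over $t_w\!\uparrow^\ast$ and $t_w\!\downarrow^\ast$ even in reflexive orders. Your strict analogue of Corollary~\ref{cor:singleton-Fw}, your injectivity argument in $\LO^2$, and your $(Inc)^2_\ast$ argument are correct. Theorem~\ref{local}(2) is the right reference for $\PRE^2$ and $\PO^2$. Since the paper defines $x<y$ as $x\le y$ and $x\neq y$, that characterisation is just the definition unfolded pointwise, so nothing delicate happens in preorders.

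The $(Dec)^2_\ast$ case, however, fails as you set it up. The formula $\lozenge G^\ast p\to H^\ast\square p$ has $G^\ast$ in the antecedent and $H^\ast$ in the consequent. So the characterisation $f(t_w\!\uparrow^\ast)\subseteq f(t_w)\!\downarrow^\ast$ that you invoke ``similarly'' does not connect them. From $f(t_w)\!\uparrow^\ast\subseteq h(p)$ it says nothing about $f(t_w\!\downarrow^\ast)$, which is what $H^\ast\square p$ at $t_w$ needs.

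Your countervaluation $h(p)=f(t_w)\!\downarrow^\ast$ also breaks. Having $t_w\in h(\lozenge G^\ast p)$ requires $f(t_w)\!\uparrow^\ast\subseteq f(t_w)\!\downarrow^\ast$, and in $\PO^2$ and $\LO^2$ this holds only when $f(t_w)$ has no strict successor.

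Here is a concrete failure in $\LO^2$. Take $T_w=\{a,b\}$ with $a<b$ and $T_{w'}=\{c,d,e\}$ with $c<d<e$ (reflexive linear orders), and set $f(a)=c$, $f(b)=d$.
\begin{itemize}
\item The only point violating $f(t\!\uparrow^\ast)\subseteq f(t)\!\downarrow^\ast$ is $a$.
\item Your valuation is therefore $h(p)=c\!\downarrow^\ast=\varnothing$.
\item Every image has a strict successor, so $\lozenge G^\ast p$ is false everywhere.
\item Hence nothing is falsified, although $f$ is not strictly decreasing.
\end{itemize}

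The repair is to use the mirrored characterisation: $f$ is strictly decreasing iff $f(t_w\!\downarrow^\ast)\subseteq f(t_w)\!\uparrow^\ast$ for all $t_w\in\Dom(f)$. Both forms say that $t'<t$ implies $f(t)<f(t')$ for $t,t'\in\Dom(f)$; only the roles of the two points are exchanged.
\begin{itemize}
\item For $(\Rightarrow)$: from $f(t_w)\!\uparrow^\ast\subseteq h(p)$ you get $f(t_w\!\downarrow^\ast)\subseteq h(p)$, hence $t_w\!\downarrow^\ast\subseteq h(\square p)$, hence $t_w\in h(H^\ast\square p)$.
\item For $(\Leftarrow)$: take $h(p)=f(t_w)\!\uparrow^\ast$, the same valuation as for $(Inc)^2_\ast$, at a point where the mirrored condition fails. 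In the example this is $t_w=b$ with $h(p)=\{e\}$.
\end{itemize}
What you actually sketched is a correct proof for a different formula, $\lozenge H^\ast p\to G^\ast\square p$. The paper's treatment of antitonicity in Lemma~\ref{lem:monotonicity} glosses over the same mismatch, but the stated formula $(Dec)^2_\ast$ is nonetheless correct.

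A minor point: your aside on why the standard formula fails in $\LO^2$ misplaces the problem. Under the reflexive reading, the antecedent $\lozenge(Hp\land Gp)$ already forces $p$ on $f(t_w)\!\downarrow\cup f(t_w)\!\uparrow$, which is the whole linear target flow. The consequent then holds trivially, so that formula is valid on every $\LO^2$ frame. Your countervaluation simply never satisfies the antecedent there; the consequent is not the issue.
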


\begin{proof}

The algebraic characterizations for injectivity (Theorem~\ref{fundamentos}(2)) and strict monotonicity (Theorem~\ref{fundamentos}(5)) involve only the strict intervals $\uparrow^*$ and $\downarrow^*$. Under the strict interpretation, $G^*$ and $H^*$ quantify precisely over these strict intervals, even in reflexive orders. Therefore, the verification follows exactly the same pattern as in Lemma~\ref{lem:injectivity} for injectivity in $\mathsf{sLO}^2$ and Lemma~\ref{lem:strictness} for strict monotonicity in $\mathsf{sPRE}^2$ and $\mathsf{sLO}^2$, replacing $G,H$ with $G^*,H^*$ throughout. The explicit algebraic manipulations are omitted to avoid repetition; the reader may consult the aforementioned lemmas for the detailed step-by-step derivation.
\end{proof}

\begin{theorem}[Definability under the strict interpretation]\label{thm:strict-complete-classification}
Over minimal functional frames under the strict interpretation ($G^*, H^*$):
\begin{itemize}
\item Totality, non-totality, injectivity, surjectivity, and constancy are definable only in linear orders ($\LO^2$ and $\sLO^2$).
    \item Monotonicity, antitonicity, strict monotonicity, and strict antitonicity are definable in all order types.
    
\end{itemize}
\end{theorem}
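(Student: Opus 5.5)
This theorem is a collation result, so I will assemble it from Lemmas~\ref{lem:strict-persistence} and~\ref{lem:strict-gains} and check, order type by order type, that every entry of the resulting table is justified by either a defining formula or a counterexample.

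\textbf{Positive half.} For the first item, Lemma~\ref{lem:strict-persistence}(1a) gives defining formulas for totality, non-totality, surjectivity and constancy in both $\LO^2$ and $\sLO^2$. Injectivity in $\sLO^2$ is covered by Lemma~\ref{lem:injectivity}, since in irreflexive frames $G^\ast=G$ and $H^\ast=H$. Injectivity in $\LO^2$ is covered by Lemma~\ref{lem:strict-gains}(1). For the second item, monotonicity and antitonicity in all five types come from Lemma~\ref{lem:strict-persistence}(1b). Strict monotonicity and strict antitonicity in $\sPRE^2$ and $\sLO^2$ come from Lemma~\ref{lem:strictness}, again because the two interpretations coincide on strict frames. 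In $\PRE^2$, $\PO^2$ and $\LO^2$ they come from Lemma~\ref{lem:strict-gains}(2). The key point to verify is that Corollary~\ref{cor:singleton-Fw} holds with $\uparrow^\ast,\downarrow^\ast$ in clauses (2)--(3) when the strict operators are read over a reflexive frame. This is immediate, because $G^\ast$ quantifies over $R_w^\bullet(t_w)=t_w\!\uparrow^\ast$.

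\textbf{Negative half.} The word ``only'' in the first item requires undefinability of the five properties in $\PRE^2$, $\sPRE^2$ and $\PO^2$. Lemma~\ref{lem:strict-persistence}(2) gives exactly this. The underlying reason is that the counterexamples (Figures~\ref{fig:total}, \ref{fig:surj}, \ref{fig:inj_SPRE2}) are strict frames. On such frames the surjective p-morphisms induce functional bisimulations in the sense of Definition~\ref{bisimulacion} for the relations $R^\bullet$. Under Convention~(RC), the reflexive closures have the same irreflexive core, so the same relation remains a bisimulation for the strict operators.

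\textbf{Main obstacle.} The delicate case is constancy, which cannot be handled by a p-morphism. I would re-run the pointwise valuation transfer between $\Sigma$ and $\Sigma'$ from Lemma~\ref{lem:total-surj-const}, now by induction on formulas built with $G^\ast,H^\ast$. In those frames every $R^\bullet$ is empty, even after reflexive closure, so $G^\ast A$ and $H^\ast A$ are trivially true everywhere. The only nontrivial clause of the induction is $\square$, and it is handled exactly as before.
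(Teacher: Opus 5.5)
Your proposal is correct and follows essentially the same route as the paper, which also proves the theorem by collating Lemma~\ref{lem:strict-persistence} and Lemma~\ref{lem:strict-gains}. You add explicit appeals to Lemmas~\ref{lem:injectivity} and~\ref{lem:strictness}, using the coincidence of $G^\ast,H^\ast$ with $G,H$ on irreflexive frames, to cover injectivity in $\sLO^2$ and the strict variants in $\sPRE^2$ and $\sLO^2$; these are cases the paper's proof leaves implicit.
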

\begin{proof}
Lemma~\ref{lem:strict-persistence}(1) shows totality, non-totality, surjectivity, and constancy are definable in $\LO^2$ and $\sLO^2$, while Lemma~\ref{lem:strict-gains}(1) adds injectivity to this linear-order class.

From Lemma~\ref{lem:strict-persistence}(1) and Lemma~\ref{lem:strict-gains}(2), monotonicity, antitonicity, and their strict variants are definable in all minimal order types. 

Finally, Lemma~\ref{lem:strict-persistence}(2) establishes that these five properties remain undefinable in $\PRE^2$, $\sPRE^2$, and $\PO^2$. \qed
\end{proof}

\medskip
The definability results under the strict interpretation are summarized in Table~\ref{tab:definability-minimal-frames-strict}.
\begin{table}[ht]
\centering
\small
\caption{Definability of functional properties in minimal frames under the strict interpretation (Yes = definable)}
\label{tab:definability-minimal-frames-strict}
\renewcommand{\arraystretch}{1.25}
\setlength{\tabcolsep}{0pt}
\begin{tabular*}{0.85\textwidth}{@{\extracolsep{\fill}} l ccccc @{}}
\toprule
\textbf{Property} & \textbf{PRE$^2$} & \textbf{sPRE$^2$} & \textbf{PO$^2$} & \textbf{LO$^2$} & \textbf{sLO$^2$} \\
\midrule
Totality            & No  & No  & No  & Yes & Yes \\
Non-totality        & No  & No  & No  & Yes & Yes \\
Injectivity         & No  & No  & No  & Yes & Yes \\
Surjectivity        & No  & No  & No  & Yes & Yes \\
Monotonicity        & Yes & Yes & Yes & Yes & Yes \\
Strict monotonicity & Yes & Yes & Yes & Yes & Yes \\
Antitonicity        & Yes & Yes & Yes & Yes & Yes \\
Strict antitonicity & Yes & Yes & Yes & Yes & Yes \\
Constancy           & No  & No  & No  & Yes & Yes \\
\bottomrule
\end{tabular*}
\end{table}

\subsection{Relation to Other Frameworks: Indexed Languages and Uniform Domains}\label{subsec:relation-frameworks}
Our results can be viewed in light of two existing frameworks: indexed modal languages \cite{Burrieza2009} and uniform-domain frames \cite{Burrieza2003}.

\subsubsection{Relation to indexed modal languages: a structural insight.}

Indexed languages achieve their expressivity by employing indexed modal operators ($[i], \langle i \rangle$, where $i$ is an index for a flow) to distinguish the \emph{range} of functions that are otherwise indistinguishable in an arbitrary ordered multiflow setting. By attaching these indices to the connectives, the language allows specific accessibility functions to be uniquely identified, as each index $i$ points to the destination flow of a single function $f_{wi}$. This strategy was systematically analyzed for strict linear orders ($\sLO$) and the same group of functional properties as in this paper in \cite{Burrieza2009}.

The fact that our definability patterns for minimal frames ($\Or^{2}$) coincide precisely with those established for indexed languages is not accidental. If a functional property $P$ holds in an indexed frame, it must hold for each function $f_{wi}$ individually. Consequently, any definability argument---including those in \cite{Burrieza2009}---necessarily focuses on a single function and the two flows it connects, rendering the rest of the multiflow structure logically inert.

This insight provides a direct translation between our results and those for indexed languages:
\begin{itemize}
  \item \emph{Definability:} If a formula $A$ of $L_{T\times W}$ defines a property $P$ in the minimal class $\Or^{2}$ (as established in Section~\ref{sec:minimal}), then $P$ is also defined in the class of indexed frames of type $\Or$ by the indexed schema 
\[
\{A(i) \mid i \in I\},
\]
where each $A(i)$ is obtained by replacing $\square$ with $[i]$ and $\lozenge$ with $\langle i\rangle$. The verification is identical to that of Theorem~\ref{prop:minimal-definability}, simply replacing $f_{ww'}$ with $f_{wi}$.

    \item \emph{Undefinability:} Every undefinability counterexample in Appendix (see Figs.~\ref{fig:total},~\ref{fig:surj},~\ref{fig:inj_SPRE2},~\ref{fig:inj_P-PO-LO}, and~\ref{fig:strict-antitonicity_P-PO-LO},  complemented by the (RC) convention where applicable) involves only two flows and a single function. 
          Interpreting the target flow as an index $i$ yields an immediate counterexample for the indexed setting, via the same surjective homomorphism.
\end{itemize}
This bidirectional correspondence ensures that the results established in Tables~\ref{tab:definability-minimal-frames} and~\ref{tab:definability-minimal-frames-strict} are exhaustive for the group of functional properties analyzed in this study. 

Regarding the choice of sources, we take \cite{Burrieza2009} as our primary reference because it is the first indexed work where functional properties appear in isolation, matching our current approach. Earlier treatments \cite{Burrieza2002indexed} invariably combined them with totality, preventing an independent analysis of each property. Furthermore, while subsequent work introduced double-indexing to achieve completeness for surjective functions (see \cite{Burrieza2017}), such refinements are redundant for definability. Since modal evaluation begins at a specific ``actual'' flow, the destination index alone uniquely identifies the function, making additional domain indexing unnecessary for the purposes of this study.

Thus, our minimal-frame approach reveals that adding explicit indices to the operators (syntactic indexing) achieves the same definability patterns as structurally simplifying the frames (restricting to at most two flows), regardless of which temporal interpretation (standard or strict) is adopted for $G,H$. This equivalence demonstrates that indices essentially serve to isolate individual functions within a multiflow setting; once this structural complexity is eliminated, the basic temporal-modal operators of $L_{T\times W}$ already provide the same definability patterns for the functional properties considered.

In fact, the connection between minimal frames and indexed languages is not limited to the specific properties examined here. As shown in the Appendix (Theorem~\ref{thm:minimal-indexed}), it constitutes a general definability equivalence theorem: for any order type and any functional property that concerns a single function, definability in minimal frames is equivalent to definability in indexed frames. The full technical development---including the formal definition of minimal projection and the proof of the equivalence---is presented there.

\subsubsection{Comparison with Uniform-Domain Constraints}

Unlike minimal frames, U-Dom frames retain a genuine arbitrary ordered multiflow structure---the modal operators $\square$ and $\lozenge$ still quantify over all accessibility functions. The uniformity of domains restores the arrow-style characterisations in linear orders (strict or non-strict) that fail in arbitrary ordered multiflow frames (see Example~\ref{fallo}), because the aggregate image $\mathcal{F}_w(X)$ of a set $X \subseteq T_w$ now behaves as if it came from a single function. This uniformity is captured formally by the following condition: for every $w \in W$, $\operatorname{Dom}(f_{ww'}) = \operatorname{Dom}(f_{ww''})$ for all $f_{ww'}, f_{ww''} \in \mathcal{F}_w$; this common domain is denoted by $\operatorname{Dom}_U(\mathcal{F}_w)$.
\begin{theorem}[Arrow-style characterisations under U-Dom]\label{thm:UDom-arrow-characterizations}
Let $\Sigma = (W, \mathcal{T}, \mathcal{F})$ be a U-Dom frame. Then:

{\footnotesize
\begin{center}
\begin{tabular}{|l|c|l|}
\hline
\multicolumn{3}{|c|}{\emph{Linear orders (strict or non-strict)}} \\
\hline
\textbf{Property} & \textbf{Domain} & \textbf{Condition} \\
\hline
Totality          & $t_w \in \mathrm{Coord}_{\Sigma}$ 
                  & $\mathcal{F}_w(t_w\!\downarrow^{\ast}) \cup\, \mathcal{F}_w(t_w\!\uparrow^{\ast}) \subseteq 
                     \mathcal{F}_w(\{t_w\})\!\downarrow^{\ast} \cup\, \mathcal{F}_w(\{t_w\})\!\uparrow$ \\
\hline
Injectivity       & $t_w \in \mathrm{Dom}_U(\mathcal{F}_w)$ 
                  & $\mathcal{F}_w(t_w\!\downarrow^{\ast}) \cup\, \mathcal{F}_w(t_w\!\uparrow^{\ast}) \subseteq 
                     \mathcal{F}_w(\{t_w\})\!\downarrow^{\ast} \cup\, \mathcal{F}_w(\{t_w\})\!\uparrow^{\ast}$ \\
\hline
Surjectivity      & $t_w \in \mathrm{Coord}_{\Sigma}$ 
                  & $\mathcal{F}_w(\{t_w\})\!\downarrow^{\ast} \cup\, \mathcal{F}_w(\{t_w\})\!\uparrow^{\ast} \subseteq 
                     \mathcal{F}_w(t_w\!\downarrow^{\ast}) \cup\, \mathcal{F}_w(t_w\!\uparrow^{\ast})$ \\
\hline
Constancy         & $t_w \in \mathrm{Dom}_U(\mathcal{F}_w)$ 
                  & $\mathcal{F}_w(t_w\!\downarrow^{\ast}) \cup\, \mathcal{F}_w(t_w\!\uparrow^{\ast}) \subseteq 
                     \mathcal{F}_w(\{t_w\})$ \\
\hline
\noalign{\smallskip}
\hline
\multicolumn{3}{|c|}{\emph{All order types}} \\
\hline
\textbf{Property} & \textbf{Domain} & \textbf{Condition} \\
\hline
Monotonicity        & $t_w \in \mathrm{Dom}_U(\mathcal{F}_w)$ 
                  & $\mathcal{F}_w(t_w\!\uparrow^{\ast}) \subseteq \mathcal{F}_w(\{t_w\})\!\uparrow$ \\
\hline
Strictly mon.     & $t_w \in \mathrm{Dom}_U(\mathcal{F}_w)$ 
                  & $\mathcal{F}_w(t_w\!\uparrow^{\ast}) \subseteq \mathcal{F}_w(\{t_w\})\!\uparrow^{\ast}$ \\
\hline
Antitonicity        & $t_w \in \mathrm{Dom}_U(\mathcal{F}_w)$ 
                  & $\mathcal{F}_w(t_w\!\uparrow^{\ast}) \subseteq \mathcal{F}_w(\{t_w\})\!\downarrow$ \\
\hline
Strictly ant.     & $t_w \in \mathrm{Dom}_U(\mathcal{F}_w)$ 
                  & $\mathcal{F}_w(t_w\!\uparrow^{\ast}) \subseteq \mathcal{F}_w(\{t_w\})\!\downarrow^{\ast}$ \\
\hline
\end{tabular}
\end{center}
}
\end{theorem}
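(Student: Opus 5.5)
The plan is to reduce every aggregate condition in the table to the corresponding per-function condition of Theorem~\ref{fundamentos} (linear orders) or Theorem~\ref{local} (all order types), applied separately to each $f_{ww'}\in\mathcal{F}_w$. Read this way, each row of the table says: every function in $\mathcal{F}$ has the property iff the displayed inclusion holds for all $t_w$ in the indicated domain. The rows for totality and surjectivity are already Theorem~\ref{caracterizafunciones}, which holds without U-Dom, so only the remaining six rows need an argument.

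\emph{Decomposition lemma.} Fix $w$ and $t_w\in\mathrm{Dom}_U(\mathcal{F}_w)$. By U-Dom, every $f_{ww'}\in\mathcal{F}_w$ is defined at $t_w$, so
\[
\mathcal{F}_w(\{t_w\})=\biguplus_{w'} \{f_{ww'}(t_w)\}
\]
contains exactly one point in each target flow $T_{w'}$ that carries a function. The operators $\uparrow,\uparrow^\ast,\downarrow,\downarrow^\ast$ never leave a flow. Hence
\[
\mathcal{F}_w(\{t_w\})\!\uparrow\cap T_{w'}=f_{ww'}(t_w)\!\uparrow,
\]
and the same holds for the other three operators. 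The left-hand sides also split by flow: $\mathcal{F}_w(X)\cap T_{w'}=f_{ww'}(X)$, by Remark~\ref{remark1}(i). Intersecting both sides of an aggregate inclusion with each $T_{w'}$ therefore shows that it is equivalent to the conjunction over $w'$ of the per-function inclusions. For example, the aggregate injectivity condition at $t_w$ holds iff for every $f_{ww'}$
\[
f_{ww'}(t_w\!\downarrow^\ast)\cup f_{ww'}(t_w\!\uparrow^\ast)\subseteq f_{ww'}(t_w)\!\downarrow^\ast\cup f_{ww'}(t_w)\!\uparrow^\ast .
\]
The constancy row is analogous, with $\{f_{ww'}(t_w)\}$ as the per-flow slice of $\mathcal{F}_w(\{t_w\})$.

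\emph{Per-function step.} Since U-Dom gives $\mathrm{Dom}(f_{ww'})=\mathrm{Dom}_U(\mathcal{F}_w)$ for every $f_{ww'}\in\mathcal{F}_w$, quantifying over $t_w\in\mathrm{Dom}_U(\mathcal{F}_w)$ is exactly quantifying over $a\in\mathrm{Dom}(f_{ww'})$, as the per-function theorems require. So the aggregate condition holds for all $t_w\in\mathrm{Dom}_U(\mathcal{F}_w)$ iff each $f_{ww'}$ satisfies its per-function characterisation. That characterisation is Theorem~\ref{fundamentos}(2),(6) for injectivity and constancy in (strict) linear orders. For the monotonicity-type properties in every order type it is Theorem~\ref{local}, together with Remark~\ref{stlo} for the strict orders. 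Ranging over all $w$ then yields the statement for the whole class $\mathcal{F}$.

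\emph{Main obstacle.} The key point is the decomposition step, and specifically its reliance on U-Dom. I would make explicit why it fails without it, using Example~\ref{fallo}. There, at a point $t_w\in\mathrm{Dom}(f_{ww''})\setminus\mathrm{Dom}(f_{ww'})$, the slice of the right-hand side inside $T_{w'}$ is empty while $f_{ww'}(t_w\!\uparrow^\ast)$ may not be. U-Dom excludes exactly this mismatch: each target flow receives a reference point $f_{ww'}(t_w)$ whenever any function is defined at $t_w$. The remaining care is bookkeeping. One must check that the per-flow slices of $\mathcal{F}_w(\{t_w\})\!\uparrow$ do not interfere, which holds because the flows are disjoint and each order relation lives within a single flow. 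One must also check that the per-function results are applied with the correct strict or non-strict reading, which Remark~\ref{stlo} ensures.
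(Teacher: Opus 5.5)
Your proposal is correct and follows the route the paper indicates. The paper gives no detailed proof, only the remark that under U-Dom the aggregate image $\mathcal{F}_w(X)$ ``behaves as if it came from a single function'', so that the per-function characterisations of Theorems~\ref{fundamentos} and~\ref{local} carry over, while the totality and surjectivity rows are Theorem~\ref{caracterizafunciones}. Your flow-by-flow decomposition is a precise version of that remark: you intersect both sides with each target $T_{w'}$, use that the order operators never leave a flow, and use that U-Dom supplies the reference point $f_{ww'}(t_w)$ in every target flow.
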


\medskip\noindent
These characterisations lead directly to defining formulas in $L_{T\times W}$. The formulas for totality, non-totality, and surjectivity from Section~\ref{sec:ordered-multiflow} remain valid under U-Dom and are omitted here. For the remaining properties, the U-Dom condition introduces the disjunct $\square\bot$ to handle points outside the common domain $\operatorname{Dom}_U(\mathcal{F}_w)$. We present the formulas separately for the two temporal interpretations. With the exception of totality and non-totality, which require specific treatment based on the order's reflexivity, these adapted formulas extend the results of \cite{Burrieza2003} to all order types.

\medskip\noindent
\textit{Under the standard interpretation ($G,H$).}
\begin{align*}
(UD\text{-}Inj)     &: \square(Hp \land Gp) \to (\square\bot \lor (H\square p \land G\square p)) \\[4pt]
(UD\text{-}Inc)    &: \square Gp \to (\square\bot \lor G\square p) \qquad\text{(non-strict orders)}\\[4pt]
(UD\text{-}StrInc)   &: \square(p \land Gp) \to (\square\bot \lor G\square p) \qquad\text{(strict orders)}\\[4pt]
(UD\text{-}Dec)    &: \square Gp \to (\square\bot \lor H\square p) \qquad\text{(non-strict orders)}\\[4pt]
(UD\text{-}StrDec)   &: \square(p \land Gp) \to (\square\bot \lor H\square p) \qquad\text{(strict orders)}\\[4pt]
(UD\text{-}Con)      &: \square p \to (\square\bot \lor (H\square p \land G\square p))
\end{align*}

\medskip\noindent
\textit{Under the strict interpretation ($G^\ast,H^\ast$).}
For the strict interpretation, the formulas from Lemma~\ref{lem:strict-persistence} carry over directly, with the addition of $\square\bot$ as above. Totality, non-totality, and surjectivity are given by the formulas in Lemma~\ref{lem:strict-persistence}(1a) with $G,H$ replaced by $G^\ast,H^\ast$; we omit them here. For the remaining properties, the strict interpretation already excludes the present, so the distinction between non-strict and strict orders disappears:
\begin{align*}
(UD\text{-}Inj)_\ast &: \square(H^\ast p \land G^\ast p) \to (\square\bot \lor (H^\ast\square p \land G^\ast\square p)) \\[4pt]
(UD\text{-}Inc)_\ast &: \square G^\ast p \to (\square\bot \lor G^\ast\square p) \\[4pt]
(UD\text{-}StrInc)_\ast &: \square(p \land G^\ast p) \to (\square\bot \lor G^\ast\square p) \\[4pt]
(UD\text{-}Dec)_\ast &: \square G^\ast p \to (\square\bot \lor H^\ast\square p) \\[4pt]
(UD\text{-}StrDec)_\ast &: \square(p \land G^\ast p) \to (\square\bot \lor H^\ast\square p) \\[4pt]
(UD\text{-}Con)_\ast &: \square p \to (\square\bot \lor (H^\ast\square p \land G^\ast\square p))
\end{align*}

\begin{theorem}[Coincidence with minimal frames]\label{thm:udom-coincidence}
For every order type $\Or$, a functional property $P$ is definable in minimal frames ($\Or^2$) if and only if it is definable in U-Dom frames of type $\Or$.  
In particular:
\begin{itemize}
    \item The ``hard core'' of properties---totality, non-totality, injectivity, surjectivity, constancy---remains undefinable in non-linear orders under both settings.
    \item Monotonicity, antitonicity and their strict variants are definable in all order types.
    \item Definability in linear orders is maximal: all nine properties considered become definable (with the appropriate interpretation, standard or strict).
\end{itemize}
\end{theorem}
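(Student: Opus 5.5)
The plan is to prove the two directions separately, using the fact that minimal frames are themselves U-Dom frames. A frame with a single function $f_{ww'}$ trivially satisfies $\operatorname{Dom}(f_{ww'})=\operatorname{Dom}(f_{ww''})$ for all $f_{ww'},f_{ww''}\in\mathcal{F}_w$, so $\Or^2\subseteq$ U-Dom$^{\Or}$.

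\emph{Undefinability transfers from minimal frames to U-Dom.} Every undefinability witness used for $\Or^2$ lives inside U-Dom$^{\Or}$. These witnesses are the surjective p-morphisms of Figures~\ref{fig:total}, \ref{fig:surj}, \ref{fig:inj_SPRE2}, \ref{fig:inj_P-PO-LO}, \ref{fig:strict-antitonicity_P-PO-LO} (with Convention~(RC)), together with the direct valuation-copying argument for constancy in Lemma~\ref{lem:total-surj-const}. Each pair consists of minimal frames, hence U-Dom frames, one satisfying $P$ and the other not, and no formula separates them. So no set $\Gamma$ can define $P$ in U-Dom$^{\Or}$ either. This gives ``definable in U-Dom $\Rightarrow$ definable in $\Or^2$''. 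A defining set for U-Dom$^{\Or}$ also defines $P$ in the subclass $\Or^2$, since restriction of the ambient class preserves definability.

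\emph{Definability transfers from minimal frames to U-Dom.} For each entry marked \texttt{Yes} in Tables~\ref{tab:definability-minimal-frames} and \ref{tab:definability-minimal-frames-strict}, I will verify that the corresponding formula listed after Theorem~\ref{thm:UDom-arrow-characterizations} defines $P$ in U-Dom$^{\Or}$. The schema is the same as in Lemmas~\ref{lem:injectivity}--\ref{lem:strictness}. Theorem~\ref{thm:UDom-arrow-characterizations} supplies the arrow-style condition in aggregate form. Lemma~\ref{lem:image-modal-unified} then translates the $\mathcal{F}_w(X)\subseteq h(A)$ and $\mathcal{F}_w(X)\!\uparrow\subseteq h(A)$ clauses into $\square$, $G$, $H$. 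The disjunct $\square\bot$ disposes of points $t_w\notin\operatorname{Dom}_U(\mathcal{F}_w)$. For the converse direction, we pick $t_w\in\operatorname{Dom}_U(\mathcal{F}_w)$ where the aggregate condition fails and set $h(p)$ to the relevant segment, e.g.\ $\mathcal{F}_w(\{t_w\})\!\downarrow^\ast\cup\mathcal{F}_w(\{t_w\})\!\uparrow^\ast$. The hypothesis $t_w\in\operatorname{Dom}_U$ makes $\square\bot$ false there, so the formula is refuted. Totality, non-totality and surjectivity need no new work: they are already definable in the full multiflow class $\mathbb{K}^{\LO}$/$\mathbb{K}^{\sLO}$ by Theorem~\ref{tot-surj} and $(\Ntot)^2$-style formulas, and restricting to U-Dom preserves this.

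\emph{Main obstacle.} The delicate step is the correctness of Theorem~\ref{thm:UDom-arrow-characterizations} for injectivity, constancy and the (anti)tonicity properties. The aggregate image over several functions with distinct targets must behave like a single function, and in a disjoint union of target orders the segments $\mathcal{F}_w(\{t_w\})\!\uparrow$ are computed separately in each $T_{w'}$. I would prove the aggregate conditions componentwise. Because the targets are disjoint (Remark~\ref{remark1}(i)) and $\uparrow,\downarrow$ never cross flows, the inclusion $\mathcal{F}_w(X)\subseteq\mathcal{F}_w(\{t_w\})\!\uparrow$ splits into the per-function inclusions $f_{ww'}(X)\subseteq f_{ww'}(t_w)\!\uparrow$. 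U-Dom guarantees that every $f_{ww'}$ is defined at $t_w$, which is exactly what fails in Example~\ref{fallo}. Each split inclusion then reduces to Theorems~\ref{fundamentos} and \ref{local}. With this componentwise lemma in hand, the ``hard core'' and linear-order bullet points follow by reading off the tables.

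For constancy, the componentwise condition should be $f_{ww'}(X)\subseteq\{f_{ww'}(t_w)\}$, and $(UD\text{-}Con)$ needs checking with $h(p)=\mathcal{F}_w(\{t_w\})$. I expect this to be the one place where the argument must be adjusted carefully.
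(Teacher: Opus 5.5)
Your architecture matches the paper's proof sketch. Minimal frames are trivially U-Dom, so the minimal counterexamples transfer, including the valuation-copying argument for constancy. Definability comes from translating the aggregate characterisations of Theorem~\ref{thm:UDom-arrow-characterizations} through Lemma~\ref{lem:image-modal-unified}, with the extra disjunct $\square\bot$. Your componentwise justification of those characterisations is a useful addition that the paper omits: segments never cross flows, and U-Dom guarantees every $f_{ww'}$ is defined at each $t_w\in\operatorname{Dom}_U(\mathcal{F}_w)$. Your worry about constancy is unfounded. With $h(p)=\mathcal{F}_w(\{t_w\})$ one has $h(p)\cap T_{w'}=\{f_{ww'}(t_w)\}$, so the refutation goes through exactly as for the other properties.

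The genuine gap is non-totality. You claim it is already definable in the full multiflow classes $\mathbb{K}^{\LO}$, $\mathbb{K}^{\sLO}$ and is therefore inherited by U-Dom frames. That is false: Theorem~\ref{thm:tot-surj} (see Table~\ref{tab:definability-summary}) shows non-totality is undefinable in every order type of the multiflow setting. The left frame of Figure~\ref{fig:inj_SPO-SOL} illustrates this. It is a strict linear order in which $f_{wv}$ and $f_{wu}$ are both non-total, but their domains $\{2_w\}$ and $\{1_w\}$ together cover $T_w$. Hence $\square\bot$ holds nowhere in $T_w$, and $(\Ntot)^2$ fails at $1_w$. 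So there is nothing to restrict. Theorem~\ref{thm:UDom-arrow-characterizations} also has no row for non-totality, so your generic recipe does not cover it either.

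The missing argument is exactly where U-Dom does the work.
\begin{itemize}
\item In a U-Dom frame with $\mathcal{F}_w\neq\varnothing$, one has $t_w\in h(\square\bot)$ iff $t_w\notin\operatorname{Dom}_U(\mathcal{F}_w)$. So every function in $\mathcal{F}_w$ is non-total iff some point of $T_w$ satisfies $\square\bot$. This holds trivially when $\mathcal{F}_w=\varnothing$.
\item By trichotomy, $P\square\bot\lor\square\bot\lor F\square\bot$ is then valid on $T_w$. In $\LO$ under the standard reading the middle disjunct is redundant; under the strict reading use $P^\ast$ and $F^\ast$.
\item Conversely, if some $f_{ww'}$ is total, U-Dom forces $\operatorname{Dom}_U(\mathcal{F}_w)=T_w$. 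Then $\neg\square\bot$ holds throughout $T_w$, and the formula fails at every point of $T_w$.
\end{itemize}
With this case added, your argument covers all the entries claimed.
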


\begin{proof}[Proof sketch]
The verification follows the same pattern established in Section~\ref{sec:ordered-multiflow} for totality and surjectivity: each arrow-style characterisation in Theorem~\ref{thm:UDom-arrow-characterizations} is translated into the corresponding modal formula via Lemma~\ref{lem:image-modal-unified}. The disjunct $\square\bot$ handles points outside the common domain $\operatorname{Dom}_U(\mathcal{F}_w)$, where no function is defined.

Undefinability follows from the same counter-examples used for minimal frames (see the Appendix). Since each involves at most two flows and one function, it trivially satisfies the U-Dom condition while witnessing the indefinability of the corresponding property.\hfill\qed
\end{proof}

\medskip\noindent
This coincidence reveals that functional multiplicity only obscures definability when functions are allowed to diverge in their domains. Once a common structural ground is established---whether through the geometric simplicity of minimal frames or the domain regularity of the U-Dom condition---the basic operators of $L_{T\times W}$ recover their full capacity to characterize functional mappings. Thus, while U-Dom frames retain a genuine multiflow architecture, their domain regularity yields exactly the same definability patterns for the functional properties under study as minimal frames.

\section{Conclusions and Future Work}
\label{sec:conclusions}

This work provides a systematic map of definability for basic functional properties in the modal-temporal language $L_{T\times W}$ across different order types. The results show that definability is governed by two factors: the locality of the Priorean operators (under both their standard ($G,H$) and strict ($G^*,H^*$) interpretations) and the structural complexity of the underlying frames.

In the original multiflow setting, definability is severely limited: only totality and surjectivity are definable, and only within linear orders. To overcome these limitations, we introduced \emph{minimal frames}, which restrict each frame to at most two flows and a single accessibility function. This simplification transforms the modal operators $\square, \lozenge$ from second-order quantifiers over arbitrary families of functions into first-order ones, yielding substantial gains in definability. In this setting, monotonicity and antitonicity become definable in all order types, and for linear orders the picture is nearly complete: all nine properties become definable, with the sole exception of injectivity in $\LO$ when evaluated under the standard interpretation.

Even in minimal frames, however, a core set of properties remains undefinable in non-linear orders. Totality, non-totality, injectivity, surjectivity, and constancy cannot be defined in $\PRE, \sPRE$, or $\PO$. This limitation stems from the inherent locality of temporal modalities, which cannot relate points in incomparable or isolated parts of the domain.

The \emph{strict interpretation} ($G^*,H^*$) partially overcomes this barrier. In minimal frames, it allows us to define strict monotonicity and strict antitonicity across all order types, and to recover injectivity in $\LO$. Nevertheless, totality, non-totality, surjectivity, and constancy remain undefinable in non-linear orders even under this strict reading.

A methodological point concerns constancy in the minimal setting. Unlike the other properties studied, constancy is preserved by surjective p-morphisms in minimal frames, making the usual counterexample technique inapplicable. We therefore used a semantic equivalence argument to show that, in non-linear orders, $L_{T\times W}$ cannot distinguish constant from non-constant functions. This exceptional behaviour highlights that constancy occupies a special place among the properties considered.

Our comparison with indexed languages and Uniform Domain (U-Dom) frames confirms that minimal frames capture the same definability patterns while maintaining a simpler architecture. Once the structural noise of multiple divergent domains is eliminated, the basic operators of $L_{T\times W}$ recover their full capacity to characterize functional mappings.
These results point to two natural directions for future research.
Both share a common trait: they address definability not by refining
the order type, but by introducing mechanisms that operate
independently of it.

The first is to strengthen the connectivity of temporal orders by
means of \emph{cohesive frames}, where every flow is finitely
connected \cite{HughesCresswell1984}.  Capturing the finite
bidirectional paths between any two points naturally leads to
infinitary connectives.  The key question is whether such an
enrichment can extend definability of the ``hard core'' beyond the
linear case, by shifting the focus from the order type to the
structural property of cohesiveness.

The second direction is to add operators that quantify over all points
in a flow independently of the temporal order, such as the universal
modality $[U]$ \cite{GorankoPassy1992} or the universal inequality
operator $[\neq]$ \cite{deRijke1992}.  These operators allow the logic
to ``jump'' across time.  Because they bypass the order entirely, they
are not appropriate for treating orders as such, but they offer a
radically different route to definability for arbitrary functional
frames.

\newpage

\appendix
\section*{Appendix}
\setcounter{section}{1}
\renewcommand{\thesection}{}
\renewcommand{\thesubsection}{\arabic{subsection}}

\renewcommand{\thetheorem}{A.\thesubsection.\arabic{theorem}}
\renewcommand{\thedefinition}{A.\thesubsection.\arabic{theorem}}
\renewcommand{\thelemma}{A.\thesubsection.\arabic{theorem}}
\renewcommand{\theproposition}{A.\thesubsection.\arabic{theorem}}
\renewcommand{\thecorollary}{A.\thesubsection.\arabic{theorem}}
\renewcommand{\thenotation}{A.\thesubsection.\arabic{theorem}}
\renewcommand{\theconvention}{A.\thesubsection.\arabic{theorem}}
\renewcommand{\theobservation}{A.\thesubsection.\arabic{theorem}}

\renewcommand{\theremark}{A.\thesubsection.\arabic{remark}}
\renewcommand{\theexample}{A.\thesubsection.\arabic{example}}

\setcounter{theorem}{0}
\setcounter{remark}{0}
\setcounter{example}{0}

\subsection{Summary of Defining Formulas}
Tables~\ref{tab:definability-formulas-general} and~\ref{tab:definability-formulas-strict} below
show that the strict interpretation ($G^\ast,H^\ast$) yields a more uniform pattern of definability 
across the minimal order types of the $O^2$ family. 
Several properties that require different schemata under the standard interpretation ($G,H$) admit a single, 
uniform formula under the strict reading.

\noindent This uniformity reflects a shift in the definability profile. The strict interpretation loses sensitivity to reflexive loops but gains uniformity for properties tied to irreflexive accessibility.

\bigskip
\begin{table}[!ht]
\centering
\scriptsize
\begin{tabular}{|l|l|l|}
\hline
\textbf{Property} & \textbf{Formula} & \textbf{Definable in} \\
\hline
Totality & $\square(Hp \land Gp)\to (H\square p \land G\square p)$ & $LO^{2}$ \\
         & $\square(Hp \land p \land Gp)\to (H\square p \land G\square p)$ & $sLO^{2}$ \\
\hline
Non-totality & $P\square\bot \lor F\square\bot$ & $LO^{2}$ \\
             & $P\square\bot \lor \square\bot \lor F\square\bot$ & $sLO^{2}$ \\
\hline
Injectivity & $\lozenge(Hp \land Gp)\to (H\square p \land G\square p)$ & $sLO^{2}$ \\
\hline
Surjectivity & $(H\square p \land G\square p)\to \square(Hp \land Gp)$ & $LO^{2},\ sLO^{2}$ \\
\hline
Monotonicity & $\lozenge Gp \to G\square p$ & $PRE^{2},\ PO^{2},\ LO^{2}$ \\
             & $\lozenge (p\land Gp) \to G\square p$ & $sPRE^{2},\ sLO^{2}$ \\
\hline
Strict Monotonicity & $\lozenge Gp\to G\square p$ & $sPRE^{2},\ sLO^{2}$ \\
\hline
Antitonicity & $\lozenge Gp \to H\square p$ & $PRE^{2},\ PO^{2},\ LO^{2}$ \\
             & $\lozenge (p\land Gp) \to H\square p$ & $sPRE^{2},\ sLO^{2}$ \\
\hline
Strict Antitonicity & $\lozenge Gp\to H\square p$ & $sPRE^{2},\ sLO^{2}$ \\
\hline
Constancy & $\lozenge p \to (H\square p \land G\square p)$ & $LO^{2},\ sLO^{2}$ \\
\hline
\end{tabular}
\caption{Defining formulas for functional properties under the standard interpretation ($G,H$).}
\label{tab:definability-formulas-general}
\end{table}

\begin{table}[!ht]
\centering
\scriptsize
\begin{tabularx}{\textwidth}{|l|X|l|}
\hline
\textbf{Property} & \textbf{Formula} & \textbf{Definable in} \\
\hline
Totality & $\square(H^\ast p \land p \land G^\ast p)\to (H^\ast\square p \land G^\ast\square p)$ & $LO^2,\ sLO^2$ \\
\hline
Non-totality & $P^\ast\square\bot \lor \square\bot \lor F^\ast\square\bot$ & $LO^2,\ sLO^2$ \\
\hline
Injectivity & $\lozenge(H^\ast p \land G^\ast p)\to (H^\ast\square p \land G^\ast\square p)$ & $LO^2,\ sLO^2$ \\
\hline
Surjectivity & $(H^\ast\square p \land G^\ast\square p)\to \square(H^\ast p \land G^\ast p)$ & $LO^2,\ sLO^2$ \\
\hline
Monotonicity & $\lozenge (p\land G^\ast p) \to G^\ast\square p$ & all $O^2$ \\
\hline
Strict Monotonicity & $\lozenge G^\ast p\to G^\ast\square p$ & all $O^2$ \\
\hline
Antitonicity & $\lozenge (p\land G^\ast p) \to H^\ast\square p$ & all $O^2$ \\
\hline
Strict Antitonicity & $\lozenge G^\ast p\to H^\ast\square p$ & all $O^2$ \\
\hline
Constancy & $\lozenge p \to (H^\ast\square p \land G^\ast\square p)$ & $LO^2,\ sLO^2$ \\
\hline
\end{tabularx}
\caption{Defining formulas for functional properties under the strict interpretation ($G^\ast,H^\ast$).}
\label{tab:definability-formulas-strict}
\end{table}

\begin{remark}
\noindent
\begin{enumerate}
    \item The label ``all $\Or^2$'' indicates definability across all minimal order types in the $\Or^2$ family.
    \item For properties of order preservation, mirror-image formulations using $H$ and $H^\ast$ are also possible and equivalent over minimal frames to the future-oriented ones listed.
\end{enumerate}
\end{remark}

\subsection{Formal correspondence between minimal and indexed frames}\label{comparacion}

We now establish a precise formal correspondence between the minimal frame family $\Or^2$ and the class of indexed frames of type $\Or$. An \emph{indexed frame} is a structure $\Sigma^{\mathfrak{I}} = (W,\mathcal{T},\mathcal{F})$ where $W$ is a set of worlds, $\mathcal{T}$ a family of orders $(T_w,R_w)$ as in Definition~\ref{definition1}, and $\mathcal{F}$ a family of accessibility functions $f_{wi}$ indexed by $i\in\mathfrak{I}$, each $f_{wi}: T_w \rightharpoonup T_i$. The modal operators are then $[i]$ and $\langle i\rangle$, interpreted as:
\[
h([i]A) = \{t_w \mid f_{wi}(\{t_w\})\subseteq h(A)\}, \qquad
h(\langle i\rangle A) = \{t_w \mid f_{wi}(\{t_w\})\cap h(A)\neq\varnothing\}.
\]
(We refer to \cite[Definitions 2.2--2.4]{Burrieza2009} for the full details.)

\begin{definition}[Minimal projection]
Let $\Sigma^{\mathfrak{I}} = (W,\mathcal{T},\mathcal{F})$ be an indexed frame of type $\Or$. For each index $i \in \mathfrak{I}$, we define the \emph{minimal projection} of $\Sigma^{\mathfrak{I}}$ with respect to $i$ as the frame
\[
\Sigma_{wi} = 
\begin{cases}
(\{w,i\}, \{(T_w,R_w),(T_i,R_i)\}, \{f_{wi}\}) & \text{if } i \in W \text{ and } f_{wi} \text{ exists},\\[4pt]
(\{w\}, \{(T_w,R_w)\}, \varnothing) & \text{otherwise}.
\end{cases}
\]
In all cases, $\Sigma_{wi}$ belongs to the minimal class $\Or^2$.
\end{definition}

\begin{lemma}[Validity transfer]\label{lem:validity-transfer}
Let $\Sigma^{\mathfrak{I}} = (W,\mathcal{T},\mathcal{F})$ be an indexed frame of type $\Or$ and let $A \in L_{T\times W}$. For each index $i \in \mathfrak{I}$, let $\Sigma_{wi}$ be the minimal projection defined above. Then:
\[
\Sigma^{\mathfrak{I}} \models \{A(i) \mid i \in \mathfrak{I}\} \quad\text{iff}\quad 
\Sigma_{wi} \models A \text{ for every } i \in \mathfrak{I}.
\]

\begin{proof}
The proof follows from the semantics of indexed languages \cite[Definition 2.4]{Burrieza2009}. For a fixed $i$, the formula $A(i)$ is obtained from $A$ by replacing $\square$ with $[i]$ and $\lozenge$ with $\langle i\rangle$. In the indexed frame, $[i]$ and $\langle i\rangle$ refer exclusively to the function $f_{wi}$ (if it exists) and to no other. A straightforward structural induction on $A$ shows that the truth value of $A(i)$ at any coordinate $t_w$ depends only on:
\begin{itemize}
    \item the interpretation of $[i]$ and $\langle i\rangle$, which involves only $f_{wi}$;
    \item the temporal operators $G$ and $H$, which involve only the orders on $T_w$ and $T_i$;
    \item the atomic formulas, whose valuation is inherited directly from $\Sigma^{\mathfrak{I}}$.
\end{itemize}
These are exactly the components preserved in the minimal projection $\Sigma_{wi}$. Hence $A(i)$ is valid in $\Sigma^{\mathfrak{I}}$ iff $A$ is valid in $\Sigma_{wi}$. The equivalence for the whole set $\{A(i)\}$ follows by quantifying over all $i$. \end{proof}
\end{lemma}

\begin{theorem}[Definability equivalence]\label{thm:minimal-indexed}
A functional property $P$ is $L^{\mathfrak I}$-definable in the class of indexed frames of type $\Or$ if and only if it is definable in the class of minimal frames $\Or^2$.
\end{theorem}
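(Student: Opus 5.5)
The plan is to prove both directions through the minimal projection and Lemma~\ref{lem:validity-transfer}. Throughout, $P$ is a property of a single function. So an indexed frame $\Sigma^{\mathfrak I}$ of type $\Or$ has all functions satisfying $P$ iff every nonempty minimal projection $\Sigma_{wi}$ does, i.e.\ iff $\Sigma_{wi}\in\mathbb{K}^{\Or^2}_P$ for all $w,i$. Empty projections, with no function, satisfy $P$ vacuously; this matches the paper's convention that the case $\mathcal{F}=\varnothing$ is harmless.

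($\Leftarrow$) Suppose $\Gamma\subseteq L_{T\times W}$ defines $\mathbb{K}^{\Or^2}_P$ in $\mathbb{K}^{\Or^2}$. I would take the indexed schema $\Gamma^{\mathfrak I}:=\{A(i)\mid A\in\Gamma,\ i\in\mathfrak I\}$. By Lemma~\ref{lem:validity-transfer}, applied formula by formula, $\Sigma^{\mathfrak I}\models\Gamma^{\mathfrak I}$ iff $\Sigma_{wi}\models\Gamma$ for all relevant $w,i$. This holds iff every $\Sigma_{wi}$ lies in $\mathbb{K}^{\Or^2}_P$, which holds iff every $f_{wi}$ satisfies $P$, i.e.\ iff $\Sigma^{\mathfrak I}\in\mathbb{K}^{\Or}_P$.

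Care is needed with the quantification in the lemma. Its statement suppresses $w$, but validity in $\Sigma^{\mathfrak I}$ ranges over all source flows. I would therefore restate it as: $A(i)$ is valid in $\Sigma^{\mathfrak I}$ iff $A$ is valid on the $T_w$-part of $\Sigma_{wi}$ for each $w$. I would also note that points of $T_i$ inside $\Sigma_{wi}$ have no outgoing function, so they satisfy $\square$-formulas vacuously. These points are exactly points of $\Sigma^{\mathfrak I}$ evaluated under $[i]$ whenever $f_{ii}$ is absent. For this reason I would phrase the transfer in terms of validity at points of $T_w$ only, and check that the defining formulas' validity in $\Or^2$ is already determined at source points. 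This matching of evaluation points is the main obstacle. The remedy I would try first is to note that each $A(i)$ evaluated at points of $T_w$ involves only $f_{wi}$, so ranging over all $w$ covers every function. For target-only flows, I would check that $A$ is trivially valid there whenever it is valid in some frame of the class, using a frame with an empty-domain function viewpoint.

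($\Rightarrow$) Suppose $\Gamma^{\mathfrak I}$ defines $P$ among indexed frames of type $\Or$. Each minimal frame $\Sigma\in\Or^2$ with function $f_{wv}$ is itself an indexed frame with index set $\{v\}$. On such a frame $[v]$ coincides with $\square$ and $\langle v\rangle$ with $\lozenge$, and for $i\neq v$ the operators $[i]$ are vacuous. Define $\Gamma^\flat:=\{A[\square/[i]]\mid A\in\Gamma^{\mathfrak I}\}$, replacing every indexed box by $\square$. The difficulty is that formulas mixing several indices collapse nonuniformly. I would handle this by using that the index of the unique function is arbitrary. Relabeling the target flow so that every index in a given formula may be taken equal to $v$ yields isomorphic indexed frames, which gives the translation. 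Alternatively, I would appeal directly to the undefinability side: if $P$ is undefinable in $\Or^2$, witnessed by a surjective p-morphism or semantic equivalence between minimal frames, reading them as indexed frames yields the same witness. This follows the undefinability bullet of Section~\ref{subsec:relation-frameworks}. I expect to rely on this contrapositive route as the more robust one.
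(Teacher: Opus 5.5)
Your $(\Leftarrow)$ direction (minimal $\Rightarrow$ indexed) is the paper's argument: pass to the schema $\{A(i)\}$ and apply Lemma~\ref{lem:validity-transfer} to the minimal projections. The genuine gap is in $(\Rightarrow)$.

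\textbf{The gap in $(\Rightarrow)$.} The paper only handles a defining set of schematic form $\Gamma=\{A(i)\mid i\in\mathfrak I\}$. It collapses $\Gamma$ to the single formula $A$, renames the target flow of an arbitrary minimal frame as $i$, and transfers validity. You aim at an arbitrary $\Gamma^{\mathfrak I}$, but neither of your routes closes that case.

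First, the full collapse $\Gamma^\flat$ is not justified by relabeling. A relabeling gives the target flow one name $v$. For a formula $B$ with distinct indices $i\neq j$, validity of $B$ on the relabeled frame therefore concerns $B$ with its $[j]$-subformulas vacuous, not $B[\square/[i],\square/[j]]$. Nothing in your proposal connects the two.

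Second, the contrapositive route you say you will rely on presupposes that undefinability in $\Or^2$ comes with an explicit witness: a surjective p-morphism, or the valuation-copying pair used for constancy. That holds for the paper's nine properties. But the theorem quantifies over an arbitrary functional property $P$, and undefinability of a frame class is not in general witnessed by such a pair. These constructions are sufficient for undefinability, not necessary. As proposed, $(\Rightarrow)$ is therefore established only for the paper's own properties.

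\textbf{The missing idea.} It is already contained in your observation that on a minimal frame the non-target indices are vacuous. Fix $v_0\in\mathfrak I$ once and for all. Name the target flow of every minimal frame $v_0$, and the source, if distinct, some other index; when $W=\{w\}$ with $f_{ww}$, name $w$ itself $v_0$. Translate each $B\in\Gamma^{\mathfrak I}$ as follows: replace $[v_0]$ and $\langle v_0\rangle$ by $\square$ and $\lozenge$; for $j\neq v_0$, replace every subformula $[j]C$ by $\top$ and every $\langle j\rangle C$ by $\bot$. On the renamed frame, $B$ and its translation have the same truth value at every point. Hence the translated set defines $P$ in $\Or^2$ for any $\Gamma^{\mathfrak I}$, with no appeal to witnesses. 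For a schema $\{A(i)\}$ this essentially reduces to the paper's argument.

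\textbf{A caveat you share with the paper.} Your remedy in $(\Leftarrow)$ says that $A(i)$ evaluated at points of $T_w$ involves only $f_{wi}$; this is the content of Lemma~\ref{lem:validity-transfer}. It holds only when no $\square$ or $\lozenge$ occurs in the scope of another, because at the image point $[i]$ refers to $f_{ii}$. Here is a concrete failure in $\LO$. Take $A=(Hp\land Gp)\to\square(\lozenge\top\to p)$, which is valid on every $\LO^2$ frame. Now take the indexed frame with $T_w=\{a\}$, $T_i=\{b\}$, $f_{wi}(a)=b$, $f_{ii}(b)=b$, and $h(p)=\{a\}$. Then $A(i)$ fails at $a$, even though both projections $\Sigma_{wi}$ and $\Sigma_{ii}$ validate $A$. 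Every defining formula in the paper has $\square$-depth one, so its tables are unaffected. The transfer step, however, needs that restriction.
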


\begin{proof}
$(\Rightarrow)$ Suppose $P$ is $L_{T\times W}$-definable in minimal frames $\Or^2$ by a formula $A \in L_{T\times W}$. Let $\Sigma^{\mathfrak{I}}$ be an indexed frame of type $\Or$.

If every function $f_{wi}$ in $\Sigma^{\mathfrak{I}}$ satisfies $P$, then each minimal projection $\Sigma_{wi}$ validates $A$ (since $A$ defines $P$ in $\Or^2$). By Lemma~\ref{lem:validity-transfer}, the set $\{A(i) \mid i \in \mathfrak{I}\}$ is valid in $\Sigma^{\mathfrak{I}}$. 

If some function $f_{wi}$ in $\Sigma^{\mathfrak I}$ fails $P$, then its corresponding projection $\Sigma_{wi}$ falsifies $A$, so by the lemma the set $\{A(i)\}$ is not valid in $\Sigma^{\mathfrak{I}}$. 

Therefore $\{A(i)\}$ defines the class of indexed frames whose functions all satisfy $P$, which is exactly the $L^I$-definability of $P$ in $\Or$.

$(\Leftarrow)$ Let $\Gamma = \{A(i) \mid i \in \mathfrak{I}\}$ define $P$ in indexed frames of type $\Or$, i.e., $P$ is $L^I$-definable in $\Or$. Define $A$ by replacing every occurrence of $[i]$ with $\square$ and $\langle i\rangle$ with $\lozenge$ in any formula of $\Gamma$. (The choice of index is irrelevant, as all $A(i)$ are syntactically identical up to the index label.)

Take an arbitrary minimal frame $\Sigma_{ww'}$ of type $\Or^2$. Choose an index $i \in \mathfrak{I}$ and rename the flow $w'$ as $i$, obtaining an indexed frame $\Sigma^{\mathfrak{I}}$ with a single relevant index and the same temporal structure. By construction, $\Sigma_{ww'}$ satisfies $P$ iff $\Sigma^{\mathfrak{I}}$ satisfies $P$. Since $\Gamma$ defines $P$ in indexed frames of type $\Or$, we have that $\Gamma$ is valid in $\Sigma^{\mathfrak{I}}$ iff $\Sigma^{\mathfrak{I}}$ satisfies $P$. Applying Lemma~\ref{lem:validity-transfer}, we obtain that $\Gamma$ is valid in $\Sigma^{\mathfrak{I}}$ iff $A$ is valid in $\Sigma_{ww'}$. Therefore, $A$ is valid in $\Sigma_{ww'}$ iff $\Sigma_{ww'}$ satisfies $P$, i.e., $A$ defines $P$ in $\Or^2$. Hence $P$ is $L_{T\times W}$-definable in $\Or^2$.
\end{proof}

\begin{corollary}
The definability patterns established in Section~\ref{sec:minimal} for minimal frames (Tables~\ref{tab:definability-minimal-frames} and~\ref{tab:definability-minimal-frames-strict}) hold verbatim for indexed frames over the corresponding order types.
\end{corollary}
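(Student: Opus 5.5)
The corollary should follow directly from Theorem~\ref{thm:minimal-indexed}, applied entry by entry to Tables~\ref{tab:definability-minimal-frames} and~\ref{tab:definability-minimal-frames-strict}. I would first check that each of the nine properties is a property of a single function. Then I would note that both directions of Theorem~\ref{thm:minimal-indexed} hold for either temporal interpretation. Lemma~\ref{lem:validity-transfer} uses $G,H$ (or $G^\ast,H^\ast$) only through the orders on $T_w$ and $T_i$, and these orders are kept intact in the minimal projection. So the inductive step for the temporal operators goes through unchanged under the strict clauses $R_w^\bullet$. Once this is fixed, every \texttt{Yes} entry in either table gives definability of $P$ in indexed frames of type $\Or$ via the schema $\{A(i)\mid i\in\mathfrak I\}$, and every \texttt{No} entry gives undefinability by contraposition of the $(\Leftarrow)$ direction.

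In order, the steps are:
\begin{enumerate}
\item Fix an interpretation (standard or strict) and an order type $\Or$.
\item For a \texttt{Yes} entry, take the defining formula $A$ from Lemmas~\ref{lem:total-surj-const}--\ref{lem:strictness}, or from Lemmas~\ref{lem:strict-persistence}--\ref{lem:strict-gains} in the strict case, and apply the $(\Rightarrow)$ direction of Theorem~\ref{thm:minimal-indexed}.
\item For a \texttt{No} entry, suppose some $\Gamma\subseteq L^{\mathfrak I}$ defined $P$ in indexed frames of type $\Or$. The $(\Leftarrow)$ direction would then produce an $L_{T\times W}$-formula defining $P$ in $\Or^2$, contradicting the corresponding table entry.
\end{enumerate}

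As an independent check on the \texttt{No} entries, one can transport the counterexamples directly. Figures~\ref{fig:total}, \ref{fig:surj}, \ref{fig:inj_SPRE2}, \ref{fig:inj_P-PO-LO} and~\ref{fig:strict-antitonicity_P-PO-LO} involve only two flows and one function. Relabelling the target flow as an index $i$ turns them into indexed frames, and the same surjective p-morphism then yields an indexed functional bisimulation. For constancy, the pointwise valuation-copying argument from Lemma~\ref{lem:total-surj-const} carries over verbatim with $\square$ replaced by $[i]$.

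The main obstacle is the $(\Leftarrow)$ direction of Theorem~\ref{thm:minimal-indexed}, which the \texttt{No} entries depend on. A defining set $\Gamma$ of indexed formulas may mix several indices in a single formula, so it is not obvious that $\Gamma$ has the uniform form $\{A(i)\}$. I would handle this on the minimal frame renamed to a single index $i$: every operator $[j]$ with $j\neq i$ acts as a vacuous box there, so I would replace $[j]B$ by $\top$ and $\langle j\rangle B$ by $\bot$ before collapsing $[i]$ to $\square$. The resulting set of $L_{T\times W}$-formulas still defines $P$ in $\Or^2$. This is the step I would write out carefully. If the uniformity assumption turns out to be problematic, I would fall back on the direct transport of the counterexamples, which gives the \texttt{No} entries without using $(\Leftarrow)$ at all.
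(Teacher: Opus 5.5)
Your proposal is correct and follows the paper's route. You read the corollary off Theorem~\ref{thm:minimal-indexed} entry by entry: \texttt{Yes} entries via the schema $\{A(i)\mid i\in\mathfrak I\}$, \texttt{No} entries by contraposition. You use the transport of the two-flow counterexamples sketched in Section~\ref{subsec:relation-frameworks} as backup.

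Your $\top/\bot$ substitution for foreign indices improves on the paper, whose $(\Leftarrow)$ argument tacitly assumes the defining set has the form $\{A(i)\}$. For the \texttt{Yes} entries, note that the locality claim of Lemma~\ref{lem:validity-transfer} fails for nested modalities: at $t_w$, the formula $[i][i]p$ reaches $f_{ii}$, which $\Sigma_{wi}$ discards. The $(\Rightarrow)$ step goes through only because every defining formula in the tables has $\square$-depth one.
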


\subsection{Figures of undefinability}\label{appendix:figures}
The following eight figures provide a catalogue of counterexamples illustrating the indefinability of functional properties across different order types. Each figure shows a pair of frames connected by a surjective p-morphism, where the property in question holds in the source but fails in the target. The appendix thus complements the proofs in the text by offering a systematic gallery of undefinability patterns.

\medskip
\noindent\textsc{Conventions for establishing undefinability.}

\smallskip
\noindent\emph{Convention (RC).} All counterexample figures in the Appendix (except Figures~\ref{fig:inj_P-PO-LO} and~\ref{fig:strict-antitonicity_P-PO-LO}) depict strict orders. To obtain counterexamples for the reflexive classes $\PRE$, $\PO$, and $\LO$, we take the reflexive closure of the depicted strict order (i.e., we add a reflexive loop at every point). The resulting frame is a counterexample for the corresponding reflexive order type.

\medskip
\noindent Graphical elements (rectangles, circles, arrows) are explained only at their first occurrence; the explanation is not repeated in later figures. If a figure introduces no new elements, its caption will state: ``No new graphical elements are introduced in this figure.'' Figures that depict strict orders can be reused for reflexive classes by applying the reflexive closure (Convention~RC).

\begin{figure}[ht]
\begin{tikzpicture}[>=latex]
\usetikzlibrary{arrows.meta}
  \tikzset{
    hom/.style={->, thick, dashed, >=open triangle 45}
  }

  % Nodos
  \node[circle, draw=black, fill=black!20, minimum size=0.3cm] (A) at (-3.5,1.2) {}; % 1w
  \node[circle, draw=black, fill=black!20, minimum size=0.3cm] (C) at (-4.5,-1.2) {}; % 2v
  \node[circle, draw=black, fill=black!20, minimum size=0.3cm] (B) at (-2.5,-1.2) {}; % 3v
  \node[circle, draw=black, fill=black!20, minimum size=0.3cm] (D) at (2.5,1.2) {};  % 4w'
  \node[circle, draw=black, fill=black!20, minimum size=0.3cm] (E) at (3.5,-1.2) {}; % 6v'
  \node[circle, draw=black, fill=black!20, minimum size=0.3cm] (F) at (4.5,1.2) {}; % 5w'

  % Etiquetas
  \node[left=9pt] at (A.east) {$1_w$};
  \node[left=9pt] at (C.east) {$2_v$};
  \node[left=9pt] at (B.east) {$3_v$};
  \node[right=1pt] at (D.east) {$4_{w'}$};
  \node[right=1pt] at (F.east) {$5_{w'}$};
  \node[right=1pt] at (E.east) {$6_{v'}$};

  % Flechas funcionales (dobles sólidas)
  \draw[->, thick, double] (D) -- (E);     
  \draw[->, thick, double] (A) -- (C);     

  % Homomorfismo sobreyectivo (flechas estilizadas)
  \draw[hom] (B) -- (F);                
  \draw[hom] (A) -- (D);                
  \draw[hom, bend right] (C) to (E);    

  % Marcos y elipses
  \draw[thick] (-6,-3) rectangle (-1,3);               
  \draw[thick] (-3.5,1.2) ellipse (1.4 and 0.8);        
  \draw[thick] (-3.5,-1.2) ellipse (2.2 and 1.0);       

  \draw[thick] (1,-3) rectangle (6,3);                 
  \draw[thick] (3.5,1.2) ellipse (2.2 and 1.0);         
  \draw[thick] (3.5,-1.2) ellipse (1.4 and 0.8);        

  % Etiquetas Σ y Σ'
  \node[below=10pt] at (-3.5,-3) {$\Sigma$};
  \node[below=10pt] at (3.5,-3) {$\Sigma'$};

\end{tikzpicture}

\begin{center}
\begin{minipage}{0.9\textwidth}
\small
\textit{Each square represents a functional frame: the left frame satisfies the property under consideration (here, totality), while the right frame does not. Circular nodes with subscripts denote coordinates (elements of the flows, represented by ellipses). Double solid arrows indicate functional edges; dashed arrows with open triangle heads indicate the surjective p-morphism $Z$. We describe only the graphical elements appearing in the current figure; relations $R$ and functions $f$ follow the conventions stated above (including Convention~RC). This practice will be maintained throughout. 
The surjective p-morphism is: $Z=\{(1_{w},\,4_{w'}),\ (2_{v},\,6_{v'}),\ (3_{v},\,5_{w'})\}$.}
\end{minipage}
\end{center}

\caption{Undefinability of totality in strict preorders.
In $\Sigma$ totality holds, whereas in $\Sigma'$ it fails at $5_{w'}$, which has no image under $f_{w'v'}$.}

\label{fig:total}
\end{figure}

%%%
\begin{figure}[ht]
\begin{tikzpicture}[>=latex]

  % Nodos
  \node[circle, draw=black, fill=black!20, minimum size=0.3cm] (A) at (-4.5,1.2) {}; % 1w
  \node[circle, draw=black, fill=black!20, minimum size=0.3cm] (B) at (-2.5,1.2) {}; % 2w
  \node[circle, draw=black, fill=black!20, minimum size=0.3cm] (C) at (-3.5,-1.2) {}; % 3v
  \node[circle, draw=black, fill=black!20, minimum size=0.3cm] (D) at (3.5,1.2) {};  % 4w'
  \node[circle, draw=black, fill=black!20, minimum size=0.3cm] (E) at (2.5,-1.2) {}; % 5v′
  \node[circle, draw=black, fill=black!20, minimum size=0.3cm] (F) at (4.5,-1.2) {}; % 6v′
  
  % Etiquetas
  \node[left=9pt] at (A.east) {$1_w$};
  \node[left=9pt] at (B.east) {$2_w$};
  \node[left=9pt] at (C.east) {$3_v$};
  \node[right=1pt] at (D.east) {$4_{w'}$};
  \node[right=1pt] at (E.east) {$5_{v'}$};
  \node[right=1pt] at (F.east) {$6_{v'}$};
  
  % Flechas funcionales (dobles sólidas)
  \draw[->, thick, double] (B) to (C);                
  \draw[->, thick, double] (D) to (F);     

  % Homomorfismo sobreyectivo (discontinuas reforzadas) -- puntas Z igual que referencia
  \draw[->, thick, dashed, >=open triangle 45] (B) -- (D);                
  \draw[->, thick, dashed, bend right, >=open triangle 45] (C) to (F);                
  \draw[->, thick, dashed, >=open triangle 45] (A) -- (E);                

  % Marcos y elipses
  \draw[thick] (-6,-3) rectangle (-1,3);               
  \draw[thick] (-3.5,1.2) ellipse (2.2 and 1.0);        
  \draw[thick] (-3.5,-1.2) ellipse (1.4 and 0.8);       

  \draw[thick] (1,-3) rectangle (6,3);                 
  \draw[thick] (3.5,1.2) ellipse (1.4 and 0.8);         
  \draw[thick] (3.5,-1.2) ellipse (2.2 and 1.0);        

  % Etiquetas Σ y Σ'
  \node[below=10pt] at (-3.5,-3) {$\Sigma$};
  \node[below=10pt] at (3.5,-3) {$\Sigma'$};

\end{tikzpicture}

\begin{center}
\begin{minipage}{0.9\textwidth}
\small
\textit{Graphical conventions unchanged. \\[4pt] 
The surjective p-morphism is: $Z=\{(1_{w},\,5_{v'}),\ (2_{w},\,4_{w'}),\ (3_{v},\,6_{v'})\}$.}
\end{minipage}
\end{center}

\caption{Undefinability of non-totality and surjectivity in strict preorders. In \(\Sigma\) the two properties hold, whereas in \(\Sigma'\) non-totality fails (since totality holds) and surjectivity fails because the codomain element \(5_{v'}\) has no preimage under \(f_{w'v'}\).}

\label{fig:surj}
\end{figure}

\par\bigskip
%%%%%%
\begin{figure}[ht]
\begin{tikzpicture}[>=latex]

  % Nodos
  \node[circle, draw=black, fill=black!20, minimum size=0.3cm] (A) at (-4.5,4.2) {}; % 1w
  \node[circle, draw=black, fill=black!20, minimum size=0.3cm] (B) at (-2.5,4.2) {}; % 2w
  \node[circle, draw=black, fill=black!20, minimum size=0.3cm] (C) at (-3.0,1.5) {}; % 3v
  \node[circle, draw=black, fill=black!20, minimum size=0.3cm] (D) at (4.5,1.2) {};  % 6w'
  \node[circle, draw=black, fill=black!20, minimum size=0.3cm] (E) at (3.5,-1.2) {}; % 7v′
  \node[circle, draw=black, fill=black!20, minimum size=0.3cm] (F) at (-3.5,-1.2) {}; % 4u
  \node[circle, draw=black, fill=black!20, minimum size=0.3cm] (G) at (2.5,1.2) {};  % 5w'
  
  % Etiquetas
  \node[left=9pt] at (A.east) {$1_w$};
  \node[right=3pt] at (B.east) {$2_w$};
  \node[left=9pt] at (C.east) {$3_v$};
  \node[right=1pt] at (D.east) {$6_{w'}$};
  \node[right=1pt] at (E.east) {$7_{v'}$};
  \node[left=9pt] at (F.east) {$4_u$};
  \node[left=9pt] at (G.east) {$5_{w'}$};
  
  % Flechas internas
  \draw[->, thick] (A) -- (B);                  
  \draw[->, thick, double, dashed] (B) -- (C);  
  \draw[->, thick, double] (D) -- (E);          
  \draw[->, thick, double, bend right] (A) to (F); 
  \draw[->, thick] (G) -- (D);                  
  \draw[->, thick, double] (G) -- (E);          
  
  % Homomorfismo redirigido: flechas Z con la misma punta abierta de referencia
  \draw[->, thick, dashed, >=open triangle 45] (B) -- (D.north);          
  \draw[->, thick, dashed, >=open triangle 45] (C) -- (E);          
  \draw[->, thick, dashed, bend right=30, >=open triangle 45] (F) to (E); 
  \draw[->, thick, dashed, >=open triangle 45] (A) -- (G.north);          
  
  % Marcos y elipses
  \draw[thick] (-6,-3) rectangle (-1,6);               
  \draw[thick] (-3.5,4.2) ellipse (2.2 and 1.0);    
  \draw[thick] (-3.0,1.5) ellipse (1.4 and 0.8);        
  \draw[thick] (-3.5,-1.2) ellipse (1.4 and 0.8);       

  \draw[thick] (1,-3) rectangle (6,3);                 
  \draw[thick] (3.5,1.2) ellipse (2.2 and 1.0);         
  \draw[thick] (3.5,-1.2) ellipse (1.4 and 0.8);        

  % Etiquetas Σ y Σ′
  \node[below=10pt] at (-3.5,-3) {$\Sigma$};
  \node[below=10pt] at (3.5,-3) {$\Sigma'$};

\end{tikzpicture}

\begin{center}
\begin{minipage}{0.9\textwidth}
\small
\textit{This is the first example where the left frame contains three flows. 
Single solid arrows indicate precedence between elements. 
Dashed double arrows, like double solid arrows, denote functional relations; the two styles distinguish different functions originating from the same domain. 
Curved functional arrows are also introduced in this figure. \\[4pt]
The surjective p-morphim is: $Z=\{(1_{w},\,5_{w'}),\ (2_{w},\,6_{w'}),\ (3_{v},\,7_{v'}),\ (4_{u},\,7_{v'})\}$.}

\end{minipage}
\end{center}
\caption{Undefinability of non-totality, injectivity, and strict monotonicity in strict preorders and strict linear orders. In \(\Sigma\) the three properties hold, whereas in \(\Sigma'\) non-totality fails (totality holds instead), injectivity fails because distinct elements share the same image, and strict monotonicity fails for the same reason.}

\label{fig:inj_SPO-SOL}
\end{figure}
%%%%

\begin{figure}[ht]
\begin{tikzpicture}[>=latex]

  % Nodos
  \node[circle, draw=black, fill=black!20, minimum size=0.3cm] (A) at (-4.5,4.2) {}; % 1w
  \node[circle, draw=black, fill=black!20, minimum size=0.3cm] (B) at (-2.5,4.2) {}; % 2w
  \node[circle, draw=black, fill=black!20, minimum size=0.3cm] (C) at (-2.5,1.5) {}; % 4v
  \node[circle, draw=black, fill=black!20, minimum size=0.3cm] (D) at (4.5,1.2) {};  % 8w'
  \node[circle, draw=black, fill=black!20, minimum size=0.3cm] (E) at (4.5,-1.2) {}; % 10v′
  \node[circle, draw=black, fill=black!20, minimum size=0.3cm] (F) at (-4.5,-1.2){}; % 5u
  \node[circle, draw=black, fill=black!20, minimum size=0.3cm] (G) at (2.5,1.2){};   % 7w'
  \node[circle, draw=black, fill=black!20, minimum size=0.3cm] (H) at (-4.5,1.5){};  % 3v
  \node[circle, draw=black, fill=black!20, minimum size=0.3cm] (I) at (-2.5,-1.2){}; % 6u
  \node[circle, draw=black, fill=black!20, minimum size=0.3cm] (J) at (2.5,-1.2){};  % 9v'
    
  % Etiquetas
  \node[left=9pt] at (A.east) {$1_w$};
  \node[right=3pt] at (B.east) {$2_w$};
  \node[right=3pt] at (C.east) {$4_v$};
  \node[right=1pt] at (D.east) {$8_{w'}$};
  \node[right=1pt] at (E.east) {$10_{v'}$};
  \node[left=9pt] at (F.east) {$5_u$};
  \node[left=9pt] at (G.east) {$7_{w'}$};
  \node[left=9pt] at (H.east) {$3_v$};
  \node[right=3pt] at (I.east) {$6_u$};
  \node[left=12pt] at (J.east) {$9_{v'}$};
          
  % Flechas
  \draw[->, thick] (A) -- (B);                  
  \draw[->, thick, double] (D) -- (J);          

  % Flecha 2w → 8w' elevada al norte del nodo
  \draw[->, thick, dashed, >=open triangle 45] (B) -- (D.north);                

  % Z a 10v′ (ajustada homogénea)
  \draw[->, thick, dashed, shorten <=2pt, bend right, >=open triangle 45] 
    (I) to (E);   % 6u → 10v'
  \draw[->, thick, dashed, shorten <=2pt, >=open triangle 45] 
    (C) -- (E.135);   % 4v → 10v'

  % 5u → 9v'
  \draw[->, thick, dashed, bend right, >=open triangle 45] (F) to (J);    

  \draw[->, thick] (G) -- (D);                  
  % 1w → 7w'
  \draw[->, thick, dashed, >=open triangle 45] (A) to (G);                

  % 7w' → 10v' curvada para entrar por arriba casi vertical
  \draw[->, thick, double] (G) to[out=-20, in=90] (E);          

  % 3v → 9v'
  \draw[->, thick, dashed, >=open triangle 45] (H) -- (J);                

  \draw[->, thick] (H) -- (C);                  
  \draw[->, thick] (F) -- (I);                  
  \draw[->, thick] (J) -- (E);                  
  \draw[->, thick, double, dashed, >=latex] (B) -- (F);    
  \draw[->, thick, double] (A) -- (C);          

  % Marcos y elipses
  \draw[thick] (-6,-3) rectangle (-1,6);               
  \draw[thick] (-3.5,4.2) ellipse (2.2 and 1.0);    
  \draw[thick] (-3.5,1.5) ellipse (2.2 and 1.0);        
  \draw[thick] (-3.5,-1.2) ellipse (2.2 and 1.0);       

  \draw[thick] (1,-3) rectangle (6,3);                 
  \draw[thick] (3.5,1.2) ellipse (2.2 and 1.0);         
  \draw[thick] (3.5,-1.2) ellipse (2.2 and 1.0);        

  % Etiquetas Σ y Σ′
  \node[below=10pt] at (-3.5,-3) {$\Sigma$};
  \node[below=10pt] at (3.5,-3) {$\Sigma'$};

\end{tikzpicture}

\begin{center}
\begin{minipage}{0.9\textwidth}
\small
\textit{Graphical conventions unchanged; we list only the surjective p-morphism:\\
$Z=\{(1_{w},\,7_{w'}),\ (2_{w},\,8_{w'}),\ (3_{v},\,9_{v'}),\ (4_{v},\,10_{v'}),\ (5_{u},\,9_{v'}),\ (6_{u},\,10_{v'})\}$.}

\end{minipage}
\end{center}

\caption{Undefinability of monotonicity and constancy in strict preorders and strict linear orders.
In $\Sigma$ both properties hold, whereas in $\Sigma'$ monotonicity fails (the strict order is reversed) and constancy fails (the images are not all equal).}

\label{fig:inc}
\end{figure}

%%%
\begin{figure}[ht]
\begin{tikzpicture}[>=latex]

  % Nodos
  \node[circle, draw=black, fill=black!20, minimum size=0.3cm] (A) at (-4.5,4.2) {}; % 1w
  \node[circle, draw=black, fill=black!20, minimum size=0.3cm] (B) at (-2.5,4.2) {}; % 2w
  \node[circle, draw=black, fill=black!20, minimum size=0.3cm] (C) at (-2.5,1.5) {}; % 4v
  \node[circle, draw=black, fill=black!20, minimum size=0.3cm] (D) at (4.5,1.2) {};  % 8w'
  \node[circle, draw=black, fill=black!20, minimum size=0.3cm] (E) at (4.5,-1.2) {}; % 10v′
  \node[circle, draw=black, fill=black!20, minimum size=0.3cm] (F) at (-4.5,-1.2){}; % 5u
  \node[circle, draw=black, fill=black!20, minimum size=0.3cm] (G) at (2.5,1.2){};   % 7w'
  \node[circle, draw=black, fill=black!20, minimum size=0.3cm] (H) at (-4.5,1.5){};  % 3v
  \node[circle, draw=black, fill=black!20, minimum size=0.3cm] (I) at (-2.5,-1.2){}; % 6u
  \node[circle, draw=black, fill=black!20, minimum size=0.3cm] (J) at (2.5,-1.2){};  % 9v'
    
  % Etiquetas
  \node[left=9pt] at (A.east) {$1_w$};
  \node[right=3pt] at (B.east) {$2_w$};
  \node[right=3pt] at (C.east) {$4_v$};
  \node[right=1pt] at (D.east) {$8_{w'}$};
  \node[right=1pt] at (E.east) {$10_{v'}$};
  \node[left=9pt] at (F.east) {$5_u$};
  \node[left=9pt] at (G.east) {$7_{w'}$};
  \node[left=9pt] at (H.east) {$3_{v}$};
  \node[right=3pt] at (I.east) {$6_{u}$};
  \node[left=12pt] at (J.east) {$9_{v'}$};
          
  % Flechas
  \draw[->, thick] (A) -- (B);                  
  \draw[->, thick, double] (D) -- (E);                

  % Flecha 2w → 8w' (Z)
  \draw[->, thick, dashed, >=open triangle 45] (B) -- (D.north);                

  % Z a 10v′ (ajustada homogénea)
  \draw[->, thick, dashed, shorten <=2pt, bend right, >=open triangle 45] 
    (I) to (E);   % 6u → 10v'
  \draw[->, thick, dashed, shorten <=2pt, >=open triangle 45] 
    (C) -- (E.135);   % 4v → 10v'

  % 2w → 6u (NO Z, punta tipo latex)
  \draw[->, thick, double, dashed, shorten <=2pt, bend right, >=latex] 
    (B) to (I);

  \draw[->, thick, shorten <=2pt] 
    (F) -- (I);                     % 5u → 6u sólida

  % 5u → 9v' (Z)
  \draw[->, thick, dashed, bend right, >=open triangle 45] (F) to (J);    
  \draw[->, thick] (G) -- (D);                  
  % 1w → 7w' (Z)
  \draw[->, thick, dashed, >=open triangle 45] (A) -- (G.north);  
  \draw[->, thick, double] (G) -- (J);          
  % 3v → 9v' (Z)
  \draw[->, thick, dashed, >=open triangle 45] (H) -- (J);                
  \draw[->, thick] (H) -- (C);                  
  \draw[->, thick] (J) -- (E);                  
  \draw[->, thick, double] (A) -- (H);          

  % Marcos y elipses
  \draw[thick] (-6,-3) rectangle (-1,6);               
  \draw[thick] (-3.5,4.2) ellipse (2.2 and 1.0);    
  \draw[thick] (-3.5,1.5) ellipse (2.2 and 1.0);        
  \draw[thick] (-3.5,-1.2) ellipse (2.2 and 1.0);       

  \draw[thick] (1,-3) rectangle (6,3);                 
  \draw[thick] (3.5,1.2) ellipse (2.2 and 1.0);         
  \draw[thick] (3.5,-1.2) ellipse (2.2 and 1.0);        

  % Etiquetas Σ y Σ′
  \node[below=10pt] at (-3.5,-3) {$\Sigma$};
  \node[below=10pt] at (3.5,-3) {$\Sigma'$};

\end{tikzpicture}

\begin{center}
\begin{minipage}{0.9\textwidth}
\small
\textit{Graphical conventions unchanged; we list only the surjective p-morphism:\\
$Z=\{(1_{w},\,7_{w'}),\ (2_{w},\,8_{w'}),\ (3_{v},\,9_{v'}),\ (4_{v},\,10_{v'}),\ (5_{u},\,9_{v'}),\ (6_{u},\,10_{v'})\}$.}

\end{minipage}
\end{center}
\caption{Undefinability of antitonicity and strict antitonicity in strict preorders and strict linear orders.
In $\Sigma$ both properties hold, while in $\Sigma'$ antitonicity fails (a strict order in the domain is preserved instead of reversed), and strict antitonicity fails for the same reason.}

\label{fig:dec}
\end{figure}
%%%

\begin{figure}[ht]
\begin{tikzpicture}[>=latex]

  % Nodos izquierda (Σ)
  \node[circle, draw=black, fill=black!20, minimum size=0.3cm] (G) at (-4.5,1.2) {};  %1_w
  \node[circle, draw=black, fill=black!20, minimum size=0.3cm] (A) at (-2.5,1.2) {}; %2_w
  \node[circle, draw=black, fill=black!20, minimum size=0.3cm] (C) at (-4.5,-1.2) {}; %3_w
  \node[circle, draw=black, fill=black!20, minimum size=0.3cm] (B) at (-2.5,-1.2) {}; %4_w
  
  % Nodos derecha (Σ')
  \node[circle, draw=black, fill=black!20, minimum size=0.3cm] (D) at (2.5,1.2) {};  %5_{w'}
  \node[circle, draw=black, fill=black!20, minimum size=0.3cm] (F) at (4.5,1.2) {}; %6_{w'}
  \node[circle, draw=black, fill=black!20, minimum size=0.3cm] (E) at (3.5,-1.2) {}; %7_{w'}

  % Etiquetas
  \node[left=9pt] at (G.east) {$1_w$};
  \node[right=1pt] at (A.east) {$2_w$};
  \node[left=9pt] at (C.east) {$3_w$};
  \node[right=1pt] at (B.east) {$4_w$};
  \node[left=9pt] at (D.east) {$5_{w'}$};
  \node[right=1pt] at (F.east) {$6_{w'}$};
  \node[right=1pt] at (E.east) {$7_{w'}$};

  % Flechas funcionales (dobles sólidas)
  \draw[->, thick, double] (G) to (C); % 1_w → 3_w
  \draw[->, thick, double] (A) to (B); % 2_w → 4_w
  \draw[->, thick, double] (F) to (E); % 6_{w'} → 7_{w'}
  \draw[->, thick, double] (D) to (E); % 5_{w'} → 7_{w'}

  % Homomorfismo (flechas discontinuas)
  \draw[->, thick, dashed, bend right] (B) to (E); % 4_w → 7_{w'}
  \draw[->, thick, dashed, bend left] (A) to (D); % 2_w → 5_{w'}
  \draw[->, thick, dashed, bend left] (G) to (F); % 1_w → 6_{w'}
  \draw[->, thick, dashed, bend right] (C) to (E); % 3_w → 7_{w'}

  % Elipses: un poco más anchas para que las etiquetas no rocen
  \draw[thick] (-3.5,0) ellipse (2.3 and 2.2);  % elipse en Σ
  \draw[thick] (3.5,0) ellipse (2.3 and 2.2);   % elipse en Σ'

  % Rectángulos
  \draw[thick] (-6,-3) rectangle (-1,3);
  \draw[thick] (1,-3) rectangle (6,3);

  % Etiquetas de los marcos
  \node[below=10pt] at (-3.5,-3) {$\Sigma$};
  \node[below=10pt] at (3.5,-3) {$\Sigma'$};

\end{tikzpicture}
\begin{center}
\begin{minipage}{0.9\textwidth}
\small
\textit{Graphical conventions unchanged; we list only the surjective p-morphism:\\
$Z=\{(1_{w},\,5_{w'}),\ (2_{w},\,6_{w'}),\ (3_w,\,7_{w'}),\ (4_w,\,7_{w'})\}$.}
\end{minipage}
\end{center}
\caption{Undefinability of injectivity in strict preorders. In $\Sigma$, the function $f_{ww}$ is injective; in $\Sigma'$, the function $f_{w'w'}$ is not injective (both $5_{w'}$ and $6_{w'}$ map to $7_{w'}$).}
\label{fig:inj_SPRE2}
\end{figure}

%%%

\begin{figure}[ht]
\centering
\begin{tikzpicture}[>=latex]

  % Nodos
  \node[circle, draw=black, fill=black!20, minimum size=0.3cm] (A) at (-2.5,1.2) {}; % 2w 
  \node[circle, draw=black, fill=black!20, minimum size=0.3cm] (C) at (-4.5,-1.2) {}; % 3v
  \node[circle, draw=black, fill=black!20, minimum size=0.3cm] (B) at (-2.5,-1.2) {}; % 4v
  \node[circle, draw=black, fill=black!20, minimum size=0.3cm] (D) at (2.5,1.2) {};  % 5w' 
  \node[circle, draw=black, fill=black!20, minimum size=0.3cm] (E) at (3.5,-1.2) {}; % 7v' 
  \node[circle, draw=black, fill=black!20, minimum size=0.3cm] (F) at (4.5,1.2) {}; % 6w'
  \node[circle, draw=black, fill=black!20, minimum size=0.3cm] (G) at (-4.5,1.2) {};  % 1w 

  % Flechas reflexivas
  \draw[->, thick] (A) edge[loop above] ();
  \draw[->, thick] (B) edge[loop below] ();
  \draw[->, thick] (C) edge[loop below] ();
  \draw[->, thick] (D) edge[loop above] ();
  \draw[->, thick] (E) edge[loop below] ();
  \draw[->, thick] (F) edge[loop above] ();
  \draw[->, thick] (G) edge[loop above] ();
    
  % Etiquetas
  \node[right=1pt] at (A.east) {$2_w$};
  \node[left=9pt]  at (C.east) {$3_v$};
  \node[right=1pt] at (B.east) {$4_v$};
  \node[left=9pt]  at (D.east) {$5_{w'}$};
  \node[right=1pt] at (F.east) {$6_{w'}$};
  \node[right=1pt] at (E.east) {$7_{v'}$};
  \node[left=9pt]  at (G.east) {$1_w$};

  % Flechas funcionales y de orden
  \draw[->, thick, double] (F) -- (E); % 6w' → 7v'
  \draw[->, thick, double] (D) -- (E); % 5w' → 7v'
  \draw[->, thick, double] (G) -- (C); % 1w → 3v
  \draw[->, thick, double] (A) -- (B); % 2w → 4v
  \draw[->, thick]        (G) -- (A); % 1w → 2w
  \draw[->, thick]        (C) -- (B); % 3v → 4v
  \draw[->, thick]        (D) -- (F); % 5w' → 6w'

  % Flechas de Z con punta triangular hueca (ajuste milimétrico: shorten >=0.2pt)
  % 4v → 7v'
  \draw[->, thick, dashed, bend right, >=open triangle 45, shorten >=0.2pt] (B) to (E);
  % 2w → 6w'
  \draw[->, thick, dashed, bend right, >=open triangle 45, shorten >=0.2pt] (A) to (F);
  % 1w → 5w'
  \draw[->, thick, dashed, bend left,  >=open triangle 45, shorten >=0.2pt] (G) to (D);
  % 3v → 7v'
  \draw[->, thick, dashed, bend right, >=open triangle 45, shorten >=0.2pt] (C) to (E);

  % Marcos y elipses
  \draw[thick] (-6,-3) rectangle (-1,3);               
  \draw[thick] (-3.5,1.2) ellipse (2.2 and 1.0);        
  \draw[thick] (-3.5,-1.2) ellipse (2.2 and 1.0);       

  \draw[thick] (1,-3) rectangle (6,3);                 
  \draw[thick] (3.5,1.2) ellipse (2.2 and 1.0);         
  \draw[thick] (3.5,-1.2) ellipse (1.4 and 0.8);        

  % Etiquetas Σ y Σ′
  \node[below=10pt] at (-3.5,-3) {$\Sigma$};
  \node[below=10pt] at (3.5,-3) {$\Sigma'$};

\end{tikzpicture}

\begin{center}
\begin{minipage}{0.9\textwidth}
\small
\textit{Graphical conventions unchanged; we list only the surjective p-morphism:\\
$Z=\{(1_{w},\,5_{w'}),\ (2_{w},\,6_{w'}),\ (3_{v},\,7_{v'}),\ (4_{v},\,7_{v'})\}$.}
\end{minipage}
\end{center}

\caption{Undefinability of injectivity and strict monotonicity in preorders, posets, and linear orders. In \(\Sigma\) both properties hold, whereas in \(\Sigma'\) injectivity fails because distinct elements are mapped to the same image, and strict monotonicity fails for the same reason.}

\label{fig:inj_P-PO-LO}
\end{figure}

%%%
\begin{figure}[ht]
\begin{tikzpicture}[>=latex]

  % Nodos
  \node[circle, draw=black, fill=black!20, minimum size=0.3cm] (A) at (-2.5,1.2) {}; % 2w 
  \node[circle, draw=black, fill=black!20, minimum size=0.3cm] (C) at (-4.5,-1.2) {}; % 3v
  \node[circle, draw=black, fill=black!20, minimum size=0.3cm] (B) at (-2.5,-1.2) {}; % 4v
  \node[circle, draw=black, fill=black!20, minimum size=0.3cm] (D) at (2.5,1.2) {};  % 5w' 
  \node[circle, draw=black, fill=black!20, minimum size=0.3cm] (E) at (3.5,-1.2) {}; % 7v' 
  \node[circle, draw=black, fill=black!20, minimum size=0.3cm] (F) at (4.5,1.2) {}; % 6w'
  \node[circle, draw=black, fill=black!20, minimum size=0.3cm] (G) at (-4.5,1.2) {};  % 1w 

  % Flechas reflexivas
  \draw[->, thick] (A) edge[loop above] ();
  \draw[->, thick] (B) edge[loop below] ();
  \draw[->, thick] (C) edge[loop below] ();
  \draw[->, thick] (D) edge[loop above] ();
  \draw[->, thick] (E) edge[loop below] ();
  \draw[->, thick] (F) edge[loop above] ();
  \draw[->, thick] (G) edge[loop above] ();
    
  % Etiquetas
  \node[right=1pt] at (A.east) {$2_w$};
  \node[left=9pt]  at (C.east) {$3_v$};
  \node[right=1pt] at (B.east) {$4_v$};
  \node[left=9pt]  at (D.east) {$5_{w'}$};
  \node[right=1pt] at (F.east) {$6_{w'}$};
  \node[right=1pt] at (E.east) {$7_{v'}$};
  \node[left=9pt]  at (G.east) {$1_w$};

  % Flechas funcionales y de orden
  \draw[->, thick, double] (F) -- (E); % 6w' → 7v'
  \draw[->, thick, double] (D) -- (E); % 5w' → 7v'
  \draw[->, thick, double] (G) -- (B); % 1w → 4v
  \draw[->, thick, double] (A) -- (C); % 2w → 3v
  \draw[->, thick]        (G) -- (A); % 1w → 2w
  \draw[->, thick]        (C) -- (B); % 3v → 4v
  \draw[->, thick]        (D) -- (F); % 5w' → 6w'

  % Flechas de Z con punta open triangle 45
  % 4v → 7v'
  \draw[->, thick, dashed, bend right, >=open triangle 45] (B) to (E);
  % 2w → 6w' (ajustada con bend right)
  \draw[->, thick, dashed, bend right, >=open triangle 45] (A) to (F);
  % 1w → 5w'
  \draw[->, thick, dashed, bend left, >=open triangle 45] (G) to (D);
  % 3v → 7v'
  \draw[->, thick, dashed, bend right, >=open triangle 45] (C) to (E);

  % Marcos y elipses
  \draw[thick] (-6,-3) rectangle (-1,3);               
  \draw[thick] (-3.5,1.2) ellipse (2.2 and 1.0);        
  \draw[thick] (-3.5,-1.2) ellipse (2.2 and 1.0);       

  \draw[thick] (1,-3) rectangle (6,3);                 
  \draw[thick] (3.5,1.2) ellipse (2.2 and 1.0);         
  \draw[thick] (3.5,-1.2) ellipse (1.4 and 0.8);        

  % Etiquetas Σ y Σ′
  \node[below=10pt] at (-3.5,-3) {$\Sigma$};
  \node[below=10pt] at (3.5,-3) {$\Sigma'$};

\end{tikzpicture}

\begin{center}
\begin{minipage}{0.9\textwidth}
\small
\textit{Graphical conventions unchanged; we list only the surjective p-morphism:\\
$Z=\{(1_{w},\,5_{w'}),\ (2_{w},\,6_{w'}),\ (3_{v},\,7_{v'}),\ (4_{v},\,7_{v'})\}$.}

\end{minipage}
\end{center}
\caption{Undefinability of strict antitonicity in preorders, posets, and linear orders.
In $\Sigma$ strict antitonicity holds, whereas in $\Sigma'$ it fails (the strict order is collapsed to equality rather than reversed).}

\label{fig:strict-antitonicity_P-PO-LO}
\end{figure}
%%%

\end{document}